\documentclass[11pt]{scrartcl}
\usepackage[a4paper, total={16cm, 24cm}]{geometry}

\usepackage{multicol}
\usepackage{tocloft}

\usepackage[small]{caption}
\usepackage{subcaption}

\usepackage{algorithm}
\usepackage[table,dvipsnames]{xcolor}
\usepackage{todonotes}
\usepackage{pdflscape}
\usepackage{adjustbox}

\usepackage{natbib}
\usepackage[pagebackref]{hyperref}
\hypersetup{
	pdfencoding=auto,
	psdextra,
	colorlinks=true,
	citecolor=green!50!black,
	linkcolor=red!60!black,
	urlcolor=purple!70!black
}

\renewcommand{\epsilon}{\varepsilon}

\usepackage{authblk}

\usepackage{url}
\usepackage{microtype}
\usepackage[utf8]{inputenc}
\usepackage{graphicx}
\usepackage{amsmath}
\usepackage{amsthm}
\usepackage{amssymb}
\usepackage{mathtools}
\usepackage{nicefrac}
\usepackage{booktabs}
\usepackage[inline]{enumitem}
\usepackage{tabulary}
\usepackage{tabularx}
\usepackage{multirow}

\usepackage{wrapfig}

\usepackage[nameinlink]{cleveref}
\usepackage{stmaryrd}
\usepackage{thm-restate}
\usepackage{tikz}
\usetikzlibrary{arrows.meta}
\usepackage{tcolorbox}

\newtheoremstyle{sfthm}
	{\topsep}
	{\topsep}
	{\itshape}
	{}
	{\sffamily\bfseries}
	{}
	{.5em}
	{}
\theoremstyle{sfthm}
\newtheorem{theorem}{Theorem}[section]
\newtheorem{lemma}[theorem]{Lemma}
\newtheorem{proposition}[theorem]{Proposition}

\newtheoremstyle{sfdef}
{\topsep}
{\topsep}
{}
{}
{\sffamily\bfseries}
{}
{.5em}
{}
\theoremstyle{sfdef}

\newtheorem{example}[theorem]{Example}

\DeclareMathOperator\supp{supp}
\DeclareMathOperator\closure{cl}

\newcommand{\leLR}{\le_{\mathrm{lr}}}
\newcommand{\leST}{\le_{\mathrm{st}}}
\newcommand{\Exp}{\mathbb{E}}

\AddToHook{cmd/appendix/before}{%
  \crefalias{section}{appendix}%
  \crefalias{subsection}{appendix}%
}

\providecommand{\theHtheorem}{}
\renewcommand{\theHtheorem}{\thesection.\arabic{theorem}}

\title{Fair Division of Lotteries}
\author{Ayumi Igarashi}
\author{Yeeseok Oh}
\affil{The University of Tokyo}
\date{\vspace{-3em}}

\begin{document}

\maketitle

\begin{abstract}
\begin{center}
    \textbf{\textsf{Abstract}} \smallskip
\end{center}
We study fair division of uncertain items, where the utility of each item is a random variable whose realization is observed only after the allocation is determined. This uncertainty creates a distinction between ex-ante fairness, evaluated before utilities are realized, and ex-post fairness, evaluated after realization. 
We examine to what extent these two notions are compatible. 
We first identify conditions under which ex-ante envy-free allocations maximize ex-post envy-freeness probability. These include two-agent instances where utilities toward the same item follow the same distribution from a convolution-power family. For an arbitrary number of agents, we establish analogous optimality results when the common generator is non-atomic or log-concave.
Beyond convolution-power families, we derive approximation guarantees for ex-ante envy-free allocations relative to the optimal ex-post envy-freeness probability under log-concave or symmetric utilities. We also study the computational complexity of evaluating ex-post envy-freeness probabilities, establishing hardness results as well as pseudo-polynomial-time algorithms.
\end{abstract}

 \section{Introduction}
 
The classical problem of fair division studies the problem of allocating resources to agents in a fair manner. This problem arises naturally in real-world settings: dividing estates during inheritance, drafting amateur players in professional sports leagues, or assigning papers to program committee (PC) members. With its broad applications, fair division has been studied extensively in mathematics, economics, and computer science \citep{BramsTaylor1996FairDivision,Moulin2003FairDivision,AmanatidisEtAl2022FairDivisionSurvey}.

In classical fair division, agents express their utilities for items using fixed, real numbers. However, in real-world settings, agents are often unsure about how good an item is.
In sports drafts, teams select amateurs based on their expectations for the player to perform in the pro league. Still, a highly expected amateur player could play poorly in the pro league or get injured, failing to meet the high standards he had. Conversely, it is not an odd story that a player signed for a low salary turns out to be a star.
In reviewer systems, PC members often bid for the papers they want to review, based on the limited information of titles and abstracts. 
Again, papers that seemed interesting can turn out to be of poor quality, or a good piece of work. 

These scenarios raise the following question:
``Is an allocation that appears fair before the uncertainties are realized actually fair after the realization?''

\paragraph{Our Contributions}
We study fair division of uncertain items, where the utilities of items have exogenous uncertainty that cannot be resolved prior to allocation of the items. We evaluate how fair 
an allocation is, by considering the time \emph{before} and \emph{after} the uncertainty is resolved. As a concrete example, consider the following scenario.

\begin{example}\label{ex:running-ex}
	Consider two fishers, Tina and Sam, who specialize in tuna and salmon, respectively. Each fisher cares only about the amount of their preferred fish.
    They are dividing three fishing grounds. 
    Two of the grounds are well understood and each yields $5$ tons of both tuna and salmon.
    The third ground is unexplored. For each species independently, it contains either $1000$ tons of it with probability $0.01$ or no fish at all otherwise.
    
	One possible allocation is giving Tina the two known grounds and Sam the unexplored ground. Since the expected amount of each fisher's preferred fish is $10$ tons, the fishers may perceive the allocation as fair. 
    Sam, however, with a probability of $0.99$, gets zero salmon, envying Tina.
    Tina also would envy Sam with a probability of $0.01$ if Sam's ground turns out to be full of tuna.
    Hence, this allocation ends up being \emph{envy-free} only with a probability of $(1-0.99) \cdot (1-0.01) = 0.0099$.
	
	Another option is allocating the two fishers each a well-known ground and the unexplored to one of the two, say Sam. At the time of the allocation, Tina would envy Sam. However, with a probability of $0.99$, the ground will turn out to have no tuna, and Tina would not envy Sam. Since Sam would never envy Tina, this allocation turns out to be envy-free with probability $0.99$.
\end{example}

The example illustrates that assessments of fairness can differ \emph{before} and \emph{after} the lotteries are realized. 
Our central question is to what extent an allocation that performs well \emph{ex ante} (before lotteries are realized) can also perform well \emph{ex post} (after realizations). We will focus on the commonly used fairness notion of \emph{envy-freeness} (EF)~\citep{foley1967resource}. 
Since \Cref{ex:running-ex} already shows that an ex-ante EF allocation may not be ex-post EF,
we think of allocations that perform well ex post as those that maximize the probability of being EF ex post.
Our results are summarized as follows. 

\begin{itemize}

\item In \Cref{sec:EFcompat}, we show that an ex-ante EF allocation maximizes the ex-post EF probability when there are two agents and each item has the same utility distribution across agents drawn from a common convolution-power family. 
Further, we show that the result extends to an arbitrary number of agents when the family's generator is either non-atomic or log-concave.

\item In \Cref{sec:approx}, we consider general distributions beyond con\-vo\-lu\-tion-power families. While an ex-ante EF allocation does not necessarily maximize ex-post EF probability in this regime, we show that under log-concavity or symmetry, an ex-ante EF allocation provides a constant probability of being EF ex post, for a fixed number of agents.

\item In \Cref{sec:complexity}, we turn to the problem of computing the ex-post EF probability of a given allocation. We show that this is weakly $\#$P-hard. Still, we show the probabilities can be computed in pseudo-polynomial time using dynamic programming. 

\item In \Cref{sec:identical}, we consider a different, but important case where valuations to the same good are identical. We specify how the case differs from our central model and show that most of the results transfer to this setting.

\end{itemize}

\paragraph{Related Work}
\citet{aloisio2024approximately} study a model closely related to ours, with additive random valuations that are realized only after the items have been allocated. They consider the problem of minimizing expected maximum envy and develop both deterministic and randomized allocation algorithms, with the latter providing ex-ante fairness guarantees. Our focus is on fixed allocations and the relationship between EF under expected utilities and the probability of allocations being EF after realization. In particular, we identify distributional conditions under which ex-ante EF allocations, when they exist, maximize ex-post EF probability.

Other work on allocating indivisible resources considers different forms of uncertainty. \citet{LumetEtAl2012} study a model in which each indivisible item can realize to a good or worthless, with their conditions revealed only after allocation. They examine ex-ante and ex-post welfare, as well as the probability of satisfying proportionality. \citet{AzizEtAl2024UncertainHouse} consider the problem of maximizing EF probability under uncertain ordinal preferences in house allocation. Their model assigns one item to each agent, whereas ours allows bundles evaluated through additive random cardinal utilities.

Random utilities also feature in \emph{asymptotic fair division}~\citep{dickerson2014computational,manurangsi2020when,manurangsi2021closing,manurangsi2017asymptotic,bai2022envyfree,manurangsi2025weighted}, which identifies conditions under which randomly generated instances admit fair allocations with high probability. In these models, the allocation may depend on the realized utility profile. We fix an allocation before utilities are realized and evaluate its probability of being EF afterward.

A different role for randomness arises in \emph{best-of-both-worlds fair division}~\citep{aziz2024best,hoefer2024best,guo2023first}. This literature seeks lotteries over allocations that combine ex-ante fairness with ex-post fairness guarantees for every allocation in their support. Here, the mechanism randomizes the allocation while utilities are fixed; in our model, the allocation is fixed while utilities are subject to exogenous uncertainty.

Related questions have also been studied in rent division and divisible-resource allocation. \citet{PetersEtAl2022} study rent division robust to perturbations of reported room utilities, including maximizing the probability of EF under a distribution over perturbations. \citet{BuermannEtAl2020} consider a homogeneous divisible resource whose available quantity is uncertain and allocations contingent on the realized quantity. They show that ex-post EF in every realization implies ex-ante EF, while the converse may fail, and optimize ex-ante welfare subject to ex-ante EF.

\section{Preliminaries}\label{sec:prelim}

For a positive integer $t$, let $[t]=\{1, 2, \ldots, t\}$.
We start by introducing preliminaries on probability-theoretic notions.
Throughout, all random variables are real-valued and have finite mean. Since a random variable determines a probability distribution, when no confusion arises, we use random variables and distributions interchangeably. For two random variables $X$ and $Y$, we write $X \stackrel{d}{=} Y$ if they follow the same distribution, i.e., $\Pr(X \le t) = \Pr(Y \le t)$ for all $t \in \mathbb{R}$. If an event happens with probability $1$, we say that it happens \emph{almost surely}, abbreviated by \emph{a.s.}

A random variable $X$ is \emph{non-atomic} if its distribution has no atoms, i.e.,
$\Pr(X=t)=0$ for all $t \in \mathbb{R}$.
A random variable $X$ is \emph{nonnegative} if $X\ge 0$ a.s.
A random variable $X$ is \emph{symmetric} if its distribution is symmetric around its mean, i.e., $X-\Exp[X] \stackrel{d}{=} \Exp[X]-X$.

For a nonnegative function $f$ on $\mathbb{R}$ or $\mathbb{Z}$,
\[
\supp(f)=
\begin{cases}
    \closure({\{x\in\mathbb{R}:f(x)>0\}})
        & \text{if } f \text{ is defined on }\mathbb{R},\\[1mm]
    \{k\in\mathbb{Z}:f(k)>0\}
        & \text{if } f \text{ is defined on }\mathbb{Z},
\end{cases}
\]
where $\closure$ is the closure operator.
If a random variable $X$ has density or probability mass function (pmf) $f_X$, we write $\supp(X)=\supp(f_X)$.

A useful tool for random variables is the notion of characteristic functions. For a random variable $X$, its \emph{characteristic function} is defined as $\phi_X(t) = \Exp[e^{itX}]$. We omit the subscript $X$ if it is clear from context. It is a standard fact that $\phi_X(t)$ is well-defined for any real $t$, and the following atom inversion formula holds.
\begin{lemma}\label{lem:atom-inversion}
	Let $X$ be a random variable. Then
	$$\Pr(X=x) = \lim_{T \to \infty}\frac{1}{2T} \int_{-T}^T e^{-itx}\phi_X(t) dt.$$
\end{lemma}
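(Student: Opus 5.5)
The plan is to substitute the definition of $\phi_X$, swap the order of integration with Fubini, and then pass to the limit with dominated convergence. Write $\phi_X(t) = \Exp[e^{itX}]$. For fixed $T>0$, the integrand $e^{-itx}e^{itX}$ has modulus $1$ and is therefore integrable on $[-T,T]\times\Omega$. Fubini's theorem then gives
\[
\frac{1}{2T}\int_{-T}^{T} e^{-itx}\phi_X(t)\,dt = \Exp\!\left[\frac{1}{2T}\int_{-T}^{T} e^{it(X-x)}\,dt\right].
\]

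Next I will compute the inner integral explicitly. Put $Y = X-x$. If $Y=0$, the inner average equals $1$. If $Y\neq 0$, the integral is $\frac{e^{iTY}-e^{-iTY}}{2iTY} = \frac{\sin(TY)}{TY}$. Writing $g_T(y) = \frac{\sin(Ty)}{Ty}$ for $y\ne 0$ and $g_T(0)=1$, the quantity above equals $\Exp[g_T(X-x)]$.

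Finally, I will let $T\to\infty$. We have $|g_T(y)|\le 1$ for all $y$ and $T$. Pointwise, $g_T(y)\to \mathbf{1}[y=0]$, because for $y\neq0$ we have $|g_T(y)|\le 1/(T|y|)\to 0$. By dominated convergence, $\Exp[g_T(X-x)]\to \Exp[\mathbf{1}[X=x]] = \Pr(X=x)$. Since this holds along every sequence $T_n\to\infty$, the limit exists as a real limit over $T$ and equals $\Pr(X=x)$.

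The only step that needs care is justifying the interchange of expectation and integral. This is routine here because the integrand is bounded and the domain $[-T,T]$ is finite. The dominated-convergence step is equally straightforward because $g_T$ is bounded by $1$ uniformly in $T$. The statement is in fact the standard atom inversion formula, and one could simply cite a standard probability text; I would include the short argument above for completeness.
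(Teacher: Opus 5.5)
Your proof is correct and complete. Fubini applies because the integrand has modulus $1$ on a finite product measure. The inner average is $\sin(T(X-x))/(T(X-x))$, with value $1$ on $\{X=x\}$. Bounded convergence then gives the limit along every sequence $T_n\to\infty$, so the limit over $T$ exists.

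The paper gives no proof of this lemma; it simply cites it as a standard fact about characteristic functions. Your write-up is therefore the textbook argument the paper implicitly relies on. It uses neither the finite-mean nor the nonnegativity assumptions of the paper. It also shows the limit genuinely exists rather than only a $\limsup$, which is what the paper needs when it applies the lemma to $S_k - S'_{K-k}$ and to $S_{K/2}-S'_{K/2}$ in the proof of \Cref{thm:n2homogeneous}.
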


The \emph{Poisson distribution with rate $\lambda \in (0, \infty)$} is the distribution with pmf $p(x)=e^{-\lambda}\frac{\lambda^x}{x!}$ for $x \in \mathbb{Z}_{\ge 0}$,
and is denoted by $\mathrm{Pois}(\lambda)$.
For a random variable $X \sim \mathrm{Pois}(\lambda)$, $\Exp[X]=\lambda$ holds.
For independent random variables $X \sim \mathrm{Pois}(\lambda_X)$ and $Y \sim \mathrm{Pois}(\lambda_Y)$, $X + Y \sim \mathrm{Pois}(\lambda_X + \lambda_Y)$ holds.

Poisson distributions are a primary model for the number of occurrences of an event in a fixed time or space. Examples include daily events like customer service requests arriving in an hour or the number of fish caught in a fishing ground for a day, and even rare events like the number of large meteors striking the Earth in a year.

We now define an important class of probability distributions that includes the Poisson distribution: \emph{log-concave} distributions. They form a large class in \emph{unimodal} distributions, both continuous and discrete, whose density or mass is single-peaked. Other examples include the uniform, Gaussian, exponential, Bernoulli, and binomial distributions.

A nonnegative real function $f: \mathbb{R} \to \mathbb{R}_{\ge 0}$ is \emph{log-concave} if its support is an interval, and it satisfies $f(\lambda x + (1-\lambda)y) \geq f(x)^\lambda f(y)^{1-\lambda}$ for all $x,y \in \mathbb{R}$ and $0<\lambda<1$.
A nonnegative discrete function $f:\mathbb{Z} \to \mathbb{R}_{\ge 0}$ is \emph{log-concave} if its support is an integer interval, and $f(k)^2 \ge f(k-1)f(k+1)$ holds for all $k \in \mathbb{Z}$.

A continuous random variable is \emph{log-concave} if it admits a log-concave density. An integer-valued random variable is \emph{log-concave} if its pmf is log-concave.
In both cases, it is known that the cumulative distribution function (cdf) of a log-concave distribution is also a log-concave function.
By saying a random variable is log-concave, we indicate it is either a continuous log-concave random variable or an integer-valued log-concave random variable.

By definition, if $X$ is log-concave, then so is $-X$. Furthermore, it is closed under convolution in both continuous and discrete settings.
For continuous functions $f,g:\mathbb{R}\to\mathbb{R}_{\ge 0}$, the convolution $f * g$ is defined as $(f*g)(x)=\int_{\mathbb{R}} f(u)g(x-u) du$. For discrete functions $f,g:\mathbb{Z}\to\mathbb{R}_{\ge 0}$, $(f*g)(x)=\sum_{u\in\mathbb{Z}} f(u)g(x-u)$. If $X$ and $Y$ are independent continuous [integer-valued] random variables with densities [pmfs] $f$ and $g$, respectively, then $X + Y$ has density [pmf] $f*g$.

\begin{lemma}\label{lem:convolution}
	Let $f$ and $g$ be log-concave functions on $\mathbb{R}$ (resp. $\mathbb{Z}$).
	If the convolution $f * g$ is finite-valued, it is also log-concave. Equivalently, the sum of independent continuous (discrete) log-concave random variables is also continuous (discrete, resp.) log-concave.
\end{lemma}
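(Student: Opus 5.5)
The plan is to prove the inequality $(f*g)(\lambda x+(1-\lambda)y)\ge (f*g)(x)^\lambda (f*g)(y)^{1-\lambda}$ directly, treating the continuous and discrete cases separately, since they need different tools. In both cases I first dispose of the support condition. In the continuous case, $\supp(f*g)$ is the closure of the Minkowski sum $\supp(f)+\supp(g)$, and a sum of two intervals is an interval. In the discrete case, $\{k: (f*g)(k)>0\}=\supp(f)+\supp(g)$ exactly, because all terms are nonnegative, and the sum of two integer intervals is again an integer interval. The equivalence with random variables is then immediate, since independent sums have density (pmf) $f*g$.

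\emph{Continuous case.} Here I would invoke the Prékopa--Leindler inequality. Fix $x,y$ and $\lambda\in(0,1)$, and set $z=\lambda x+(1-\lambda)y$. Define $h(u)=f(u)g(z-u)$, $F(u)=f(u)g(x-u)$ and $G(u)=f(u)g(y-u)$. For $u=\lambda u_1+(1-\lambda)u_2$ we have $z-u=\lambda(x-u_1)+(1-\lambda)(y-u_2)$. Log-concavity of $f$ and $g$ then gives $h(\lambda u_1+(1-\lambda)u_2)\ge F(u_1)^\lambda G(u_2)^{1-\lambda}$. Prékopa--Leindler now yields $\int h\ge(\int F)^\lambda(\int G)^{1-\lambda}$, which is exactly the claim. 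Finiteness of $f*g$ ensures that all the integrals are finite, so no $\infty\cdot 0$ issues arise. If a self-contained argument is preferred, I would prove the one-dimensional Prékopa--Leindler inequality via the standard transport (or level-set Brunn--Minkowski) argument on $\mathbb{R}$.

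\emph{Discrete case.} This is the main obstacle, because Prékopa--Leindler has no direct analogue on $\mathbb{Z}$. Instead I would use the classical characterization that a nonnegative sequence with integer-interval support is log-concave iff its Toeplitz matrix $(f(i-j))_{i,j}$ is totally positive of order $2$ (TP$_2$). The convolution $f*g$ corresponds to the product of the Toeplitz matrices of $f$ and $g$. By the Cauchy--Binet formula, each $2\times2$ minor of the product is a sum of products of $2\times2$ minors of the factors, and is therefore nonnegative. Hence $f*g$ is TP$_2$, which gives $(f*g)(k)^2\ge (f*g)(k-1)(f*g)(k+1)$. The delicate points to check are two. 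First, TP$_2$ must be equivalent to the local condition $f(k)^2\ge f(k-1)f(k+1)$; this is where the no-internal-zeros (interval support) hypothesis is essential. Second, Cauchy--Binet must be applied to infinite sums, which is justified by absolute convergence under the finiteness assumption. As an alternative, I would fall back on a direct expansion of $(f*g)(k)^2-(f*g)(k-1)(f*g)(k+1)$ as a double sum, pairing the terms $(i,j)$ and $(j,i)$ and using $f(i)f(j)\ge f(i-1)f(j+1)$ for $i\le j$, which follows from interval support together with log-concavity.
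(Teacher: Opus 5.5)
Your proposal is correct. The paper does not prove this lemma itself; it cites Schoenberg's theory of P\'olya frequency functions for $\mathbb{R}$ and Keilson--Gerber for $\mathbb{Z}$.

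Your discrete argument is precisely the $\mathrm{PF}_2$ route. The Toeplitz kernel $(i,j)\mapsto f(i-j)$ is $\mathrm{TP}_2$ iff $f$ is log-concave with interval support, and Cauchy--Binet makes the product of $\mathrm{TP}_2$ kernels $\mathrm{TP}_2$. This amounts to combining \Cref{lem:pf2-toeplitz,lem:tp2-composition}, which the paper states later in the appendix.

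Your continuous argument via Pr\'ekopa--Leindler is genuinely different from Schoenberg's. The trade-off runs both ways. The total-positivity argument works verbatim on $\mathbb{R}$, using the integral form of the composition formula. So the obstacle you flag disappears if you use $\mathrm{TP}_2$ in both settings. On $\mathbb{R}$, however, it only yields $h(b)h(c)\ge h(a)h(d)$ for $a\le b\le c\le d$ with $a+d=b+c$. You would then still need measurability of $f*g$ and a midpoint-concavity-to-concavity step. Pr\'ekopa--Leindler instead gives the inequality for every $\lambda$ directly, and it extends to $\mathbb{R}^d$.

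Two minor points.

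First, the identity $\supp(f*g)=\closure(\supp f+\supp g)$ fails in degenerate cases. For example, $f=\mathbf{1}_{\{0\}}$ is log-concave under the paper's definition, yet $f*g\equiv 0$. You do not need this identity: the Pr\'ekopa--Leindler inequality itself forces $\{f*g>0\}$ to be convex.

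Second, in your fallback, a plain swap $(i,j)\leftrightarrow(j,i)$ does not give terms of fixed sign. After reindexing,
$(f*g)(k)^2-(f*g)(k-1)(f*g)(k+1)=\sum_{i,j}\bigl(f(i)f(j)-f(i-1)f(j+1)\bigr)g(k-i)g(k-j)$.
The right involution is $(i,j)\leftrightarrow(j+1,i-1)$; fixed points have $i=j+1$ and contribute zero. It negates the $f$-factor and collapses the sum to
$\sum_{i\le j}\bigl(f(i)f(j)-f(i-1)f(j+1)\bigr)\bigl(g(k-i)g(k-j)-g(k-j-1)g(k-i+1)\bigr)\ge 0$.
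This step uses log-concavity of $g$ as well as $f$, and it is just Cauchy--Binet written out.
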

\begin{proof}
	The continuous case was proved in the context of P\'olya frequency functions by
	\citet{schoenberg1951polya}. The discrete case follows from
	\citet{keilson1971some}. For a modern survey of log-concave functions and
	distributions, see \citet{saumard2014logconcavity}.
\end{proof}

Given a distribution $\mu$, we define the \emph{$k$-fold convolution power} $\mu^{*k}$ recursively as convolving $\mu$ with itself $k$ times. 
The \emph{convolution-power family generated by $\mu$} is the collection $\mathcal{C}_\mu$ of distributions expressed by finite convolution powers of $\mu$, i.e., $\mathcal{C}_\mu = \{\mu^{*k} : k \ge 1\}$. 
In other words, it contains the distributions of finite i.i.d. sums of a random variable drawn from $\mu$. 
By \Cref{lem:convolution}, if $\mu$ is log-concave, so are distributions in $\mathcal{C}_\mu$.



\section{Model}\label{sec:model}
\paragraph{Fair Division of Lotteries}
Let $N=[n]$ be the set of \emph{agents} and $M=[m]$ be the set of \emph{items}. The subsets of $M$ are referred to as \emph{bundles}. Each agent $i$ has a \emph{utility} $u_i(j)$ for each item $j \in M$ which is a nonnegative random variable drawn from a distribution $\mathcal{D}_{ij}$. 
Since each item induces random utilities, we may refer to such items as \emph{lottery items}.
We assume all utilities have finite mean $\Exp[u_i(j)] < \infty$.
We assume that all agent-item utilities are mutually independent, that is, the random variables $\{u_i(j)\}_{i \in N, j \in M}$ are mutually independent.

An \emph{allocation} $A=(A_1, \ldots, A_n)$ is a partition of the items, i.e., $A_i \cap A_{i'} = \emptyset$ for all $i \ne i'$, and $\cup_{i=1}^n A_i = M$. We consider \emph{additive} valuations. An agent $i$ receives utility of $u_i(B)=\sum_{j \in B}u_i(j)$ upon receiving the bundle $B \subseteq M$. Note that since each $u_i(j)$ is a random variable, $u_i(B)$ is also a random variable.

Various criteria and axioms that evaluate how good an allocation is have been proposed in fair division literature. Fix an allocation $A$. For a deterministic utility profile, the \emph{utilitarian social welfare} of $A$ is the sum of utilities of each agent $\sum_{i\in N}u_i (A_i)$. The \emph{Nash social welfare} of $A$ is the product of the utilities $\prod_{i \in N} u_i(A_i)$. 
We say $A$ is \emph{envy-free} (EF) if $u_i(A_i) \ge u_i(A_j)$ holds for all $i,j \in N$.
That is, every agent values her own bundle at least as good as any other's bundle.

\paragraph{Ex-Ante and Ex-Post Problems}
Let $\Pi$ be a property of an allocation.
For each agent $i$ and item $j$, let $U_i(j)=\Exp[u_i(j)]$. An allocation is \emph{ex-ante $\Pi$} if it satisfies $\Pi$ under the deterministic valuation profile $U$; the ex-ante value of an objective $f$ is defined analogously by evaluating $f$ under $U$. For an allocation $A$, its \emph{ex-post $\Pi$ probability} is $\Pr(A\text{ satisfies }\Pi)$, and its ex-post $f$ value is $\Exp[f(A)]$ where probabilities and expectations are taken under randomness (of utilities) in $A$.

In \Cref{ex:running-ex}, the first allocation was ex-ante EF and had ex-post EF probability $0.0099$. The latter allocation was not ex-ante EF, but had ex-post EF probability $0.99$.


Our central question is whether an ex-ante solution also guarantee optimal or approximately optimal an ex-post fairness. 
As mentioned before, there are different perspectives of fairness captured by the ex-ante and ex-post notions, respectively. It is therefore desirable to satisfy both.
A further motivation for the above question becomes apparent in \Cref{sec:complexity}. We show that computing an equivalent deterministic fair division instance for the ex-ante problem can be done efficiently, whereas considering the ex-post problem directly is hard. If an ex-ante solution guarantees to be an ex-post solution, we do not have to consider the ex-post problem directly, avoiding direct optimization of the ex-post objective.

Let us formally state the problem.
Given a property $\Pi$, we consider whether an ex-ante $\Pi$ allocation (if one exists) maximizes the ex-post probability $\Pr(A\text{ satisfies }\Pi)$. Similarly, given a function $f$, does an ex-ante $f$ maximizing allocation maximize the ex-post $f$ value?
We first see that the welfare objectives behave exactly as in the deterministic expected instance.

\begin{proposition}\label{prop:sw-ftns}
	An ex-ante utilitarian social welfare (or Nash social welfare) maximizing allocation maximizes ex-post utilitarian social welfare (or Nash social welfare, resp.).
\end{proposition}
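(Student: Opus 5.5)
The plan is to show that, for each fixed allocation $A$, the ex-post welfare value equals the ex-ante welfare value. Once that identity holds, both sides of the comparison are the same function of $A$, so a maximizer of one is a maximizer of the other. The identity comes from linearity of expectation for utilitarian welfare and from independence of utilities for Nash welfare.

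For utilitarian social welfare, fix an allocation $A$. Since each $u_i(j)$ has finite mean, linearity of expectation gives
\[
\Exp\Bigl[\sum_{i\in N} u_i(A_i)\Bigr] = \sum_{i\in N}\sum_{j\in A_i}\Exp[u_i(j)] = \sum_{i\in N} U_i(A_i).
\]
The right-hand side is the ex-ante utilitarian welfare of $A$. So the ex-post value $\Exp[f(A)]$ and the ex-ante value $f_U(A)$ coincide for every $A$, and an allocation maximizing one maximizes the other.

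For Nash social welfare, I would argue the same way, with independence replacing linearity. The bundles $A_1,\dots,A_n$ are disjoint. The random variable $u_i(A_i)$ depends only on the family $\{u_i(j)\}_{j\in A_i}$, and these families are pairwise disjoint subsets of the mutually independent collection $\{u_i(j)\}_{i\in N, j\in M}$. The standard grouping property of independence then makes $u_1(A_1),\dots,u_n(A_n)$ mutually independent. Each $u_i(A_i)$ is nonnegative with finite mean, so the expectation of the product factorizes:
\[
\Exp\Bigl[\prod_{i\in N} u_i(A_i)\Bigr]=\prod_{i\in N}\Exp[u_i(A_i)]=\prod_{i\in N}U_i(A_i).
\]
For nonnegative independent variables this holds by Tonelli, and it also shows the product has finite mean. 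The right-hand side is the ex-ante Nash welfare of $A$, so the same conclusion follows.

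The only step needing care is this independence step. Several agents' utilities for the same item, say $u_i(j)$ and $u_{i'}(j)$, are both random, but only the owner's utility enters the product. Within a single product term, each item $j$ therefore contributes exactly one variable, namely $u_i(j)$ for the agent $i$ with $j\in A_i$. This is what justifies the factorization. If utilities were correlated within an item (for example, identical across agents), the argument would still go through, since distinct agents' bundles are disjoint sets of items.
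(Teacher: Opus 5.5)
Your proposal is correct and follows the same route as the paper: for every allocation, ex-ante and ex-post welfare coincide, by linearity of expectation for utilitarian welfare and by mutual independence of the own-bundle utilities $u_1(A_1),\dots,u_n(A_n)$ for Nash welfare. Your extra justifications (grouping of disjoint independent families, Tonelli for the factorization and finiteness) supply details the paper leaves implicit.
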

\begin{proof}
    For every allocation, ex-ante and ex-post utilitarian welfare coincide by linearity of expectation. The same holds for Nash welfare because the agents' own-bundle utilities $u_1(A_1), \ldots, u_n(A_n)$ are mutually independent.
\end{proof}


\section{Achieving Ex-Post Fairness with Ex-Ante Solutions}\label{sec:EFcompat}
We study conditions under which an ex-ante EF allocation (if one exists) maximizes the ex-post EF probability. \Cref{ex:running-ex} shows that even for two agents, this is not the case; the first allocation is an ex-ante EF allocation, but it has strictly smaller ex-post EF probability than that of the latter.

However, the example modeled the unexplored ground with an extreme distribution of zero or a thousand tons of fish. A natural alternative is to model catches over a fixed period using Poisson distributions, as in the following example.\footnote{Poisson distributions model counts of events within a fixed amount of time, space, volume, or distance.}

\begin{example}\label{ex:fishing-poisson}
    Consider two fishers Tina and Sam dividing $m$ fishing grounds. 
    Each fishing ground $j$ has daily yield of tuna and salmon, both following $\mathrm{Pois}(\lambda_j)$, independently.
    Does an ex-ante EF allocation of the grounds maximize ex-post EF probability?
\end{example}

It turns out that, in the Poisson setting, ex-ante EF allocations maximize the ex-post EF probability for any number of agents; see \Cref{cor:poisson}.
We identify broader sufficient conditions beyond Poisson distributions. Throughout the section, each item has the same utility distribution across agents. For two agents, it suffices that these utility distributions belong to a common convolution-power family. For more than two agents, this conclusion can fail, but it holds when the common generator is non-atomic or log-concave.




\begin{theorem}\label{thm:n2homogeneous}
    Let $n=2$ and $\mu$ be a distribution. Assume $\mathcal{D}_{1j} = \mathcal{D}_{2j} \in \mathcal{C}_\mu$ for all $j$. 
    Then, an ex-ante EF allocation maximizes ex-post EF probability.
\end{theorem}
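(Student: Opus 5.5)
Write $k_j\ge 1$ for the index with $\mathcal{D}_{1j}=\mathcal{D}_{2j}=\mu^{*k_j}$. The plan is to reduce everything to two integers per allocation, use AM--GM to reduce the comparison to atom probabilities, and bound those with \Cref{lem:atom-inversion}.

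\textbf{Reduction to two integers.} For an allocation $A=(A_1,A_2)$ put $a=\sum_{j\in A_1}k_j$ and $b=\sum_{j\in A_2}k_j$, so $a+b=K:=\sum_j k_j$ is the same for all allocations. By independence of all agent-item utilities, $u_1(A_1)\stackrel{d}{=}S_a$ and $u_1(A_2)\stackrel{d}{=}S'_b$, where $S_a$ is a sum of $a$ i.i.d.\ copies of $\mu$ and $S'_b$ is an independent sum of $b$ copies. The empty sum is $0$. Likewise $u_2(A_2)\stackrel{d}{=}S_b$ and $u_2(A_1)\stackrel{d}{=}S'_a$. The four quantities are mutually independent. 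Set $p(a,b)=\Pr(S_a\ge S'_b)$. Then
\[
\Pr(A\text{ is EF})=p(a,b)\,p(b,a).
\]
Write $\nu=\Exp[\mu]\ge 0$.
\begin{itemize}
\item If $\nu=0$, nonnegativity forces $\mu=\delta_0$. Every allocation is then EF almost surely, and the claim is trivial.
\item If $\nu>0$, ex-ante EF means $a\nu\ge b\nu$ and $b\nu\ge a\nu$, i.e.\ $a=b=K/2$.
\end{itemize}
So it suffices to show $p(K/2,K/2)^2\ge p(a',b')\,p(b',a')$ for every $a'+b'=K$.

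\textbf{Symmetrization and AM--GM.} Since $\Pr(S_{a'}\ge S'_{b'})+\Pr(S'_{b'}\ge S_{a'})=1+\Pr(S_{a'}=S'_{b'})$, we get $p(a',b')+p(b',a')=1+\Pr(S_{a'}=S'_{b'})$. By AM--GM,
\[
p(a',b')\,p(b',a')\le\Bigl(\tfrac{1+\Pr(S_{a'}=S'_{b'})}{2}\Bigr)^2 .
\]
For $a=b=K/2$ the two factors coincide, so $p(a,a)=\tfrac{1+\Pr(S_a=S'_a)}{2}$ exactly. The theorem therefore reduces to the collision inequality
\[
\Pr(S_{a'}=S'_{b'})\le \Pr(S_{K/2}=S'_{K/2})\quad\text{whenever } a'+b'=K.
\]

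\textbf{Collision inequality (the main step).} Apply \Cref{lem:atom-inversion} at $x=0$ to $D=S_{a'}-S'_{b'}$, whose characteristic function is $\phi(t)^{a'}\overline{\phi(t)}^{\,b'}$, with $\phi=\phi_\mu$. Since the probability is real, we may take real parts:
\[
\Pr(D=0)=\lim_{T\to\infty}\frac{1}{2T}\int_{-T}^T \mathrm{Re}\bigl(\phi^{a'}\overline{\phi}^{\,b'}\bigr)\,dt .
\]
The integrand is at most $|\phi|^{a'+b'}=|\phi|^{K}=\phi^{K/2}\overline{\phi}^{\,K/2}$, which is exactly the characteristic function of $S_{K/2}-S'_{K/2}$. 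Integrating this bound and applying \Cref{lem:atom-inversion} again gives $\Pr(D=0)\le\Pr(S_{K/2}=S'_{K/2})$.

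The only care needed is that the bound is pointwise, so it passes through the averaged integral and the limit. This completes the chain: the ex-ante EF allocation attains the maximum ex-post EF probability.
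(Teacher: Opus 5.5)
Your proposal is correct and follows essentially the same route as the paper: you reduce to the split $k(A_1)$, use AM--GM to bound the ex-post EF probability by $\bigl((1+\Pr(S_{a'}=S'_{b'}))/2\bigr)^2$ with equality at the balanced split, and bound the collision probability via \Cref{lem:atom-inversion} and the pointwise bound $|\phi|^{K}$. The differences are only cosmetic: you apply AM--GM directly to the two non-envy probabilities rather than to $p_k^-p_k^+$, and you take real parts instead of absolute values with a $\limsup$.
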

\begin{proof}
	Let $X$ be a random variable drawn from $\mu$. Since $X \ge 0$ a.s., if $\Exp[X] = 0$, then $X=0$ a.s.
	Then, $u_i(j) = 0$ a.s. for any $i,j$, and any allocation is both ex-ante EF and ex-post EF a.s., which implies the result. Assume $\Exp[X] > 0$. For each item $j$, let $k_j$ be the positive integer such that $u_i(j) \stackrel{d}{=} S_{k_j}$, where $S_{k_j}$ is the sum of $k_j$ independent copies of $X$. We may use $S'_{k_j}$ as a random variable with the same distribution that is independent of $S_{k_j}.$

	For a subset $M' \subseteq M$, let $k(M') = \sum_{j \in M'} k_j$, and let $K = k(M)$.
	We will use this notation throughout the section.
	Note that since $\Exp[u_1(M')] = \Exp[u_2(M')]$ for any $M' \subseteq M$, an allocation $A$ is ex-ante EF if and only if $\Exp[u_1(A_1)] = \Exp[u_2(A_2)]$. Or equivalently, $k(A_1) = k(A_2) = K/2$.

    Now observe that $u_1 (A_1)$ and $u_2 (A_1)$ are independent copies of $S_{k(A_1)}$. Similarly, $u_1 (A_2)$ and $u_2 (A_2)$ are independent copies of $S_{k(A_2)}=S_{K-k(A_1)}$.
    Hence, the probability of $A$ being ex-post EF is
    \begin{align*}
        &\Pr(u_1(A_1) \ge u_1 (A_2) , u_2 (A_1) \le u_2 (A_2) ) \\ 
        &=  \Pr(u_1(A_1) \ge u_1 (A_2) ) \Pr( u_2 (A_1) \le u_2 (A_2) ) \\
        &= \Pr(S_{k(A_1)} \ge S_{K-k(A_1)}') \cdot \Pr(S_{k(A_1)} \le S_{K-k(A_1)}'),
    \end{align*}
    where the first identity follows from the independence of valuations.

    Let us write $p_k^- = \Pr(S_{k} < S_{K-k}')$, $p_k^0 = \Pr(S_{k} = S_{K-k}')$, and $p_k^+ = \Pr(S_{k} > S_{K-k}')$. Then we have $p_k^- + p_k^0 + p_k^+ = 1$ and $p_k^- = p_{K-k}^+$ from symmetry. 
    We claim that $p_k^0$ is maximized when $k=K/2$.
    
    Recall that $\phi_X (t) = \Exp[e^{itX}]$ is the characteristic function of $X$. 
    The characteristic function of $S_k - S_{K-k}'$ (denoted by $\phi_k$) is given by
    \[
    \begin{aligned}
    \phi_k (t)
    &= \Exp[e^{it{(S_k - S_{K-k}')}}] \\
    &= \Exp[e^{it{(X_1 + \cdots + X_k - X_{k+1} - \cdots - X_K)}}] \\
    &= \phi_X(t)^k \phi_X(-t)^{K-k}
    = \phi_X(t)^k \overline{\phi_X(t)}^{K-k}.
    \end{aligned}
    \]
    
    Then, since probabilities are nonnegative, we have
    \begin{align*}
        p_k^0 &= |\Pr(S_{k} - S_{K-k}' = 0)| \\
        &= \left| \lim_{T \to \infty} \frac{1}{2 T} \int_{- T}^T \phi_X(t)^k  \overline{\phi_X(t)}^{K-k} dt \right|\\
        &\le  \limsup_{T \to \infty} \frac{1}{2 T} \int_{- T}^T |\phi_X(t)|^K dt \\
        &=\lim_{T \to \infty} \frac{1}{2 T} \int_{- T}^T \phi_X(t)^{K/2}  \overline{\phi_X(t)}^{K/2} dt \\
        &= \Pr(S_{K/2} - S_{K/2}' = 0) \\
        &= p_{K/2}^0,
    \end{align*}
    where the second and fourth identities follow from \Cref{lem:atom-inversion}. Hence the claim follows.

    Returning to the ex-post EF probability, we have
    \begin{align*}
        \Pr(S_{k} \ge S_{K-k}') \cdot \Pr(S_{k} \le S_{K-k}') 
        & = (1-p_k^-)(1-p_k^+) \\
        &= 1-(p_k^- +p_k^+) + p_k^- p_k^+ \\
        &= p_k^0 + p_k^- p_k^+ \\
        &\le p_k^0 + \left(\frac{p_k^- + p_k^+}{2}\right)^2 \\
        &= \left(\frac{1 + p_k^0}{2}\right)^2,
    \end{align*}
    where the inequality follows from the AM-GM inequality.
    The final term is maximized when $k=K/2$, yielding an upper bound of ex-post EF probability.
    Plugging in $k=K/2$, since $p_{K/2}^-=p_{K/2}^+$ by symmetry, equality holds in the AM--GM step. Hence, we have $\Pr(S_{K/2} \ge S_{K/2}') \cdot \Pr(S_{K/2} \le S_{K/2}') = \left(({1 + p_{K/2}^0})/{2}\right)^2$, which matches the upper bound.
    Hence, the ex-post EF probability is maximized when $k=K/2$.
\end{proof}


In the special case where all utilities follow the same distribution $\mathcal{D}$, the above result states that splitting the items equally yields an allocation that is both ex-ante EF and maximizing ex-post EF probability. This may seem natural, but it turns out that the same observation does not generalize for arbitrary $n$.

\begin{restatable}{proposition}{generalnfails}\label{prop:general-n-fails}
    There exists an instance with $n=3$ and for some distribution $\mathcal{D}$, $\mathcal{D}_{ij}=\mathcal{D}$ for all $i,j$, in which an ex-ante EF allocation does not maximize ex-post EF probability.
\end{restatable}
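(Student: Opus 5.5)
The plan is to exhibit an explicit small instance and compute both probabilities exactly. When every $\mathcal{D}_{ij}$ equals the same $\mathcal{D}$ with positive mean, an allocation is ex-ante EF if and only if all agents receive the same number of items. So I would take $n=3$ and $m$ a small multiple of $3$. The task is to find a $\mathcal{D}$ for which some unbalanced allocation beats the balanced one in ex-post EF probability.

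A simplification makes the computation tractable. Agent $i$'s envy event depends only on $u_i(\cdot)$, and the utilities of different agents are independent. Hence the ex-post EF probability factors as $\prod_{i}\Pr\bigl(u_i(A_i)\ge \max_{i'\neq i}u_i(A_{i'})\bigr)$. Each factor involves three independent sums of i.i.d. copies of $X\sim\mathcal{D}$ with sizes $|A_1|,|A_2|,|A_3|$.

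The natural candidates, guided by \Cref{ex:running-ex}, are two-point distributions $X\in\{0,L\}$ with $\Pr(X=L)=p$. For these all bundle values are counts, so each factor reduces to a finite sum of binomial probabilities.

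A first-order check is a useful heuristic before any exact computation. For small $p$, agent $i$ fails essentially only when exactly one item outside $A_i$ is nonzero. The total failure probability is therefore $\approx (n-1)mp$ for \emph{every} allocation, so the comparison is decided at order $p^2$ or beyond.

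For example, with $m=3$ I would compare the balanced allocation $(1,1,1)$, with probability $(p+(1-p)^3)^3$, against $(3,0,0)$ and $(2,1,0)$. A quick expansion suggests that $(1,1,1)$ wins for small $p$ in this case. Therefore I would next try $m=6$, comparing $(2,2,2)$ with $(3,3,0)$, $(4,1,1)$ and $(6,0,0)$. I would also try moderate values of $p$ rather than $p\to 0$.

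If two-point laws do not separate the allocations, I would move to three-point laws, say $X\in\{0,1,L\}$ with $L$ large. The intent is that a heavy rare atom makes envy toward larger bundles likely in the balanced allocation, while concentrating items on one agent confines envy to the agents who hold few items.

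The main obstacle is locating a concrete $(\mathcal{D},m)$ at all. \Cref{thm:n2homogeneous} shows that for $n=2$ the balanced split is always optimal, and the leading-order term above ties across all allocations. I would handle this by a small exhaustive numeric search over $p$ (and $L$) and over the partitions of $m\le 6$ into three parts. Once a winning pair is found, I would record its exact rational probabilities. The proof then consists of stating the instance, recalling that balanced allocations are exactly the ex-ante EF ones, and displaying the two computed probabilities, the unbalanced one being strictly larger.
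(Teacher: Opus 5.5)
Your proposal does not prove the statement. For an existence claim, the explicit instance together with a certified comparison is the entire proof, and you only describe a search that might produce one.

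Worse, most of the search you spell out provably cannot succeed. A law on $\{0,L\}$ becomes $\mathrm{Ber}(p)$ after dividing all utilities by $L$, which changes no envy comparison. $\mathrm{Ber}(p)$ is discrete log-concave, so \Cref{thm:logconcave-compat} (with every $k_j=1$) already shows that the balanced split is optimal for every $m$ and every $p$. Your $m=3$ and $m=6$ two-point checks are therefore doomed.

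The small-$p$ regime also fails for every law with a dominant atom at $0$, including $\{0,1,L\}$. Carrying your first-order count one order further gives
\[
\log\Pr(A\text{ is EF})=-(n-1)ms-\tfrac{(n-1)m}{2}s^2+(1-\pi)s^2\Bigl(m^2-\sum_{i}|A_i|^2\Bigr)+O(s^3),
\]
where $s=\Pr(X>0)$ and $\pi=\Pr(V<V')\le\tfrac12$ for i.i.d.\ copies $V,V'$ of $X$ conditioned on $X>0$. So balanced allocations win strictly for all small $s$.

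Your guiding intuition also points the wrong way. Concentrating items leaves agents with empty or tiny bundles whose no-envy probability is minuscule; e.g.\ $(6,0,0)$ is EF with probability $\Pr(X=0)^{12}$.

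The missing idea is the structural reason a counterexample can exist at all. Every agent sees the three bundle values with the same joint law. Hence $q_i=\Pr\bigl(u_1(A_i)=\max_j u_1(A_j)\bigr)$ and $q_1+q_2+q_3=\Exp[Z_A]$, where $Z_A$ is the number of bundles attaining the maximum. By AM--GM, $\Pr(A\text{ is EF})\le(\Exp[Z_A]/3)^3$, with equality for balanced $A$. So an unbalanced allocation can win only if unequal bundle sizes create strictly more ties at the top than equal sizes do. This first-order gain must also exceed the AM--GM loss from unequal $q_i$, which is second order in the imbalance.

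The paper engineers exactly this with a tiny perturbation of a large balanced allocation. It uses $3000$ items with law $\frac{998}{1000}\delta_1+\frac{1}{1000}\mathrm{Unif}[1,2]+\frac{1}{1000}\mathrm{Unif}[2000,8000]$.
\begin{itemize}
\item Under $(1000,1000,1000)$, ties occur only when every item sits at the base atom, so $\Exp[Z]=1+2(\frac{998}{1000})^{3000}<1+\frac{1}{200}$.
\item Under $(998,1001,1001)$, the two large bundles still tie when the small bundle contains up to three low items, giving $\Exp[Z]>1+\frac{1}{160}$.
\item Combined with a separate bound showing the small-bundle agent is envy-free with probability $>\frac{5}{16}$, this certifies that $(998,1001,1001)$ is EF with probability $>\frac{5}{16}\bigl(\frac{111}{320}\bigr)^2>\bigl(\frac{67}{200}\bigr)^3$, an upper bound for the balanced allocation.
\end{itemize}
All of this is done by estimates rather than exact arithmetic. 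Your plan never targets this tie mechanism. It also gives no reason to expect a counterexample with $m\le 6$ at all, let alone within the families you name.
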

\noindent
The proof in \Cref{sec:genera-n-fails} uses a distribution with both atomic and continuous components.
In fact, we show that for continuous or log-concave generators, two broad classes of distributions, an ex-ante EF allocation still maximizes ex-post EF probability for general $n$.

\begin{theorem}\label{thm:conti-compat}
	Let $\mu$ be a non-atomic distribution. Assume $\mathcal{D}_{1j} = \cdots = \mathcal{D}_{nj} \in \mathcal{C}_\mu$ for all $j$. 
	Then, an ex-ante EF allocation maximizes ex-post EF probability.
\end{theorem}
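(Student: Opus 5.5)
Since every item has the same utility distribution for all agents, the plan is to reduce ex-post EF to a product of ``who is the maximum'' probabilities and then apply AM--GM. As in the proof of \Cref{thm:n2homogeneous}, let $X\sim\mu$, write $u_i(j)\stackrel{d}{=}S_{k_j}$, and set $k(M')=\sum_{j\in M'}k_j$ and $K=k(M)$. Because $\mu$ is non-atomic and $X\ge 0$, we have $X>0$ a.s. and $\Exp[X]>0$. Hence $\Exp[u_i(A_l)]=k(A_l)\Exp[X]$ for every agent $i$, and an allocation $A$ is ex-ante EF if and only if $k(A_1)=\cdots=k(A_n)=K/n$.

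\textbf{Step 1 (factorization).} Fix an allocation $A$ and write $k_l=k(A_l)$. The event ``$A$ is ex-post EF'' is the intersection over $i$ of the events $E_i=\{u_i(A_i)\ge u_i(A_l)\ \forall l\}$. Each $E_i$ depends only on agent $i$'s utilities, so the $E_i$ are independent and $\Pr(A\text{ is EF})=\prod_i \Pr(E_i)$. For agent $i$, the values $u_i(A_1),\dots,u_i(A_n)$ are independent copies of $S_{k_1},\dots,S_{k_n}$, because they are sums over disjoint bundles. Thus $\Pr(E_i)=q_i$, where $q_i$ is the probability that the $i$-th coordinate of an independent vector $(T_1,\dots,T_n)$ with $T_l\stackrel{d}{=}S_{k_l}$ attains the maximum. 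Note that $q_i$ depends only on the vector $(k_1,\dots,k_n)$ and not on which agent is considered.

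\textbf{Step 2 (ties have probability zero).} For $k\ge1$, $S_k$ is non-atomic, since convolving with a non-atomic distribution yields a non-atomic one. Independence then gives $\Pr(T_l=T_{l'})=0$ whenever at least one of $k_l,k_{l'}$ is positive. Empty bundles need separate treatment. Since $m\ge1$, some bundle is nonempty. If some $k_i=0$, then $T_i=0<T_l$ a.s. for every $l$ with $k_l\ge1$, so $q_i=0$ and the EF probability is $0$, which is trivially no larger than the bound below. Otherwise all $k_l\ge1$, ties occur with probability zero, and the argmax is a.s. unique. Therefore $\sum_i q_i=1$.

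\textbf{Step 3 (AM--GM).} By AM--GM, $\prod_i q_i\le(\frac1n\sum_i q_i)^n=n^{-n}$. For an ex-ante EF allocation all $k_l$ equal $K/n$, so the $T_l$ are i.i.d. By exchangeability each $q_i=1/n$, and the EF probability equals $n^{-n}$, attaining the bound.

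The main point needing care is Step 2, the a.s. uniqueness of the maximum. This is exactly where non-atomicity is used, and where \Cref{prop:general-n-fails} shows the argument breaks down in general: with atoms, $\sum_i q_i$ can exceed $1$. I would verify it through the convolution formula $\Pr(S_k+Y=t)=\Exp[\Pr(S_k=t-Y)]=0$, which holds for any independent $Y$.
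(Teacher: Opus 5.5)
Your proof is correct and follows essentially the same route as the paper: factor the ex-post EF probability over agents, dispose of empty bundles, use non-atomicity to show ties have probability zero so the no-envy probabilities sum to one, and conclude with AM--GM, with equality at a balanced ex-ante EF allocation because the bundle values are i.i.d. You also correctly observe that $\Exp[X]>0$ follows automatically from non-atomicity and $X\ge 0$, whereas the paper treats it as a w.l.o.g.\ assumption.
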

\begin{proof}
	Again let $X$ be a random variable drawn from $\mu$, and w.l.o.g. assume $\Exp[X] > 0$.
	Recall that $k(M') = \sum_{j \in M'} k_j$ and $K = k(M)$ where $u_i(j) \stackrel{d}{=} S_{k_j}$.
	By the same argument in \Cref{thm:n2homogeneous}, since $\Exp[u_1(S)] = \cdots = \Exp[u_n(S)]$ for any $S \subseteq M$, ex-ante EF allocations are those with $k(A_1) = \cdots = k(A_n) = K/n$.
	Fix an allocation $A$. The ex-post EF probability of $A$ becomes
		\[
		\Pr(u_i(A_i) \ge u_i(A_j), \, \forall i,j \in [n]) = \prod_{i \in N} \Pr(u_i(A_i) = \max_{j \in [n]} u_i(A_j)).
		\]
    We show this probability is bounded above by that of an ex-ante EF allocation.
	If there exists an empty bundle $A_i = \emptyset$, then the ex-post EF probability becomes $0$ since a nonempty bundle $A_j$ has $u_i(A_j) > 0$ a.s. Hence, w.l.o.g. assume each bundle is nonempty.
	
	Let $p_i = \Pr(u_i(A_i) = \max_{j \in [n]} u_i(A_j))$ for all $i \in N$. Then since $u_i(A_j)$ and $u_{i'}(A_j)$ are independent copies for $i \ne i'$, we can fix the agent prefix to $1$: $p_i = \Pr(u_1(A_i) = \max_{j \in [n]} u_1(A_j))$. 
	Then for $i \ne i'$,
		\begin{align*}
			&\Pr\bigl(u_1(A_i) = \max_{j \in N} u_1(A_j),
			u_{1}(A_{i'}) = \max_{j \in N} u_1(A_j)\bigr) \\
            &\qquad \le \Pr(u_1(A_i)=u_1(A_{i'}))=0,
		\end{align*}
	where the identity follows from non-atomicity.
	Hence, we have
	$$\sum_{i \in N}p_i = \sum_{i \in N}\Pr(u_1(A_i) = \max_{j \in [n]} u_1(A_j))=1.$$
	
	Therefore, by the AM-GM inequality, we have \[\prod_{i\in N}p_i \le (\frac{1}{n}\sum_{i \in N}p_i)^n = \frac{1}{n^n},\] with equality holding when $p_1 = \cdots =p_n$. This holds for an ex-ante EF allocation as $u_1(A_1), \ldots, u_1(A_n)$ are identically distributed.
\end{proof}

While \Cref{prop:general-n-fails} uses a distribution which is partly discrete, the broad class of \emph{log-concave} distributions yields positive results for many widely used discrete distributions including the binomial, geometric, and Poisson distributions. 
These distributions have single-peaked pmfs (densities). This is reminiscent of the fact that many problems in social choice behave well in single-peaked domains.

\begin{theorem}\label{thm:logconcave-compat}
	Let $\mu$ be a log-concave distribution. Assume $\mathcal{D}_{1j} = \cdots = \mathcal{D}_{nj} \in \mathcal{C}_\mu$ for all $j$. 
	Then, an ex-ante EF allocation maximizes ex-post EF probability.
\end{theorem}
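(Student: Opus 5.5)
The continuous case is already settled: a continuous log-concave $\mu$ has a density, so it is non-atomic and \Cref{thm:conti-compat} applies. The remaining case is an integer-valued log-concave $\mu$, where ties have positive probability.

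\emph{Reduction.} Set up exactly as in \Cref{thm:conti-compat}. For an allocation $A$, write $k_i=k(A_i)$, so $\sum_i k_i=K$, and let $Y_1,\dots,Y_n$ be independent with $Y_i\stackrel{d}{=}S_{k_i}$. By \Cref{lem:convolution}, each $Y_i$ is a discrete log-concave variable. Using independence across agents as before, the ex-post EF probability equals
\[
F(k_1,\dots,k_n)=\prod_{i\in N} p_i,\qquad p_i=\Pr\bigl(Y_i\ge \max_{j\ne i} Y_j\bigr).
\]
This is a symmetric function of the vector $k$ alone. The ex-ante EF allocations are exactly those with $k=(K/n,\dots,K/n)$, and this vector is majorized by every nonnegative integer vector with sum $K$. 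It therefore suffices to show that $F$ does not decrease under a Robin Hood transfer, i.e.\ replacing $(k_i,k_j)$ with $k_i>k_j+1$ by $(k_i-1,k_j+1)$. Repeated transfers reach the balanced vector. We allow intermediate vectors that do not come from actual allocations, since $F$ is defined on all of them. Bundles with $k_i=0$ give $Y_i\equiv 0$, and the argument should cover them as well.

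\emph{Transfer step.} Fix the pair $i,j$ and condition on $Z=\max_{l\ne i,j}Y_l$; $Z$ is independent of $(Y_i,Y_j)$ and unaffected by the transfer. Let $W=Y_i+Y_j$, whose law $S_{k_i+k_j}$ is also unchanged. The key structural fact I would use is this: conditional on $W=w$, $Y_i$ is a sum of $k_i$ out of $k_i+k_j$ i.i.d.\ log-concave summands. Its conditional law is log-concave, and for the balanced split it is symmetric about $w/2$. A transfer toward balance should make $(Y_i,Y_j)$ given $W$ more concentrated near $(w/2,w/2)$. From this I would derive two consequences.

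First, $\max(Y_i,Y_j)$ becomes stochastically smaller. Hence for every other agent $l$, $p_l=\Pr(Y_l\ge \max(Y_i,Y_j,\text{others}))$ weakly increases.

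Second, for the pair itself, the quantity $\Pr(Y_i\ge \max(Y_j,Z))\Pr(Y_j\ge\max(Y_i,Z))$ should weakly increase. For this I would imitate the proof of \Cref{thm:n2homogeneous}. Write each factor as the probability of a tie event plus the probability of a strict event. Bound the product of the strict parts by AM--GM, with equality in the balanced case by symmetry. The tie part $\Pr(Y_i=Y_j\ge Z)$ should be maximized by the balanced split. That would follow from a characteristic-function or Cauchy--Schwarz bound in the spirit of \Cref{thm:n2homogeneous}, but applied to the joint event $\{Y_i=Y_j=y\}$ for each level $y\ge Z$. This is where log-concavity, i.e.\ unimodality of the conditional law, is needed to control $\Pr(Y_i=Y_j=y)$ level by level rather than only in total.

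\emph{Main obstacle.} I expect the pair step to be the hard part. Unlike in \Cref{thm:n2homogeneous}, the product $p_ip_j$ does not factor once $Z$ is present, so AM--GM must be applied conditionally on $Z$ and then integrated. That integration need not preserve the inequality unless the conditional bounds are monotone in $Z$. My first attempt would be to work directly with the pmfs: write $p_i=\sum_x \Pr(Y_i=x)\prod_{j\ne i}\Pr(Y_j\le x)$, and use the log-concavity of the cdfs together with the ratio monotonicity of $\Pr(S_{k-1}=x)/\Pr(S_{k}=x)$ (monotone likelihood ratio in $k$ for log-concave convolution powers). The goal is a termwise comparison between the two vectors. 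If that fails, I would fall back to comparing the unbalanced vector directly with the balanced one, using the fact that the balanced $p_i$ all equal $\Pr(Y_1=\max)$.
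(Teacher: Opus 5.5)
Your reduction is essentially the paper's: the continuous case via \Cref{thm:conti-compat}, and for integer-valued $\mu$, monotonicity of the product of non-envy probabilities under unit transfers on $k$, allowing vectors not realized by allocations. Your first consequence, for the agents outside the pair, is also correct: it is exactly \Cref{lem:weight-balancing}~\labelcref{item:cdf-balancing}, $F_a(x)F_b(x)\le F_{a+1}(x)F_{b-1}(x)$. The paper proves it by showing that $a\mapsto F_a(x)$ is log-concave, using the likelihood-ratio order of the laws of $S_a$ conditioned on $S_a\le x$. Your justification, through concentration of $Y_i$ given $Y_i+Y_j=w$, is only asserted and would need its own proof.

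The genuine gap is the pair step: that $p_ip_j$ does not decrease under a transfer between $i$ and $j$. You leave it open, and the AM--GM route you sketch cannot close it. Let $\tau=\Pr(Y_i=Y_j\ge Z)$ and let $\alpha,\beta$ be the strict parts, so $p_i=\alpha+\tau$ and $p_j=\beta+\tau$. AM--GM on the strict parts then gives only $p_ip_j\le\bigl(\tfrac{p_i+p_j}{2}\bigr)^2$, where $p_i+p_j=\Pr(\max(Y_i,Y_j)\ge Z)+\tau$.
\begin{itemize}
\item In \Cref{thm:n2homogeneous,thm:conti-compat} this suffices, because the corresponding sum ($1+p^0$, resp.\ the constant $1$) is maximal at the balanced split. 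Here it is not. By your own first consequence, $\Pr(\max(Y_i,Y_j)\ge Z)$ decreases under balancing; only $\tau$ increases.
\item A concrete instance: take an exponential generator and $k=(0,2,1)$. Then $p_1+p_2=\Pr(X_1+X_2\ge X_3)=3/4$, so the bound $(3/8)^2$ exceeds the balanced value $1/9$. Geometric generators with large mean give nearly the same numbers. So the bound certifies nothing, and it is tight at neither end of a transfer between unbalanced vectors.
\item Conditioning on a common $Z$ does not help either. The factors $p_i$ and $p_j$ involve independent copies of $Z$, so $p_ip_j$ is a product of two separate expectations over $Z$, not the expectation of a product.
\item Your global fallback would need the expected number of maximizing bundles, $\sum_i p_i$, to be largest at the balanced vector. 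You do not establish this.
\end{itemize}

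The missing idea is \Cref{lem:weight-balancing}~\labelcref{item:indep-balancing}. Write $p_s,F_s$ for the pmf and cdf of $S_s$, as in the paper. Set $G=\prod_{l\ne i,j}F_{k_l}$ and $A_{s,t}=\sum_x p_s(x)F_t(x)G(x)$, so the pair's non-envy probabilities are $A_{k_i,k_j}$ and $A_{k_j,k_i}$. Their product is a product of two sums over $x$, so no termwise comparison of pmfs settles it. The paper instead proves two facts:
\begin{itemize}
\item $A$ is $\mathrm{TP}_2$ in $(s,t)$, by composing the $\mathrm{TP}_2$ kernels $p_s(x)$ and $\mathbf{1}(y\le x)$;
\item $A$ is log-concave in each coordinate, via log-concavity of $s\mapsto\Exp[W(S_s)]$ for bounded log-concave $W$.
\end{itemize}
Together these give $A_{r,c}^2\ge A_{r-1,c+1}A_{r+1,c-1}$, i.e.\ log-concavity along anti-diagonals. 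This yields $A_{a,b}A_{b,a}\le A_{a+1,b-1}A_{b-1,a+1}$, which is the pair step.
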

\noindent%
To prove the theorem, we show a key technical balancing lemma for log-concave distributions, which may be of independent interest. We leave the proof of all three cases in \Cref{sec:pf-technical}.
\begin{restatable}{lemma}{weightbalancing}\label{lem:weight-balancing}
	Let $X$ be a nonnegative discrete log-concave random variable. For an integer $k \ge 0$, let $S_k$ be the sum of $k$ i.i.d. copies of $X$ ($S_0 \equiv 0$) with pmf $p_k$ and cdf $F_k$.
	Let $a,b, x$ be nonnegative integers with $a < b$. Then,
	
	\begin{enumerate}[label={(\arabic*)}]
		\item\label{item:pmf-balancing} $p_a(x)p_b(x)\le p_{a+1}(x)p_{b-1}(x)$,
		\item\label{item:cdf-balancing} $F_a(x)F_b(x)\le F_{a+1}(x)F_{b-1}(x)$,
		\item\label{item:indep-balancing} Let $G: \mathbb{Z}_{\ge 0} \to \mathbb{R}_{\ge 0}$ be a bounded log-concave function. For $k, \ell \ge 0$, let $A_{k,\ell} = \sum_{x \ge 0} p_k(x) F_\ell (x) G(x)$. Then, $A_{a,b} A_{b,a} \le A_{a+1, b-1} A_{b-1,a+1}$.
	\end{enumerate}
\end{restatable}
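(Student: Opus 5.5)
Parts \ref{item:pmf-balancing} and \ref{item:cdf-balancing} are both \emph{log-concavity in the number of summands}. I would derive them from a likelihood-ratio chain $S_0 \leLR S_1 \leLR S_2 \leLR \cdots$, meaning $k \mapsto p_{k+1}(x)/p_k(x)$ is nondecreasing in $x$ on the relevant support. Two standard facts give this chain. First, $0 \leLR X$ trivially, since $X \ge 0$. Second, the likelihood-ratio order is preserved under convolution with a log-concave pmf (Shanthikumar--Yao). Indeed, $p_k$ is log-concave by \Cref{lem:convolution}, and convolving the pair $(S_{k-1},S_k)$ with an independent copy of $X$ propagates $S_{k-1}\leLR S_k$ to $S_k \leLR S_{k+1}$. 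Transitivity then gives $S_a \leLR S_{b-1}$ whenever $a \le b-1$. For cdfs, likelihood-ratio order implies reversed-hazard order, so $F_{b-1}(x)/F_a(x)$ is also nondecreasing in $x$.

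With these orderings, \ref{item:pmf-balancing} follows from a one-step decomposition. Let $r$ be the pmf of $X$. Write $p_b(x)=\sum_{y\ge 0} r(y)\,p_{b-1}(x-y)$ and $p_{a+1}(x)=\sum_{y\ge0} r(y)\,p_a(x-y)$. Then
\[
p_a(x)p_b(x)-p_{a+1}(x)p_{b-1}(x)=\sum_{y\ge 0} r(y)\bigl[p_a(x)p_{b-1}(x-y)-p_a(x-y)p_{b-1}(x)\bigr].
\]
Each bracket is $\le 0$ because $p_{b-1}(x-y)/p_a(x-y)\le p_{b-1}(x)/p_a(x)$ for $y\ge 0$, which is exactly $S_a\leLR S_{b-1}$. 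Part \ref{item:cdf-balancing} is identical with $F$ in place of $p$, using $F_{k+1}(x)=\sum_y r(y)F_k(x-y)$ and the reversed-hazard monotonicity. In both parts I would handle the zeros of $p_k$ and $F_k$ separately. By nonnegativity, $\supp(S_k)$ is an integer interval whose left endpoint $k\cdot\min\supp(X)$ increases in $k$, and one checks that the right-hand side is positive whenever the left-hand side is.

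For \ref{item:indep-balancing}, I would write $A_{k,\ell}=\sum_x p_k(x)F_\ell(x)G(x)$. Expanding the product and symmetrizing over the swap $x\leftrightarrow z$ gives
\[
A_{a,b}A_{b,a}=\tfrac12\sum_{x,z}G(x)G(z)\bigl[p_a(x)F_b(x)\,p_b(z)F_a(z)+p_a(z)F_b(z)\,p_b(x)F_a(x)\bigr],
\]
and similarly for $A_{a+1,b-1}A_{b-1,a+1}$. The goal is to compare these pointwise, or after a further averaging over the extra summand $Y\sim X$ as above. The tools would be the two-variable kernels $(k,x)\mapsto p_k(x)$ and $(k,x)\mapsto F_k(x)$, which are TP$_2$ by the chain above, and the log-concavity of $G$ and of $x\mapsto p_k(x)F_\ell(x)G(x)$, a product of log-concave functions. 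I expect to split the sum into $x\le z$ and $x>z$ and apply a four-functions (Ahlswede--Daykin/FKG-type) rearrangement on the lattice $\{a,\dots,b\}\times\mathbb{Z}_{\ge0}$.

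This third part is the main obstacle. The indices move in opposite directions in $p$ and $F$, so neither \ref{item:pmf-balancing} nor \ref{item:cdf-balancing} alone suffices, and the cross terms must be controlled jointly. If the pointwise route fails, I would fall back on induction on $b-a$. The case $b=a+1$ is trivial, and the step would peel off one copy of $X$ from $S_b$ and from $T_b$ (the independent sum inside $F_b$), reducing to \ref{item:pmf-balancing} and \ref{item:cdf-balancing} applied inside the sum. Boundedness of $G$ is used only to guarantee that all $A_{k,\ell}$ are finite.
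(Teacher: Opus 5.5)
Your proofs of (1) and (2) are correct, by a route slightly different from the paper's. You peel one copy of $X$ off $S_b$ and $S_{a+1}$ and use the long-range orders $S_a\leLR S_{b-1}$ and its reversed-hazard consequence (i.e.\ $\mathrm{TP}_2$ of $(k,x)\mapsto F_k(x)$). The paper instead peels $b-a-1$ copies and needs only $S_a\leLR S_{a+1}$ for (1); for (2) it proves $F_aF_{a+2}\le F_{a+1}^2$ via truncated measures and iterates. In cross-product form your orders need no separate treatment of zeros.

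The genuine gap is (3): the proposal contains no argument for it, and the route you sketch cannot work as stated. The pointwise comparison of the symmetrized double sums is false. Take $a=0$, $b=2$, $X\sim\mathrm{Ber}(q)$, $G\equiv 1$ and $(x,z)=(0,2)$. The left bracket is $p_0(0)F_2(0)p_2(2)F_0(2)=(1-q)^2q^2>0$, while the right bracket is $2p_1(0)F_1(0)p_1(2)F_1(2)=0$. More fundamentally, a four-functions/FKG argument produces join--meet inequalities, i.e.\ $\mathrm{TP}_2$ of $(k,\ell)\mapsto A_{k,\ell}$, which yields $A_{a,b}A_{b,a}\le A_{a,a}A_{b,b}$. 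But $(a+1,b-1)$ and $(b-1,a+1)$ are not the join and meet of $(a,b)$ and $(b,a)$. The target is a concavity statement along the anti-diagonal $k+\ell=a+b$, and $\mathrm{TP}_2$ alone cannot give it: $A_{k,\ell}=f(k)f(\ell)$ with $f$ strictly log-convex is $\mathrm{TP}_2$ (with equality) yet violates the inequality.

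What is missing is log-concavity of $A_{k,\ell}$ in each index separately. This needs a genuinely new one-dimensional fact: $k\mapsto \sum_x p_k(x)W(x)$ is log-concave for every bounded log-concave $W$. The paper proves it in three steps. It shows that $R(x)=\sum_t p_1(t)W(x+t)/W(x)$ is nonincreasing. It shows that the tilted measures $\nu_k\propto p_kW$ are $\leLR$-increasing in $k$. Hence the ratio of consecutive terms, $\Exp_{Z\sim\nu_k}[R(Z)]$, is nonincreasing in $k$; an approximation step handles $W$ with zeros. The paper applies this with $W=F_\ell G$ for the $k$-direction. For the $\ell$-direction it rewrites $A_{k,\ell}=\sum_y p_\ell(y)H_k(y)$ with the log-concave tail sum $H_k(y)=\sum_{x\ge y}p_k(x)G(x)$, and applies it with $W=H_k$. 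Then multiply the $\mathrm{TP}_2$ inequalities $A_{r-1,c}A_{r,c+1}\ge A_{r-1,c+1}A_{r,c}$ and $A_{r,c-1}A_{r+1,c}\ge A_{r,c}A_{r+1,c-1}$, and insert $A_{r,c}^2\ge A_{r-1,c}A_{r+1,c}$ and $A_{r,c}^2\ge A_{r,c-1}A_{r,c+1}$. This gives $A_{r,c}^2\ge A_{r-1,c+1}A_{r+1,c-1}$, with the case $A_{r,c}=0$ handled separately, and chaining along $r\mapsto A_{r,a+b-r}$ finishes.

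Your fallback induction on $b-a$ does not supply this either. The whole difficulty already sits in $A_{r,c}^2\ge A_{r-1,c+1}A_{r+1,c-1}$ at arbitrary $(r,c)$. Moreover, peeling a copy of $X$ off $S_b$ and $T_b$ shifts the arguments of $p$ and $F$ to $x-y$ while $G$ stays at $x$, so (1) and (2) cannot simply be applied inside the sum.
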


\noindent
Now we proceed to the proof of the theorem.

\begin{proof}[Proof of \Cref{thm:logconcave-compat}]
	By \Cref{thm:conti-compat}, it is sufficient to consider discrete log-concave distributions. Again let $X$ be a random variable drawn from $\mu$, and w.l.o.g. assume $\Exp[X] > 0$. 
    We write $p_a$ and $F_a$ for the pmf and cdf of $\mu^{*a}$, respectively. 
    Recall that $k(M') = \sum_{j \in M'} k_j$ and $K = k(M)$ where $u_i(j) \sim \mu^{* k_j}$. Ex-ante EF allocations are those with $k(A_1) = \cdots = k(A_n) = K/n$.

    Let $S_{a_1}^{(1)},\ldots,S_{a_n}^{(n)}$ be mutually independent, with $S_{a_i}^{(i)}\sim\mu^{*a_i}$.
    For a nonnegative integral vector $a=(a_1, \ldots, a_n)$ with $\sum_{i=1}^n a_i = K$, let
    \[q_i(a) = \Pr(S_{a_i}^{(i)} \ge S_{a_j}^{(j)}, \, \forall j \ne i).\] Since $\supp(X) \subseteq \mathbb{Z}_{\ge 0}$,
    $$q_i(a) = \sum_{x \ge 0} p_{a_i}(x) \prod_{j \ne i} F_{a_j}(x).$$ 
    Let $\Phi(a) = \prod_{i=1}^n q_i(a)$.
    Given an allocation $A=(A_1, \ldots, A_n)$ with $k(A_i)=a_i$, $q_i(a)$ indicates the probability of agent $i$ not envying anyone, and $\Phi(a)$ becomes the probability of $A$ being EF ex post.
    
    We will show
    that $\Phi$ is maximized at $a = ( \frac{K}{n}, \ldots, \frac{K}{n})$.
    Consider an unbalanced allocation $A$ with $(a_1,\ldots,a_n) \ne (\frac{K}{n}, \ldots,  \frac{K}{n})$. Then there exist $i,j \in N$ such that $a_i < \frac{K}{n} < a_{j}$.
    For the vector $a'$ with $a_i'=a_i + 1$, $a_j' = a_j-1$, and $a_\ell'=a_\ell$ for $\ell\ne i,j$, we claim $\Phi(a') \ge \Phi(a)$.

	First consider $q_k$ for $k \notin \{i,j\}$.
	Let $G_k(x) = \prod_{\ell \notin \{i,j,k\}} F_{a_\ell}(x)$. Then, 
	\begin{align*}
		q_k(a) &= \sum_{x \ge 0}p_{a_k}(x)G_k(x)F_{a_i}(x)F_{a_j}(x),\\
		q_k(a') &= \sum_{x \ge 0}p_{a_k}(x)G_k(x)F_{a_i + 1}(x)F_{a_j - 1}(x).
	\end{align*}
	By \Cref{lem:weight-balancing}~\labelcref{item:cdf-balancing}, we have $q_k(a') \ge q_k(a)$.
	
	Now we compare $q_i$ and $q_j$ together. Let $G(x) = \prod_{\ell \notin \{i,j\}} F_{a_\ell}(x)$. Since each $F_{a_\ell}$ is log-concave (as cdfs of log-concave distributions are log-concave) and takes values between 0 and 1, $G$ is log-concave and bounded. Then defining $A_{k, \ell} = \sum_{x \ge 0} p_{k}(x) F_{\ell} (x) G(x)$, we see that $q_i(a) = A_{a_i, a_j}, q_j(a) = A_{a_j, a_i}$ and $q_i(a') = A_{a_i + 1, a_j -1}, q_j(a') = A_{a_j - 1, a_i + 1}$. Hence, applying \Cref{lem:weight-balancing}~\labelcref{item:indep-balancing}, we have $q_i(a')q_j(a') \ge q_i(a)q_j(a)$.
	
	Therefore,
    \begin{align*}
    \Phi(a')
    &= q_i(a')q_j(a') \prod_{k \notin\{i,j\}}q_k(a')
    \\
    &\ge q_i(a)q_j(a) \prod_{k \notin\{i,j\}}q_k(a) = \Phi(a),
    \end{align*}
	and the claim holds. 

    By repeating the argument until we reach the balanced vector $(a_1,\ldots,a_n) = (\frac{K}{n}, \ldots,  \frac{K}{n})$, we see that the ex-post EF probability is maximized by an ex-ante EF allocation.
    (Note that we can reach the balanced vector in finite steps, as each step decreases $\sum_{\ell=1}^n \left|a_\ell- \frac{K}{n} \right|$ by $2$.)
\end{proof}

Since Poisson distributions also are in the broad family of log-concave distributions, we have the following answer to \Cref{ex:fishing-poisson}, even for $n$ agents.
\begin{restatable}{corollary}{poisson}\label{cor:poisson}
    If $\mathcal{D}_{1j} = \cdots = \mathcal{D}_{nj} = \mathrm{Pois}(\lambda_j)$ for all $j$, then an ex-ante EF allocation maximizes ex-post EF probability.
\end{restatable}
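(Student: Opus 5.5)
The plan is to reduce \Cref{cor:poisson} to \Cref{thm:logconcave-compat} by exhibiting a single log-concave generator $\mu$ with $\mathrm{Pois}(\lambda_j) \in \mathcal{C}_\mu$ for every item $j$. The obstacle is that the rates $\lambda_j$ are arbitrary positive reals, while a convolution-power family only contains $\mu^{*k}$ for integer $k \ge 1$.

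First I would handle rational rates. If all $\lambda_j$ are rational, write $\lambda_j = k_j/q$ with a common denominator $q$ and positive integers $k_j$. Take $\mu = \mathrm{Pois}(1/q)$. By the additivity of independent Poisson variables recalled in \Cref{sec:prelim}, $\mu^{*k_j} = \mathrm{Pois}(k_j/q) = \mathrm{Pois}(\lambda_j)$, so every $\mathcal{D}_{ij}$ lies in $\mathcal{C}_\mu$. The pmf $e^{-\lambda}\lambda^x/x!$ is supported on the integer interval $\mathbb{Z}_{\ge 0}$. It satisfies $p(x)^2/(p(x-1)p(x+1)) = (x+1)/x \ge 1$ for $x \ge 1$, and the inequality is trivial at $x=0$ and outside the support, so $\mu$ is log-concave. \Cref{thm:logconcave-compat} then applies directly.

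For general real rates, I would argue directly rather than approximate, because the proof of \Cref{thm:logconcave-compat} only uses the generator through the family $\{p_k, F_k\}$ and \Cref{lem:weight-balancing}. An ex-ante EF allocation here is one with $\lambda(A_1) = \cdots = \lambda(A_n) = \Lambda/n$, where $\Lambda = \sum_j \lambda_j$. Two routes are available.

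\begin{enumerate}
\item \textbf{Limiting argument.} The ex-post EF probability of a fixed allocation is
\[
\prod_i \sum_x p_{\lambda(A_i)}(x) \prod_{j \ne i} F_{\lambda(A_j)}(x),
\]
and this is continuous in the vector of rates, since the Poisson pmf and cdf are continuous in $\lambda$ and the tails are dominated. Suppose some allocation $B$ strictly beats an ex-ante EF allocation $A$. Perturb all $\lambda_j$ to nearby rationals, keeping $A$'s bundle sums exactly equal. This is possible because the constraints are rational linear equalities satisfied by the real point, so rational solutions of the same linear system are dense near it. The strict inequality persists under small perturbation, which contradicts the rational case.
\item \textbf{Continuous parameter.} Redo \Cref{lem:weight-balancing} with a continuous index $\lambda$ in place of $k$.
\end{enumerate}

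I would use the first route. Its main obstacle is the density claim: approximating real solutions of $\sum_{j\in A_i}\lambda_j = \sum_{j \in A_{i'}}\lambda_j$ by rational positive solutions. This holds because the solution set is a subspace defined over $\mathbb{Q}$, so its rational points are dense in it, and positivity is an open condition.
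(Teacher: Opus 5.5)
Your proposal is correct and follows essentially the paper's route: the rational case uses a common generator $\mathrm{Pois}(1/q)$ together with \Cref{thm:logconcave-compat}, and the real case uses rational approximations of the rates that keep the given allocation ex-ante EF, combined with continuity (via dominated convergence) of the ex-post EF probabilities. The only difference is cosmetic: the paper builds the approximating rates explicitly with floors and passes to the limit directly, whereas you invoke density of rational points in the rational subspace of balanced rate vectors and argue by contradiction.
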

\begin{proof}
    First consider the case where all $\lambda_1, \ldots , \lambda_m$ are rational: $\lambda_j = p_j / q_j$ for some integers $p_j,q_j$. Then each $\lambda_j$ is a multiple of $\lambda = 1/(q_1 \cdots q_m)$, and hence $\mathrm{Pois}(\lambda_j) \in \mathcal{C}_{\mathrm{Pois}(\lambda)}$. Since a Poisson distribution is log-concave, by \Cref{thm:logconcave-compat}, the result holds.

    For real $\lambda_j > 0$, we can approximate each by rational numbers, yielding the result. A detailed proof is in \Cref{sec:poisson}.
\end{proof}

Can we go beyond convolution-power families? It turns out if we consider \emph{symmetric} lottery items for two agents under additional assumptions, an ex-ante EF allocation maximizes ex-post EF probability. We state and prove the result in the following section (\Cref{thm:symmetric-compat}) in a more general sense.


\section{Approximating Ex-Post Fairness}\label{sec:approx}
The positive results in \Cref{sec:EFcompat} assumed that for each fixed item $j$, the distributions $\mathcal{D}_{1j}, \ldots, \mathcal{D}_{nj}$ are identical. However, there are cases where agents have different utility distributions for the same item. In this case, can ex-ante solutions provide at least some approximation to the optimal ex-post probability?

The following example shows that even for the simplest case of \emph{binary lottery items}, where each $u_i(j)$ follow a Bernoulli distribution, we cannot achieve a constant ex-post EF probability independent of $n$.
\begin{example}
	Consider the case of $n$ agents. There are $2n$ items $g_1, \ldots, g_n, g_1', \ldots, g_n'$. For each agent $i$, $u_i(g_i) = 1$ and $u_i(g_i') = 0$ a.s. For $j \ne i$, $u_i(g_j), u_i(g_j') \sim \mathrm{Ber}(1/2)$. Then the allocation $A$ with $A_i = \{g_i, g_i'\}$ is ex-ante EF. However, the ex-post EF probability of $A$ is $(3/4)^{n(n-1)} \to 0$ as $n \to \infty$. 
\end{example}

The ex-post EF probability decays exponentially as the number of agents $n$ increases, at a rate $e^{- \Theta(n^2)}$.
Still, if the utilities are log-concave, we show that an ex-ante EF allocation provides a positive ex-post EF probability depending only on $n$, matching the $e^{- \Theta(n^2)}$ dependence.

We make use of the following Grünbaum-type inequalities.
\begin{lemma}\label{lem:grunbaum}
	Let $X$ be a random variable.
    \begin{enumerate}[label={(\arabic*)}]
    	\item If $X$ is non-atomic log-concave, $\Pr(X \ge \Exp[X]) \ge e^{-1}$.
    	\item If $X$ is discrete log-concave, $\Pr(X > \Exp[X]-1) \ge e^{-1}$.
    \end{enumerate}
\end{lemma}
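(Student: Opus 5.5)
The approach for both parts is Jensen's inequality applied to the convex function $-\log\Pr(X\ge \cdot)$, which is available because tails of log-concave laws are log-concave. The extremal case is the exponential distribution, where $\Pr(X\ge \Exp[X])=e^{-1}$ exactly.

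\emph{Non-atomic case.} I will first show that $\bar F(x)=\Pr(X\ge x)$ is log-concave. The argument is the same as for the cdf: $\bar F$ is the cdf of $-X$ reflected, and $-X$ is log-concave. So $g=-\log \bar F$ is convex and nondecreasing on the interior of the support, taking the value $+\infty$ only beyond the upper end.

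Since $X$ is non-atomic, $\bar F$ is continuous, and the probability integral transform gives $\bar F(X)\sim \mathrm{Unif}(0,1)$. Hence
\[
\Exp[g(X)] = \Exp[-\log U] = 1 .
\]
Jensen's inequality for the convex $g$ then gives $g(\Exp[X])\le \Exp[g(X)]=1$. Exponentiating yields $\Pr(X\ge \Exp[X])=e^{-g(\Exp[X])}\ge e^{-1}$.

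Two technical points need care here. First, $g$ is convex only on the (interval) support, but $X$ and $\Exp[X]$ both lie there, so this is enough. Second, integrability of $g(X)$ is immediate from $\Exp[-\log U]=1$.

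\emph{Discrete case.} I will reduce to the continuous argument by smoothing. Set $Y=X+U$ with $U\sim\mathrm{Unif}[0,1)$ independent of $X$, so $Y$ is non-atomic and $\Exp[Y]=\Exp[X]+\tfrac12$. Let $G(k)=\Pr(X\ge k)$ on $\mathbb{Z}$; this sequence is log-concave for the same reason as above. Let $\varphi$ be the piecewise-linear interpolation of $k\mapsto -\log G(k)$, which is convex.

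For $y\in[k,k+1]$, the tail $\Pr(Y\ge y)$ is the arithmetic (linear) interpolation between $G(k)$ and $G(k+1)$. By AM--GM this dominates the geometric interpolation $e^{-\varphi(y)}$. Hence $-\log\Pr(Y\ge y)\le \varphi(y)$ is the wrong direction for Jensen, so I will apply Jensen the other way round:
\[
\varphi(\Exp[Y])\le \Exp[\varphi(Y)]\quad\text{and}\quad \varphi(Y)\ \text{vs.}\ -\log \Pr(Y\ge y)\big|_{y=Y}.
\]

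I expect this pairing to be the main obstacle. I need $\Exp[\varphi(Y)]\le 1$, but I only know $\Exp[-\log \bar F_Y(Y)]=1$ together with $\varphi\ge -\log\bar F_Y$. The fix I will try first is to choose the smoothing so that the interpolated tail is genuinely log-linear on each cell. Concretely, on the event $X=k$, let $Y$ have a density on $[k,k+1)$ proportional to the exponential profile matching $G(k)\to G(k+1)$. Then $\bar F_Y=e^{-\varphi}$ exactly, and we still have $Y\in[X,X+1)$, which gives $\Exp[Y]\ge \Exp[X]$ and $Y<X+1$.

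With this choice, the continuous argument applies verbatim to $Y$ and gives $\Pr(Y\ge \Exp[Y])\ge e^{-1}$. The conclusion then transfers back to $X$: $\Pr(Y\ge\Exp[Y])\le \Pr(X+1>\Exp[X])=\Pr(X>\Exp[X]-1)$. This yields part (2).
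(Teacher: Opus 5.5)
You give a self-contained one-dimensional argument, whereas the paper proves nothing here and only cites Lov\'asz--Vempala for (1) and Alqasem et al.\ for (2). Part (1) is correct as written. The function $-\log\bar F$ is convex and finite on $(-\infty,\sup\supp X)$, which contains $X$ a.s.\ and $\Exp[X]$, and non-atomicity gives $\Exp[-\log\bar F(X)]=1$.

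Part (2), however, breaks at the upper end of the support. Suppose $\supp(X)$ has a maximum $b$, as for Bernoulli, binomial, or any finite-support utility difference in the paper. Then $G(b)=\Pr(X=b)>0$ but $G(b+1)=0$, so $\varphi=+\infty$ on $(b,\infty)$, and no exponential profile on $[b,b+1)$ interpolates $G(b)\to 0$. Your requirement $\Pr(Y\ge y)=e^{-\varphi(y)}$ then forces $Y=b$ on $\{X=b\}$, an atom of mass $\Pr(X=b)$. Consequently $\bar F_Y$ jumps at $b$, $\bar F_Y(Y)$ is not uniform, and the identity $\Exp[-\log\bar F_Y(Y)]=1$ that you import ``verbatim'' from part (1) is not available. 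Spreading that mass uniformly over $[b,b+1)$ does not rescue the construction in general, because it can break convexity of $-\log\bar F_Y$ at $b$ (e.g.\ $X\sim\mathrm{Ber}(q)$ with $q<1/e$).

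The repair is to notice that Jensen only needs $\Exp[-\log\bar F_Z(Z)]\le 1$, not equality. This holds for \emph{every} random variable $Z$ with $\bar F_Z(z)=\Pr(Z\ge z)$. Since $\bar F_Z$ is nonincreasing, the event $\{\bar F_Z(Z)<u\}$ is empty or of the form $\{Z\ge z_0\}$ or $\{Z>z_0\}$, and each has probability at most $u$. Integrating $\Pr(-\log\bar F_Z(Z)>s)\le e^{-s}$ over $s\ge 0$ gives the bound.

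With this, the atom is harmless, and the smoothing becomes unnecessary. Applying the bound to $X$ directly gives $\varphi(\Exp[X])\le\Exp[-\log G(X)]\le 1$. Now $e^{-\varphi(\Exp[X])}$ is a weighted geometric mean of $G(\lfloor\Exp[X]\rfloor)$ and $G(\lfloor\Exp[X]\rfloor+1)$. It is therefore at most $G(\lfloor\Exp[X]\rfloor)=\Pr(X>\Exp[X]-1)$, which proves (2).
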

\begin{proof}
	The non-atomic version was proved by \citet{lovasz2007inequality}. The discrete version was proved by \citet{alqasem2024feige}.
\end{proof}

The following theorem provides an absolute lower bound for the ex-post EF probability under log-concavity. This automatically gives the same approximation ratio to the optimal ex-post EF probability, as probabilities are at most one.

\begin{restatable}{theorem}{logapprox}\label{thm:approx-logconcave-independent} 
	Assume each $\mathcal{D}_{ij}$ is log-concave (all continuous or all discrete).
	Then an ex-ante EF allocation has ex-post EF probability at least $e^{-n(n-1)}$.
\end{restatable}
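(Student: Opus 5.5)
Fix an ex-ante EF allocation $A$. The plan is to split the ex-post EF event into the $n(n-1)$ pairwise events
\[
E_{ij}=\{u_i(A_i)\ge u_i(A_j)\}, \qquad i\ne j,
\]
show that each has probability at least $e^{-1}$, and show that the intersection over all pairs has probability at least the product of the individual probabilities.

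\textbf{Independence across agents.} The family $\{u_i(\cdot)\}$ for different $i$ is mutually independent. Hence the events $\bigcap_{j\ne i}E_{ij}$ (agent $i$ envies nobody) are independent over $i$, and the ex-post EF probability equals $\prod_i \Pr(\bigcap_{j\ne i}E_{ij})$.

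\textbf{Positive correlation within an agent.} Fix $i$ and put $Y=u_i(A_i)$ and $Z_j=u_i(A_j)$. The variables $Y,Z_j$ ($j\ne i$) are mutually independent because the bundles are disjoint. Conditional on $Y=y$, the events $E_{ij}$ are independent, so
\[
\Pr\Big(\bigcap_{j\ne i}E_{ij}\Big)=\Exp\Big[\prod_{j\ne i} g_j(Y)\Big], \qquad g_j(y)=\Pr(Z_j\le y).
\]
Each $g_j$ is a nondecreasing nonnegative function. By Chebyshev's association inequality (Harris/FKG for a single real variable, extended to a product of $n-1$ functions by induction, since products of nonnegative nondecreasing functions are nondecreasing), $\Exp[\prod_j g_j(Y)]\ge\prod_j\Exp[g_j(Y)]=\prod_{j\ne i}\Pr(E_{ij})$. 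It therefore suffices to show $\Pr(E_{ij})\ge e^{-1}$ for every pair.

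\textbf{Each pair via Lemma~\ref{lem:grunbaum}.} Let $D=u_i(A_i)-u_i(A_j)$. By Lemma~\ref{lem:convolution}, $u_i(A_i)$ and $u_i(A_j)$ are log-concave as sums of independent log-concave variables. Since $-u_i(A_j)$ is also log-concave, $D$ is log-concave by Lemma~\ref{lem:convolution} again, and it is continuous or integer-valued according to the case. Ex-ante EF gives $\Exp[D]\ge 0$.
\begin{itemize}
\item In the continuous (non-atomic) case, part (1) of Lemma~\ref{lem:grunbaum} gives $\Pr(D\ge0)\ge\Pr(D\ge\Exp[D])\ge e^{-1}$.
\item In the discrete case, $D$ is integer-valued, so $D>\Exp[D]-1\ge -1$ forces $D\ge 0$. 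Part (2) of Lemma~\ref{lem:grunbaum} then yields $\Pr(D\ge 0)\ge e^{-1}$.
\end{itemize}
Multiplying over the $n(n-1)$ ordered pairs gives the bound $e^{-n(n-1)}$.

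\textbf{Degenerate cases.} The main technical care is in degenerate situations. If a bundle is empty, its utility is the point mass at $0$. In the discrete setting this is log-concave and the argument goes through unchanged. In the continuous setting, $D$ may be a non-atomic log-concave variable or a constant. If $D$ is the constant $\Exp[D]\ge0$ (both bundles empty), then $\Pr(D\ge0)=1$ trivially. Otherwise at least one of the two bundles is nonempty, and the convolution of that bundle's density with a point mass is still a log-concave density, so Lemma~\ref{lem:grunbaum} applies.

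The association step is the main conceptual point: the naive alternative of requiring $u_i(A_i)\ge\Exp[u_i(A_i)]$ and $u_i(A_j)\le\Exp[u_i(A_j)]$ for all $j$ only gives $e^{-n^2}$. The conditioning argument on $Y$ avoids paying separately for agent $i$'s own bundle.
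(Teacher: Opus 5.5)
Your proposal is correct and follows the paper's proof essentially step for step: the same factorization across agents, the same conditioning on $u_i(A_i)$ followed by Chebyshev's association inequality to get $q_i \ge \prod_{j\ne i}\Pr(u_i(A_i)\ge u_i(A_j))$, and the same use of \Cref{lem:convolution} and \Cref{lem:grunbaum}, including the integer-valued shift in the discrete case. Your explicit treatment of empty bundles in the continuous case, where the bundle value is a point mass rather than a log-concave density, is a small point of care that the paper leaves implicit.
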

\begin{proof}
	Fix an allocation $A$. For agents $i\ne j$, let $\Delta_{ij} = u_i(A_i) - u_i(A_j)$. Note that if a random variable $X$ has a log-concave distribution, then so does $-X$. Hence, by \Cref{lem:convolution}, $\Delta_{ij}$ follows a log-concave distribution.
	Now let $A$ be an ex-ante EF allocation. Then, we have $\Exp[\Delta_{ij}] \ge 0$ for every $i,j \in [n]$. By \Cref{lem:grunbaum}, for the case of non-atomic log-concave $\mathcal{D}_{ij}$, $\Pr(\Delta_{ij} \ge 0) \ge e^{-1}$. For the case of discrete log-concave $\mathcal{D}_{ij}$, since each $u_i(j)$ is integer-valued, so is $\Delta_{ij}$. Since $\Exp[\Delta_{ij}] - 1 \ge -1$, $\Pr(\Delta_{ij} \ge 0) \ge \Pr(\Delta_{ij} > \Exp[\Delta_{ij}]-1) \ge e^{-1}$.
	
	Let $q_i = \Pr(u_i(A_i) \ge u_i(A_j), \, \forall j \ne i)$ be the probability agent $i$ does not envy. Then, letting $F_{ij}$ be the cdf of $u_i(A_j)$,
	\begin{align*}
		q_i
		&= \Exp\left[ \Pr \left(u_i(A_i)\ge u_i(A_j), \, \forall j\ne i \,\middle|\, u_i(A_i) \right) \right]\\
		&= \Exp\left[ \prod_{j\ne i} F_{ij} \left(u_i(A_i)\right) \right] \\
        &\ge \prod_{j\ne i}  \Exp\left[ F_{ij} \left(u_i(A_i)\right) \right] \\
		&=  \prod_{j\ne i} \Pr (\Delta_{ij} \ge 0) \\
        &\ge e^{-(n-1)},
	\end{align*}
	where the first inequality follows from applying Chebyshev's association inequality ($\Exp[f(X)g(X)] \ge \Exp[f(X)]\Exp[g(X)]$ for nondecreasing $f$ and $g$) to the nondecreasing functions $F_{ij}$.
	Hence, the ex-post EF probability of $A$ is $\prod_{i=1}^n q_i \ge e^{-n(n-1)}$.
\end{proof}

In \Cref{thm:approx-logconcave-independent}, we have used the fact that calculus on log-concave random variables is well-defined: negation and independent sums preserve log-concavity. Then we were able to make use of \Cref{lem:grunbaum}, which states that the distribution is not too skewed.

Symmetric distributions possess the exact same properties and furthermore it has an even better non-skew\-ness guarantee: $\Pr(X \ge \Exp[X]) \ge \frac{1}{2}$. This gives the following result.

\begin{restatable}{theorem}{approxsym}\label{thm:approx-symmetric-independent} 
	Assume each $\mathcal{D}_{ij}$ is symmetric. Then an ex-ante EF allocation has ex-post EF probability at least $2^{-n(n-1)}$. Furthermore, the bound is tight.
\end{restatable}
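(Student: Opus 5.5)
The argument follows the log-concave case, \Cref{thm:approx-logconcave-independent}, with \Cref{lem:grunbaum} replaced by the symmetric non-skewness bound.

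\textbf{Closure of $\Delta_{ij}$ under symmetry.} Fix an ex-ante EF allocation $A$ and set $\Delta_{ij}=u_i(A_i)-u_i(A_j)$ for $i\ne j$. If $X$ is symmetric then so is $-X$, since $-X-\Exp[-X]=\Exp[X]-X\stackrel{d}{=}X-\Exp[X]=\Exp[-X]-(-X)$. Independent sums of symmetric variables are also symmetric: the centered summands can be flipped jointly because the summands are independent, and characteristic functions make this immediate. Hence $\Delta_{ij}$ is symmetric with $\Exp[\Delta_{ij}]\ge 0$.

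\textbf{Non-skewness.} For symmetric $Y$ we have $\Pr(Y\ge \Exp[Y])=\Pr(Y\le\Exp[Y])$. These two events cover the whole space, so each has probability at least $\frac12$. Therefore $\Pr(\Delta_{ij}\ge 0)\ge \Pr(\Delta_{ij}\ge \Exp[\Delta_{ij}])\ge \frac12$.

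\textbf{Per-agent bound.} As in \Cref{thm:approx-logconcave-independent}, condition on $u_i(A_i)$. The bundle values $u_i(A_j)$ for $j\ne i$ are independent of each other and of $u_i(A_i)$, so $q_i=\Exp[\prod_{j\ne i}F_{ij}(u_i(A_i))]$. Chebyshev's association inequality applies to the nondecreasing functions $F_{ij}$ and gives $q_i\ge\prod_{j\ne i}\Pr(\Delta_{ij}\ge0)\ge 2^{-(n-1)}$. The events that different agents are envy-free depend on disjoint sets of independent utilities, so the ex-post EF probability is $\prod_i q_i\ge 2^{-n(n-1)}$.

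\textbf{Tightness.} This is the main step to get right. I would use $m=n$ items, one per agent, with $u_i(j)\sim\mathrm{Unif}[0,2]$ for all $i,j$, and $A_i=\{i\}$. This allocation is ex-ante EF. Agent $i$ does not envy anyone iff $u_i(i)$ is the maximum of $n$ i.i.d.\ continuous values, which has probability $1/n$. That gives only $n^{-n}$, which is larger than $2^{-n(n-1)}$, so this attempt fails and the pairwise comparisons must be made nearly independent. To do that I would make $u_i(A_i)$ deterministic: use symmetric utilities, and let agent $i$ value her own item at a constant $c$ a.s.\ (a point mass is symmetric). For $j\ne i$, let her value item $j$ as $c\pm\delta$ with probability $\frac12$ each, symmetric with mean $c$. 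The allocation is then ex-ante EF, and agent $i$ does not envy $j$ iff the coin lands low. These events are independent across all ordered pairs, so the ex-post EF probability is exactly $2^{-n(n-1)}$. Two constraints have to be checked in this construction: utilities must be nonnegative, which requires $\delta\le c$, and $\Pr(\Delta_{ij}\ge 0)$ must equal exactly $\frac12$. With point-mass symmetric utilities the ties at $\Delta_{ij}=0$ would have positive probability, so I make the off-diagonal values take exactly two values $c\pm\delta$ with $\delta>0$. Then $\Delta_{ij}=\mp\delta$ is never $0$, and the bound is attained exactly.
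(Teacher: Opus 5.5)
Your proposal is correct and follows the paper's proof: the lower bound repeats the argument of \Cref{thm:approx-logconcave-independent} with the bound $\Pr(Y\ge\Exp[Y])\ge\frac12$ in place of \Cref{lem:grunbaum}. Your tight example has the same structure as the paper's, namely a deterministic own-item value and off-diagonal values symmetric about it; you use the two-point distribution $c\pm\delta$ where the paper uses $\mathrm{Unif}[0,2]$ against the constant $1$, which is an inessential difference.
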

\begin{proof}
	The lower bound follows analogously from \Cref{thm:approx-logconcave-independent} and that $q_i \ge 2^{-(n-1)}$.
	For a tight example, consider $n$ items $1, \ldots, n$ with $u_i(i) = 1$ a.s. for all $i$, and $u_i(j) \sim \mathrm{Unif}[0,2]$ for all $j \ne i$. Then $A_i = \{i\}$ is an ex-ante EF allocation. The ex-post EF probability of $A$ is
	\begin{align*}
			\prod_{i=1}^n \Pr(u_i(i) \ge u_i(j), \, \forall j \ne i) 
			= \prod_{i=1}^n \prod_{j \ne i} \Pr(1 \ge u_i(j)) 
			=2^{-n(n-1)}.
	\end{align*}
\end{proof}

These show that different distributions for the same item cannot escape the curse of dimensionality. What if we only relax the convolution-power family assumption of \Cref{sec:EFcompat}?

It turns out that if the common distributions are additionally non-atomic and common for the same item, we can obtain a slightly better approximation ratio to the optimal ex-post EF probability.
\begin{restatable}{proposition}{approxlogconti}\label{thm:0.930approx}
	Assume $\mathcal{D}_{1j} = \cdots = \mathcal{D}_{nj}$ are non-atomic and log-concave for each $j$. Then an ex-ante EF allocation provides an $n^n \left( \frac{e-1}{e^2} \right)^{n \choose 2}$-ap\-prox\-i\-ma\-tion to the optimal ex-post EF probability.
	In particular, for $n=2$, the approximation ratio is $\frac{4(e-1)}{e^2} \approx 0.930$, and this is tight.
\end{restatable}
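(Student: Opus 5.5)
The plan is to compare a universal upper bound on the optimum with a lower bound on ex-ante EF allocations, and to take the ratio.

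\emph{Upper bound on the optimum.} I would reuse the argument in the proof of \Cref{thm:conti-compat}; it does not need the convolution-power structure. Fix any allocation $A$. If some bundle is empty, its ex-post EF probability is $0$, since nonempty bundles have positive utility a.s. unless the instance is trivial. Otherwise, because $\mathcal{D}_{1j}=\cdots=\mathcal{D}_{nj}$, the vectors $(u_i(A_1),\ldots,u_i(A_n))$ for different agents $i$ are independent and identically distributed. So the ex-post EF probability equals $\prod_i p_i$, where $p_i=\Pr(u_1(A_i)=\max_j u_1(A_j))$. Non-atomicity gives $\Pr(u_1(A_i)=u_1(A_{i'}))=0$ for $i\ne i'$, so $\sum_i p_i=1$. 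AM--GM then bounds the optimum by $n^{-n}$.

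\emph{Lower bound for an ex-ante EF allocation.} Since distributions are common across agents, an ex-ante EF allocation has $\Exp[u_1(A_1)]=\cdots=\Exp[u_1(A_n)]$.
\begin{itemize}
\item For $i\ne j$, set $\Delta_{ij}=u_i(A_i)-u_i(A_j)$. This variable has mean $0$ and is non-atomic log-concave by \Cref{lem:convolution}.
\item $\Delta_{ji}$ is distributed as $-\Delta_{ij}$ and is independent of it. With $p=\Pr(\Delta_{ij}\ge 0)$, non-atomicity gives $\Pr(\Delta_{ji}\ge 0)=1-p$.
\item \Cref{lem:grunbaum} applied to $\Delta_{ij}$ and to $-\Delta_{ij}$ yields $p\in[e^{-1},1-e^{-1}]$. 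Hence $p(1-p)\ge e^{-1}(1-e^{-1})=(e-1)/e^2$.
\item As in the proof of \Cref{thm:approx-logconcave-independent}, Chebyshev's association inequality gives $q_i\ge\prod_{j\ne i}\Pr(\Delta_{ij}\ge0)$.
\item Multiplying over $i$ and regrouping the factors into unordered pairs $\{i,j\}$ gives
\[
\prod_i q_i\ \ge\ \prod_{i<j}\Pr(\Delta_{ij}\ge0)\Pr(\Delta_{ji}\ge0)\ \ge\ \left(\frac{e-1}{e^2}\right)^{n\choose 2}.
\]
\end{itemize}
Dividing by the upper bound $n^{-n}$ gives the claimed ratio. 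For $n=2$ it equals $4(e-1)/e^2$.

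\emph{Tightness for $n=2$.} I expect this to be the main obstacle. Two things must hold at once:
\begin{itemize}
\item the ex-ante EF allocations have $\Pr(\Delta_{12}\ge 0)$ close to $1/e$;
\item some other allocation has $\Pr(u_1(A_1)>u_1(A_2))$ close to $1/2$, so that its ex-post EF probability is close to $1/4$.
\end{itemize}
The Grünbaum extremal case is the exponential distribution: with $Y\sim\mathrm{Exp}(1)$, $\Pr(Y\ge 1)=1/e$. So the ex-ante EF split should compare a bundle whose utility is essentially $\mathrm{Exp}(1)$-like against a bundle that is nearly the constant $1$, for instance $\mathrm{Unif}[1-\epsilon,1+\epsilon]$.

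I would then add small non-atomic log-concave items, such as narrow uniforms or tiny exponentials, with two aims. First, every ex-ante EF allocation should still pair the exponential-type mass against the near-constant mass. Second, some non-ex-ante-EF allocation should become nearly symmetric between the two bundles. A natural first attempt is two copies of the exponential item, or a Gamma-type item, together with suitably many near-constant items. Then I would check by direct computation that the ratio tends to $4(e-1)/e^2$ as $\epsilon\to0$. If exact matching of means is too rigid to arrange, I would instead tune the parameters so that the only ex-ante EF splits are the skewed ones.
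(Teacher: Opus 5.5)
Your proof of the ratio itself is correct and follows the paper's argument step for step. The optimum is at most $n^{-n}$ because non-atomicity forces $\sum_i p_i=1$ and AM--GM applies. For an ex-ante EF allocation, Gr\"unbaum applied to $\Delta_{ij}$ and to $-\Delta_{ij}$ pins $\Pr(\Delta_{ij}\ge 0)$ to $[e^{-1},1-e^{-1}]$. Chebyshev's association inequality and pairing then give $\left(\frac{e-1}{e^2}\right)^{\binom{n}{2}}$.

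\textbf{The gap is tightness for $n=2$.} This is part of the statement, and you leave it as an unexecuted plan with no instance and no computation. You name the right two targets, but your concrete first attempts point the wrong way. Putting two $\mathrm{Exp}(1)$ items together, or any $\mathrm{Gamma}(k,1)$ mass with $k\ge 2$, against near-constant mass destroys the extremal skew. For example, $\Pr(\mathrm{Gamma}(2,1)\ge 2)=3e^{-2}>e^{-1}$. The resulting ratio is about $0.964$, bounded away from $4(e-1)/e^2$. Your worry that exact matching of means is too rigid is also unfounded.

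\textbf{The missing idea.} You need no extra random mass at all. Keep a single $\mathrm{Exp}(1)$ item, which attains Gr\"unbaum's bound at its mean. Use near-deterministic items only to shift the threshold it is compared against, from its mean $1$ to its median $\log 2$, where $\Pr(\mathrm{Exp}(1)\ge \log 2)=\frac12$.

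The paper's instance works as follows.
\begin{itemize}
\item Item $1$ has utility $\mathrm{Exp}(1)$. Items $k=2,3,4$ have utility $\mathrm{Unif}[c_k-\epsilon,c_k+\epsilon]$, with $c_2=\frac{1-\log 2}{2}$ and $c_3=c_4=\frac{1+\log 2}{4}$. Thus $c_2+c_3+c_4=1$ and $c_3+c_4-c_2=\log 2$.
\item The only bundles with mean exactly $1$ are $\{1\}$ and $\{2,3,4\}$. Hence $(\{1\},\{2,3,4\})$ is the ex-ante EF allocation up to swapping. Its $\Delta$ converges in distribution to $u_1(1)-1$, so its ex-post EF probability tends to $e^{-1}(1-e^{-1})$.
\item The allocation $(\{1,2\},\{3,4\})$ has $\Delta\to u_1(1)-\log 2$ in distribution. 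Its ex-post EF probability therefore tends to $\frac14$, which is the upper bound.
\end{itemize}
Both limit laws are continuous, so these limits of probabilities are justified. The ratio therefore tends to $4(e-1)/e^2$.
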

\begin{proof}
	W.l.o.g. assume that for some item $j$, $\Exp[u_1(j)] > 0$.
	Let $A$ be an ex-ante EF allocation. Let $p_{ij} = \Pr(u_i(A_i) \ge u_i(A_j))$.
	Since $\mathcal{D}_{1j} = \cdots =\mathcal{D}_{nj}$ for each item $j$, we have 
    $\Exp[u_1(A_i)]= \cdots = \Exp[u_n(A_i)]$ for any bundle $A_i$.
    Since $A$ is ex-ante EF, we have 
    $\Exp[u_i(A_1)] = \cdots =\Exp[u_i(A_n)]$ and hence $u_i(A_i) - u_i(A_j)$ is log-concave with zero mean. Hence, by \Cref{lem:grunbaum}, $p_{ij} \ge 1/e$. Similarly, $p_{ji} \ge 1/e$. Since $\mathcal{D}_{ik} = \mathcal{D}_{jk}$ for every item $k$, $p_{ji}=\Pr (u_i(A_j) \ge u_i(A_i))$. Using $\Pr (u_i(A_i) = u_i(A_j)) = 0$ which comes from the fact that the utilities are non-atomic, we have $p_{ij} + p_{ji} = 1$ and hence $1/e \le p_{ij} \le 1- 1/e$. Therefore, $p_{ij}p_{ji} = p_{ij} ( 1- p_{ij}) \ge \frac{1}{e} \left( 1 - \frac{1}{e} \right) = \frac{e-1}{e^2}$.
	
	Let $q_i = \Pr(u_i(A_i) \ge u_i(A_j), \, \forall j \ne i)$ be the probability of agent $i$ not envying.
	As in \Cref{thm:approx-logconcave-independent}, we have $q_i \ge \prod_{j \ne i} p_{ij}$.
	Then, the ex-post EF probability of $A$ is 
	\begin{align*}
		\prod_{i=1}^n q_i \ge \prod_{i=1}^n \prod_{j \ne i} p_{ij}
		= \prod_{1 \le i < j \le n} p_{ij} p_{ji}
		\ge \left(\frac{e-1}{e^2}\right)^{n \choose 2}.
	\end{align*}
	
	On the other hand, consider an arbitrary allocation $A$ with $q_i$ being the probability of agent $i$ not envying.
	By non-atomicity of utilities, as in \Cref{thm:conti-compat}, we have $\sum_{i=1}^n q_i = 1$.
	Then by AM-GM inequality, the ex-post EF probability of $A$ is $\prod_i q_i \le (\frac{1}{n}\sum_{i \in N}q_i)^n = \frac{1}{n^n}$.
	Hence, every ex-ante EF allocation is an $n^n \left( \frac{e-1}{e^2} \right)^{n \choose 2}$-ap\-prox\-i\-ma\-tion to the optimal ex-post EF probability. 
	In particular, for $n=2$, we have an approximation ratio of
	$\frac{4(e-1)}{e^2}$.

    For tightness, see \Cref{sec:0.930}.
\end{proof}

The same holds for symmetric distributions.
\begin{restatable}{proposition}{approxsymconti}\label{thm:symmetric-compat}
	Assume $\mathcal{D}_{1j} = \cdots = \mathcal{D}_{nj}$ are non-atomic and symmetric for each $j$. Then an ex-ante EF allocation provides an $n^n 4 ^{-{n \choose 2}}$-ap\-prox\-i\-ma\-tion to the optimal ex-post EF probability.
	In particular, for $n=2$, an ex-ante EF allocation maximizes ex-post EF probability.
\end{restatable}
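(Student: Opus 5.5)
We follow the proof of \Cref{thm:0.930approx}, replacing the Grünbaum bound of \Cref{lem:grunbaum} by the sharper estimate available for symmetric random variables. We may assume some item has positive mean, and we fix an ex-ante EF allocation $A$. For agents $i \ne j$ write $\Delta_{ij} = u_i(A_i) - u_i(A_j)$ and $p_{ij} = \Pr(\Delta_{ij} \ge 0)$.

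\textbf{Step 1: each $\Delta_{ij}$ is symmetric with mean zero.} Symmetry is preserved under negation and under independent sums. To see the latter, note that if $X$ and $Y$ are independent and symmetric, then $X+Y-\Exp[X+Y]$ and $\Exp[X+Y]-(X+Y)$ have the same law, because the centred summands can be flipped independently. Hence $\Delta_{ij}$ is symmetric. Because $\mathcal{D}_{1k}=\cdots=\mathcal{D}_{nk}$ for every item $k$, every agent assigns the same expectation to each bundle. Ex-ante EF then forces all bundles to have equal expected value, exactly as in the proof of \Cref{thm:0.930approx}, so $\Exp[\Delta_{ij}]=0$. Non-atomicity of the utilities makes $\Delta_{ij}$ non-atomic. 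A symmetric, mean-zero, non-atomic variable satisfies $\Pr(\Delta_{ij}\ge 0)=\Pr(\Delta_{ij}\le 0)$ and $\Pr(\Delta_{ij}=0)=0$, so $p_{ij}=1/2$.

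\textbf{Step 2: lower bound for $A$.} Following \Cref{thm:approx-logconcave-independent}, Chebyshev's association inequality gives $q_i \ge \prod_{j\ne i} p_{ij} = 2^{-(n-1)}$. The ex-post EF probability of $A$ is therefore at least $\prod_i q_i \ge 2^{-n(n-1)} = 4^{-\binom{n}{2}}$.

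\textbf{Step 3: upper bound for any allocation.} Exactly as in \Cref{thm:conti-compat} and \Cref{thm:0.930approx}, we may replace each agent by agent $1$, since the distributions are common across agents and utilities are independent. If some bundle is empty, non-atomicity gives ex-post EF probability $0$: a nonempty bundle with positive-mean items has $u_1>0$ a.s. The one exception is when all items in the nonempty bundles have zero utility a.s.; this needs a separate check but is excluded by the earlier w.l.o.g. assumption. Ties occur with probability zero, so $\sum_i q_i = 1$, and AM-GM bounds the ex-post EF probability of any allocation by $n^{-n}$. Dividing the two bounds gives the ratio $n^n 4^{-\binom{n}{2}}$.

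\textbf{Step 4: the case $n=2$.} Here the ratio is $4\cdot 4^{-1}=1$. More directly, Step 1 gives the ex-ante EF allocation probability $p_{12}p_{21}=1/4$, which meets the universal upper bound $1/4$ from Step 3, so the allocation is optimal.

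The main point needing care is Step 1: checking that symmetry survives sums and negation, and that non-atomicity removes the atom at the mean. This is what upgrades the bound $\Pr(X\ge\Exp[X])\ge 1/2$ to the exact value $1/2$, and the exact value is what makes the $n=2$ case tight. The remaining steps reuse earlier arguments verbatim.
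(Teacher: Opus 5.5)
Your proposal is correct and follows the paper's own argument: the paper proves this in one line by rerunning \Cref{thm:0.930approx} with $p_{ij}=p_{ji}=1/2$, and your Steps 1--4 spell out why symmetry and non-atomicity give that value, then reuse the association-inequality lower bound and the AM--GM upper bound $n^{-n}$. The exceptional case you flag in Step 3 cannot arise, because a nonnegative non-atomic utility is positive almost surely, so your w.l.o.g. assumption holds automatically.
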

\begin{proof}
	Follows analogously from \Cref{thm:0.930approx} and that $p_{ij} = p_{ji} = 1/2$.
\end{proof}

\section{Complexity of Ex-Post Fairness}\label{sec:complexity}
We now study the computational aspects of ex-ante and ex-post envy-freeness.
We restrict our attention to integer-valued random variables with finite support and rational probabilities.
The pmf of each utility $u_i(j)$ is represented explicitly as a list $(x_{\ell},p_{\ell})_{\ell=1,2,\ldots,s}$ where $p_{\ell}=\Pr(u_i(j)=x_{\ell}) \in \mathbb{Q}_{>0}$, $\{x_{\ell}\}_{\ell=1}^s =\supp (u_i(j))$, and $\sum^s_{\ell=1}p_{\ell}=1$. 
The input size is the total bit length of all these support values and rational probabilities.

The ex-ante problems coincide with classical fair division problem after computing the expected utilities $\Exp[u_i(j)]$. This can be done efficiently with our computational model with linearly many arithmetic operations.


Ex-post problems turn out to be computationally hard in a wider range. Indeed, even approximating the optimal ex-post EF probability is NP-hard.
In deterministic instances, the ex-post EF probability of every allocation is either $0$ or $1$. Therefore, any positive multiplicative approximation to the optimal ex-post EF probability would return an EF allocation whenever one exists, and would thus decide the existence of an EF allocation.

Problems that are easy in the ex-ante case turn out to be hard in the ex-post case.
Given an allocation $A$, it is coNP-hard to decide whether this allocation is maximizing ex-post EF probability or not.
Again, if the instance is deterministic and $A$ is not EF, the problem reduces to deciding whether the instance has an EF allocation or not.

This motivates focusing on a more basic computational question: given an allocation, can we compute its ex-post EF probability efficiently?
%
It turns out that the problem is computationally hard.

\begin{restatable}{theorem}{weaksharpP}\label{thm:weaksharpP}
	Even for $n=2$, 
    computing the ex-post EF probability of a given allocation is weakly $\#$P-hard.
\end{restatable}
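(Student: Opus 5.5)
We reduce from \#Knapsack (equivalently, counting the subset-sum solutions $\{S\subseteq[r]:\sum_{\ell\in S}w_\ell\le C\}$ for positive integers $w_1,\dots,w_r,C$), which is \#P-hard under weak encodings. The idea is to make one agent's envy event encode the knapsack constraint and the other agent's envy event trivial. The probability is then a scaled count of knapsack solutions.

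The construction uses $n=2$ and items $1,\dots,r$ plus one extra item $0$. Agent~1's utilities for items $1,\dots,r$ are independent, with $u_1(\ell)$ taking value $w_\ell$ or $0$, each with probability $1/2$. Agent~1's utility for item $0$ is deterministically $C$, i.e., $u_1(0)=C$ a.s. For agent~2, we take all utilities deterministic: $u_2(0)=0$ and $u_2(\ell)=1$ for $\ell\ge 1$. The allocation is $A_1=\{0\}$ and $A_2=\{1,\dots,r\}$. Agent~2 then never envies, since $u_2(A_2)=r\ge 0=u_2(A_1)$. Agent~1 does not envy exactly when $\sum_{\ell\in S} w_\ell\le C$, where $S$ is the random set of items realized at $w_\ell$, and $S$ is uniform over all subsets. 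By independence of the utilities, the ex-post EF probability equals $\#\{S:\sum_{\ell\in S}w_\ell\le C\}/2^r$. This value times $2^r$ is the knapsack count.

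The remaining steps are routine. All pmfs are explicit lists with rational probabilities $1/2$ or $1$, and the input size is polynomial in the binary encoding of the $w_\ell$ and $C$, so the reduction is polynomial. Multiplying the computed probability by $2^r$ recovers the count, which gives \#P-hardness. The hardness is only weak because the numbers are encoded in binary. This is consistent with the pseudo-polynomial dynamic program the paper announces, which convolves the pmfs over a value range bounded by the sum of support values.

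The main care point is nonnegativity: the model requires each $u_i(j)\ge 0$, and the reduction respects this. We must also confirm that \#Knapsack, or \#SubsetSum in the form ``count subsets with sum at most $C$'', is \#P-complete under parsimonious-type reductions. If we prefer the equality version \#SubsetSum, we take the difference of the counts for thresholds $C$ and $C-1$. This costs two oracle calls, i.e., a Turing reduction, which suffices for \#P-hardness.
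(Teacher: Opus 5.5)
Your reduction is correct, but it handles the second agent differently from the paper. The paper reduces from \textsc{$\#$Partition} and gives \emph{both} agents the same distributions: $2w_j$ or $0$ with probability $\frac12$ each, plus a deterministic item worth $W=\sum_j w_j$. Both envy events are then nontrivial, so it uses the complementation symmetry $S\mapsto[m]\setminus S$ to get $\Pr(D_i\ge 0)=\Pr(D_i\le 0)=\frac{1+p}{2}$. The EF probability is therefore $\bigl(\frac{1+p}{2}\bigr)^2$, and the partition count $2^m p$ is recovered by taking a square root. You instead use the fact that $\mathcal{D}_{1j}$ and $\mathcal{D}_{2j}$ may differ in the model. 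Making agent~2 deterministic makes its non-envy event certain, so the EF probability is exactly the knapsack count divided by $2^r$. This needs no symmetry argument and no nonlinear post-processing. The cost is that you rely on \#P-hardness of the threshold (``$\le C$'') count, which you correctly obtain from the equality version with two oracle calls if needed. The paper's route buys a more restricted hard instance: it has $\mathcal{D}_{1j}=\mathcal{D}_{2j}$ for every item, so hardness persists even when each item has the same distribution across agents, as in \Cref{sec:EFcompat}. Your instance is heterogeneous across agents, and its key device (one agent whose envy is trivial) is unavailable in that homogeneous regime and under identical valuations.
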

\begin{proof}
	We perform reduction from \textsc{$\#$Partition}; given positive integers $\{w_1, \ldots, w_m\}$, count the number of partitions $S \subseteq [m]$ that satisfies $\sum_{j \in S}w_j = \sum_{j \in [m] \setminus S}w_j$.
		
	Consider the fair division of lotteries instance with $m+1$ lottery items $1, \ldots, m, {m+1}$. For $j=1, \ldots, m$, the utilities follow the following distribution: $u_i(j) = 2w_j$ with probability $\frac{1}{2}$ and $u_i(j) = 0$ with probability $\frac{1}{2}$, for all agents $i$. Let $u_i({m+1})=W:=\sum_{j=1}^m w_j$ with probability $1$ for all agents $i$. 
    
    Let $D_i = \sum_{j=1}^m u_i(j) - W$ and $p=\Pr(D_1=0)=\Pr(D_2=0)$. 
    Note that
    $$p=\Pr\left(\sum_{j=1}^m u_i(j) = W\right)= \frac{\# \{S \subseteq[m] : \sum_{j \in S}2w_j = W\}}{2^m}.$$
	Since $\sum_{j \in S}2w_j = W$ is equivalent to $\sum_{j \in S}w_j = \sum_{j \in [m] \setminus S}w_j$, we can count the number of partitions, solving \textsc{$\#$Partition}.
    
    Now we compute $p$ by considering the allocation of $A=([m], \{m+1\})$. 
    Note that $\Pr(D_i \ge 0) = \Pr(D_i \le 0)$ as if the realization corresponding to $S \subseteq [m]$, i.e., $u_i(j)=2w_j$ for $j \in S$ and $u_i(j) =  0$ otherwise, yields $D_i = 2\sum_{j \in S}w_j - W \ge 0$, then the realization corresponding to $[m]\setminus S$ yields $D_i = 2\sum_{j \in [m] \setminus S}w_j - W \le 0$. Hence, the probability of the allocation $A=([m], \{m+1\})$ being ex-post EF is 
	$$\Pr\left(D_1 \ge 0\right) \cdot \Pr\left(D_2 \le 0\right)= \left(\frac{1+p}{2}\right)^2,$$
	where we can recover $p$.
\end{proof}

The reduction here came from the weak $\#$P-complete problem of \textsc{$\#$Partition}. 
In fact, even for an arbitrary number of agents, we can evaluate the ex-post EF probabilities in pseudo-polynomial time.

\begin{restatable}{proposition}{DP}\label{prop:evalDP}
	Let $n=|N|$, $m=|M|$, and $B$ be the largest value in the support of $u_i(j)$ among all $i,j$. Then, the ex-post EF probability of an allocation can be computed in $O(n^2m(B+1) + nm^2(B+1)^2)$ arithmetic operations and in pseudo-polynomial time.
\end{restatable}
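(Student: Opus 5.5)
The plan is to use the independence across agents to factor the ex-post EF probability as $\prod_{i \in N} q_i$, where
\[
q_i=\Pr\bigl(u_i(A_i)\ge u_i(A_j),\ \forall j\ne i\bigr),
\]
and to compute each $q_i$ separately by dynamic programming over integer values. Fix an agent $i$. For each bundle $A_j$, the random variable $u_i(A_j)$ is a sum of independent integer-valued variables supported in $\{0,\ldots,B\}$, so it takes values in $\{0,\ldots,|A_j|B\}\subseteq\{0,\ldots,mB\}$. Its pmf is obtained by convolving the item pmfs one item at a time. The running pmf after adding $t$ items has support size at most $tB+1$, and the next item's pmf has support size at most $B+1$. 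So adding one item costs $O(m(B+1)^2)$ operations, and computing the pmf of one bundle costs $O(|A_j|\,m(B+1)^2)$.

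For a fixed $i$, summing over the bundles $j$ gives $O(m^2(B+1)^2)$ operations, since $\sum_j |A_j|=m$. Over all $i$ this totals $O(nm^2(B+1)^2)$, which accounts for the second term of the bound.

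Next, for each $i$ and each $j\ne i$, I compute the cdf $F_{ij}(x)=\Pr(u_i(A_j)\le x)$ for $x\in\{0,\ldots,mB\}$ by prefix sums. Then
\[
q_i=\sum_{x=0}^{mB}\Pr(u_i(A_i)=x)\prod_{j\ne i}F_{ij}(x),
\]
which is exactly the conditioning step in the proof of \Cref{thm:approx-logconcave-independent}, valid because the bundle utilities of agent $i$ are mutually independent.

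Evaluated naively, this sum costs $O(n\cdot mB)$ per agent, i.e., $O(n^2m(B+1))$ in total, matching the first term. The prefix sums cost $O(nm(B+1))$ per agent, which is absorbed. Multiplying the $n$ values $q_i$ adds only $O(n)$ more operations.

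The main obstacle is the claim of pseudo-polynomial \emph{time}, not just a pseudo-polynomial count of arithmetic operations. The issue is that the intermediate quantities are rationals whose bit lengths can grow under repeated convolution and multiplication. To handle this:
\begin{itemize}
\item Every probability computed is a polynomial of degree at most $m$ in the input probabilities (for the product over $j\ne i$ and the multiplication by $\Pr(u_i(A_i)=x)$, at most $m$ item-probabilities multiply in each term, since the bundles are disjoint). Summing at most $(m(B+1))^{O(1)}$ such terms only adds logarithmically many bits.
\item Put everything over a common denominator $L$, the product of all input denominators. Each intermediate quantity can then be written with denominator $L^{O(m)}$, or better $L_i^{\,\cdot}$ organized per item. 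Its numerator is bounded by this denominator times a count that is polynomial in $m$ and $B$.
\item Hence all bit lengths stay polynomial in the input size and $\log(mB)$. Each arithmetic operation therefore runs in polynomial time, and the total running time is pseudo-polynomial.
\end{itemize}
I would carefully bound the degree of the final product $\prod_i q_i$. Each $q_i$ involves each item's probabilities at most once, so the product over $i$ has total degree at most $nm$.
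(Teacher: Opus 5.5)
Your proposal is correct and follows essentially the same route as the paper. You factor over agents, compute each pmf of $u_i(A_j)$ by item-by-item convolution, take cdfs by prefix sums, and evaluate $q_i=\sum_x \Pr(u_i(A_i)=x)\prod_{j\ne i}F_{ij}(x)$, with the same operation counts. Your bit-length bound (denominators dividing $L^{O(nm)}$) is coarser than the paper's observation that every intermediate value has a denominator dividing the product of the input denominators, but it still suffices because every quantity is a probability in $[0,1]$.
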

\begin{proof}
	Fix an allocation $A=(A_1, \ldots, A_n)$. Then $\supp(u_i(A_j)) \subseteq \{0, \ldots, B|A_j| \}$ holds, and we can compute the pmf of $u_i(A_j)$ by dynamic programming. Let $a_j = |A_j|$ and by enumerating $A_j = \{g_{j,1}, \ldots, g_{j, a_j}\}$, define $d(r,s)= \Pr(u_i(g_{j,1})+ \cdots + u_i(g_{j,r})=s)$, while letting $d(0,0)=1, d(0,s) =0$ for $s \ne 0$ and $d(r,s)=0$ for $r<0$.
	By independence, we have 
	$$d(r+1,s) = \sum_{t=0}^B \Pr(u_i(g_{j,r+1})=t)d(r,s-t).$$
	Hence, filling out all $a_j \cdot (Ba_j+1)$ slots of the table, we acquire the pmf of $u_i(A_j)$: $p(x)=d(a_j, x)$. Since each slot takes $O(B+1)$ arithmetic operations to compute, computing the pmf of $u_i(A_j)$ takes $O(a_j^2 (B+1)^2)$ operations. 
    Hence, computing all pmfs for $\{u_i(A_j)\}_{i,j}$ takes $O(nm^2 B^2)$ operations. Then, we can compute the ex-post EF probability
	\begin{align*}
		&\Pr(u_i(A_i) \ge u_i(A_j), \forall i,j \in N) \\ 
		&= \prod_{i \in N} \Pr(u_i(A_i) \ge u_i(A_j), \forall j \in N) \\
		&=  \prod_{i \in N} \sum_{t=0}^{mB}  \Bigl[ \Pr(u_i(A_i) = t)\cdot \prod_{j \ne i} \Pr(u_i(A_j) \le t) \Bigr]
	\end{align*}
	in $O(n^2m(B+1))$ operations, as we can compute the cdfs for all $u_i(A_j)$ beforehand in $O(n^2m(B+1))$ operations. Hence, we have $O(n^2m(B+1)+ nm^2(B+1)^2)$ arithmetic operations in total.
    Since every intermediate probability can be expressed with a denominator dividing the product of the denominators of the input probabilities, its bit length is polynomial in the input size. Hence, the arithmetic-operation bound also yields a pseudo-polynomial running time in the bit model.
\end{proof}

In fact, when we allow utilities to be \emph{correlated}, even only between utilities $u_1(j), \ldots, u_n(j)$ regarding the same item $j$, it turns out we cannot expect a pseudo-polynomial-time algorithm.

\begin{restatable}{theorem}{strongsharpP}\label{thm:strongsharpP}
	If utilities for the same item may be correlated across agents, computing the ex-post EF probability of a given allocation is strongly $\#$P-hard, even with binary lottery items.
\end{restatable}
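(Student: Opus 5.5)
We reduce from a strongly $\#$P-complete counting problem whose numbers are polynomially bounded. The natural candidate is counting perfect matchings in a bipartite graph, i.e.\ the $0/1$ permanent, which is $\#$P-complete even though it contains no large numbers. Correlations across agents for the same item let one binary item act as a ``switch'' affecting several agents at once. That is exactly what is needed to encode combinatorial constraints that independent per-agent utilities (handled by \Cref{prop:evalDP}) cannot express.

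The construction I would try is as follows. Take a bipartite graph $G=(L\cup R,E)$ with $|L|=|R|=k$. For each edge $e$, create a binary lottery item $e$ whose realizations are perfectly correlated across agents: $u_i(e)=Z_e$ for all $i$, with $Z_e\sim\mathrm{Ber}(1/2)$ independent across $e$. A realization then corresponds to a subset $F\subseteq E$. We want ex-post EF to hold exactly when $F$ meets some structured condition, so that the EF probability equals $(\#\text{good }F)/2^{|E|}$.

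To express ``every vertex has degree exactly $1$ in $F$,'' assign to each vertex $v$ a pair of agents. Agent $v^+$ holds the items incident to $v$ (binary, shared realizations) and compares against a bundle that deterministically yields $1$. This bundle can be a single item with $u(\cdot)=1$ a.s., possibly of value $0$ to other agents, so here correlations across agents are used with the item's values being agent-dependent. Under this comparison, $v^+$ is envy-free iff $\deg_F(v)\ge1$. A second agent $v^-$ enforces $\deg_F(v)\le1$: its own bundle has value $1$ a.s., and it values some bundle as the sum of $Z_e$ over edges incident to $v$. Every item appears in exactly one bundle, but each agent may value items arbitrarily, with a per-item correlated joint distribution (e.g.\ $u_i(e)=Z_e$ for the relevant agents and $0$ for the others). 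I would therefore arrange the bundles so that each agent's own bundle and one ``target'' bundle carry the relevant indicator sums, while every other bundle is valued $0$. The EF probability then becomes $\#\{\text{perfect matchings}\}/2^{|E|}$, and because all values lie in $\{0,1\}$, the hardness is strong.

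The main obstacle is the bookkeeping. Each item sits in only one bundle, but it must contribute to the own-bundle value of $v^+$ for $v\in L$ and of $u^+$ for $u\in R$, which is impossible if the two endpoints' agents hold different bundles. I would first try to handle this by duplicating each edge into two items $e_L,e_R$ that share the same realization $Z_e$ (they are correlated across agents, and if needed across items through a per-item joint law). If the statement insists that only same-item utilities may be correlated, I would instead use the comparison direction. Agent $v^+$ owns the constant-$1$ bundle and looks at a bundle $B$ containing all edges. Agent $v^+$ values only the items incident to $v$ as $1-Z_e$. For the other agents, each bundle's relevant value is made a function of items whose utilities are correlated within the item, so one $Z_e$ can feed several agents' evaluations of different bundles simply because those agents evaluate the same bundle. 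The key insight is that correlation within an item lets all agents evaluate the single ``edge bundle'' consistently. Then $2k$ agents each own a constant item and all edges lie in one bundle, owned by a dummy agent who values it highly.

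If exact-degree constraints prove awkward, a fallback is to reduce from $\#$Monotone-2SAT or from counting independent sets or vertex covers. Each clause becomes an agent whose non-envy condition is ``at least one of two correlated bits is $1$,'' which fits the $v^+$ gadget directly.
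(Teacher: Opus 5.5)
Your working construction (a dummy agent holds every correlated binary item and values everything at $0$, while each constraint agent holds deterministic unit items and reads the dummy's bundle through per-item values $Z_e$ or $1-Z_e$) is the same gadget as the paper's. The paper reduces from \textsc{\#3SAT}: clause agent $i$ holds two unit guard items and values variable item $j$ as $1-Y_j$ or $Y_j$ according to the literal's sign. It is therefore non-envious iff at most two of its literals are false. However, two of the steps you actually wrote down fail.

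\textbf{Duplicating edges.} Splitting an edge into $e_L,e_R$ with a shared realization $Z_e$ is correlation across items. The statement rules this out, since the vectors $\mathbf{u}(j)$ must be independent across $j$, so that fix is not available.

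\textbf{The threshold for $v^+$.} In your comparison-direction version, suppose $v^+$ owns a bundle worth $1$ and values each incident edge item as $1-Z_e$. Then $v^+$ is non-envious iff $\sum_{e\ni v}(1-Z_e)\le 1$, i.e.\ $\deg_F(v)\ge\deg_G(v)-1$, not $\deg_F(v)\ge 1$. For vertices of degree at least $3$, the EF probability no longer equals the number of perfect matchings divided by $2^{|E|}$. The repair is as follows.
\begin{enumerate}
\item Give $v^+$ exactly $\deg_G(v)-1$ deterministic unit items. With binary items the threshold must be built from several guards, just as the paper uses two guards for the 3-literal threshold.
\item Treat graphs with an isolated vertex separately, since the count is then $0$.
\item Keep $v^-$ with one unit item, valuing incident edges as $Z_e$, to enforce $\deg_F(v)\le 1$.
\end{enumerate}
With these $4k$ constraint agents (not $2k$) plus the dummy, the EF probability is the number of perfect matchings of $G$ divided by $2^{|E|}$, and all values lie in $\{0,1\}$.

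\textbf{The fallback.} Your \textsc{\#Monotone-2SAT} fallback is already correct as stated. An agent for clause $(x_a\vee x_b)$ owning one unit item and valuing items $a,b$ as $1-Z_a,1-Z_b$ is non-envious iff $Z_a=1$ or $Z_b=1$. This is the paper's gadget restricted to positive 2-clauses with one guard.

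\textbf{Comparison with the paper.} The remaining difference is only the source problem. \textsc{\#3SAT} needs nothing beyond the parsimonious Cook--Levin reduction, whereas the permanent and \textsc{\#Monotone-2SAT} rest on Valiant's deeper completeness results. All three problems are number-free, so each gives strong hardness.
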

\noindent
We leave the definitions, proof, and details in \Cref{sec:strongsharpP}.

These hardness results show that ex-post problems are substantially hard to treat in general. 
We cannot generally expect for a certificate for optimal ex-post EF probability unless $\mathrm{NP}=\mathrm{coNP}$, and even computing ex-post EF probability was hard.
However, results in \Cref{sec:EFcompat} found instances where an ex-ante EF allocation maximizes the ex-post EF probability. In such cases, we can ensure optimality for the ex-post problem via the ex-ante problems, even without evaluating the probability.


\section{Identical Valuations}\label{sec:identical}
Our main model assumes that all agent--item utilities are mutually independent. We now consider a different form of uncertainty: every item has the same realized utility for all agents, while values of different items remain independent. Consider the following example.

\begin{example}
    Consider two people dividing 100 one-dollar bills and a lottery ticket that pays 10,000 dollars with probability $0.01$ and zero otherwise.

    One possible allocation gives $100$ dollars to one person and the lottery ticket to the other. Both bundles have expected value $100$, so the allocation is ex-ante EF. However, once the ticket is realized, one person envies the other regardless of the outcome.

    Another option gives each person $50$ dollars and gives the ticket to one of them. The person without the ticket envies the other ex ante. However, with probability $0.99$, the ticket pays zero and both people receive $50$, so the allocation is ex-post EF.
\end{example}

In this example, the ticket has a single realized value shared by both agents. More generally, under \emph{identical valuations}, $u_1(j)=\cdots=u_n(j)$ a.s. for each item $j\in M$. We denote this common utility by $u(j)$ and its distribution by $\mathcal{D}_j$. The random variables $\{u(j)\}_{j\in M}$ remain mutually independent, nonnegative, and of finite mean. Thus, the agents have both the same utility distribution and the same realized utility for each item.

Under identical valuations, an allocation $A$ is ex-post EF if and only if all bundles have the same realized utility. Its ex-post EF probability is therefore
$$\Pr\bigl(u(A_1)=\cdots=u(A_n)\bigr).$$
Similarly, $A$ is ex-ante EF if and only if $\Exp[u(A_1)]=\cdots=\Exp[u(A_n)]$. This observation allows us to extend the convolution-power optimality results and computational guarantees, while also explaining why the positive probability guarantees need not extend.

\subsection{Ex-Ante and Ex-Post Optimality}
The optimality guarantees in \Cref{sec:EFcompat} extend to identical valuations. The counterexample for three agents also has an analogue, given in \Cref{prop:identical-general-n-counterexample}.

\begin{proposition}\label{prop:identical-optimality}
    Under identical valuations, let $\mu$ be a distribution and assume $\mathcal{D}_j\in\mathcal{C}_\mu$ for every item $j$. If $n=2$, or $\mu$ is non-atomic, or $\mu$ is log-concave, then an ex-ante EF allocation maximizes ex-post EF probability.
\end{proposition}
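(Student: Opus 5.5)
Under identical valuations, the bundle utilities $u(A_1),\ldots,u(A_n)$ are sums over disjoint sets of mutually independent item utilities, so they are mutually independent. The plan is therefore to reduce each case to a statement about independent convolution powers, and then reuse the machinery of \Cref{sec:EFcompat}.

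\emph{Setup.} Let $X\sim\mu$. If $\Exp[X]=0$, then all utilities vanish a.s. and every allocation is EF both ex ante and ex post, so assume $\Exp[X]>0$. Write $\mathcal{D}_j=\mu^{*k_j}$, and let $k(M')$ and $K$ be as before. Then $u(A_i)\sim\mu^{*k(A_i)}$ (with $\mu^{*0}=\delta_0$), and $\Exp[u(A_i)]=k(A_i)\Exp[X]$. Hence $A$ is ex-ante EF if and only if $k(A_1)=\cdots=k(A_n)=K/n$. Taking $S^{(i)}_{a_i}$ to be independent with $S^{(i)}_{a_i}\sim\mu^{*a_i}$ and $a_i=k(A_i)$, the ex-post EF probability is
\[
\Psi(a)=\Pr\bigl(S^{(1)}_{a_1}=\cdots=S^{(n)}_{a_n}\bigr).
\]
It therefore suffices to show that $\Psi$ is maximized at the balanced vector $a=(K/n,\ldots,K/n)$ among nonnegative integer vectors with $\sum_i a_i=K$.

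\emph{Case $n=2$.} Here $\Psi(a)=\Pr(S_k=S'_{K-k})=p_k^0$ with $k=a_1$. The proof of \Cref{thm:n2homogeneous} already shows, via the characteristic function bound $|\phi_X(t)^k\overline{\phi_X(t)}^{K-k}|=|\phi_X(t)|^K$ and \Cref{lem:atom-inversion}, that $p_k^0\le p_{K/2}^0$. So this case follows directly from that claim.

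\emph{Case $\mu$ non-atomic.} Every $\mu^{*a}$ with $a\ge1$ is non-atomic, because a sum of independent variables, one of which is non-atomic, is non-atomic. Since $X\ge0$ and $X$ is non-atomic, $\Pr(S_a=0)=0$ for $a\ge1$. Now consider any allocation with $m\ge1$, so that some bundle, say $A_1$, has $a_1\ge1$.
\begin{itemize}
\item If another bundle $A_2$ also has $a_2\ge1$, then $\Pr(S^{(1)}_{a_1}=S^{(2)}_{a_2})=0$ by independence and non-atomicity.
\item If instead some bundle $A_i$ has $a_i=0$, then equality requires $S^{(1)}_{a_1}=0$, which has probability $0$.
\item Otherwise $A_1$ is the only nonempty bundle, and then some other bundle is empty, which is the previous subcase.
\end{itemize}
Hence every allocation has ex-post EF probability $0$, and any ex-ante EF allocation is trivially optimal.

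\emph{Case $\mu$ log-concave.} A continuous log-concave $\mu$ is non-atomic, so it is covered by the previous case, and it remains to treat discrete $\mu$ supported on $\mathbb{Z}_{\ge0}$. Here
\[
\Psi(a)=\sum_{x\ge0}\prod_{i=1}^n p_{a_i}(x),
\]
which is simpler than the function $\Phi$ in \Cref{thm:logconcave-compat}. I would follow the same balancing scheme. Suppose $a_i<K/n<a_j$, and move one unit from $a_j$ to $a_i$ to obtain $a'$. By \Cref{lem:weight-balancing}~\labelcref{item:pmf-balancing},
\[
p_{a_i}(x)p_{a_j}(x)\le p_{a_i+1}(x)p_{a_j-1}(x)\quad\text{for every }x.
\]
Multiplying by the nonnegative factor $\prod_{\ell\ne i,j}p_{a_\ell}(x)$ and summing over $x$ gives $\Psi(a')\ge\Psi(a)$. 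Each such step decreases $\sum_\ell|a_\ell-K/n|$ by $2$, so finitely many steps reach the balanced vector, and $\Psi$ is maximized there.

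\emph{Main obstacle.} There is little real difficulty, since only the pmf part~\labelcref{item:pmf-balancing} of \Cref{lem:weight-balancing} is needed. The point requiring care is the initial reduction: that identical valuations still make the bundle values independent across bundles, and that ex-ante EF is equivalent to balanced $k(A_i)$. A second point to handle carefully is the degenerate situations with empty bundles ($a_i=0$, $S_0\equiv0$) in the non-atomic case.
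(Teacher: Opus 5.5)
Your proposal is correct and follows the paper's proof essentially step for step. The $n=2$ case reuses the bound $p_k^0\le p_{K/2}^0$ from the proof of \Cref{thm:n2homogeneous}; the non-atomic case holds vacuously because every allocation has ex-post EF probability zero; and the discrete log-concave case balances $\sum_{x\ge0}\prod_i p_{a_i}(x)$ using only \Cref{lem:weight-balancing}~\labelcref{item:pmf-balancing}, with your bullet-by-bullet treatment of empty bundles being a slightly more explicit version of the paper's one-line argument.
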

\begin{proof}
    Let $X\sim\mu$. Again, assume w.l.o.g. $\mathbb{E}[X]>0$.
    We use the notation of \Cref{sec:EFcompat}: $S_k$ is a sum of $k$ independent copies of $X$, with $S_0\equiv0$; $u(j)\overset{d}{=}S_{k_j}$; $k(B)=\sum_{j\in B}k_j$; and $K=k(M)$. An allocation is ex-ante EF exactly when $k(A_i)=K/n$ for all $i$. In particular, $K/n$ is an integer.

    For $n=2$, the ex-post EF probability is $\Pr(S_{k(A_1)}=S'_{K-k(A_1)})$, where the two sums are independent. The proof of \Cref{thm:n2homogeneous} shows that this probability is maximized at $k(A_1)=K/2$.

    If $\mu$ is non-atomic, every nonempty bundle has a non-atomic, strictly positive value a.s. Independent nonempty bundle values cannot be equal with positive probability, and an empty bundle cannot equal a nonempty one. Thus every allocation has ex-post EF probability zero, and the conclusion holds vacuously.

    It remains to consider discrete log-concave $\mu$. Let $p_k$ be the pmf of $S_k$. For any nonnegative integer vector $a=(a_1,\ldots,a_n)$ with $\sum_i a_i=K$, define
    $$\Phi(a)=\sum_{x\ge0}\prod_{i\in N}p_{a_i}(x).$$
    For an allocation with $k(A_i)=a_i$, this is its ex-post EF probability, since the bundle values are independent. We maximize $\Phi$ over all such vectors, including those that need not correspond to an allocation.

    If $a\ne(K/n,\ldots,K/n)$, choose $i,j$ with $a_i<K/n<a_j$, and let $a'$ be obtained by increasing $a_i$ by one and decreasing $a_j$ by one. By \Cref{lem:weight-balancing}~\labelcref{item:pmf-balancing},
    $$\Phi(a')-\Phi(a) = \sum_{x\ge0}\bigl(p_{a_i+1}(x)p_{a_j-1}(x)-p_{a_i}(x)p_{a_j}(x)\bigr)
      \prod_{\ell\notin\{i,j\}}p_{a_\ell}(x) \ge 0.$$
    Each step reduces $\sum_i|a_i-K/n|$ by two. Repeating the argument reaches the balanced vector, proving that every ex-ante EF allocation maximizes the probability.
\end{proof}

The Poisson guarantee in \Cref{cor:poisson} also extends to identical valuations. For rational rates, apply \Cref{prop:identical-optimality} with a common Poisson generator. For real rates, use the same rational approximation argument as in the proof of \Cref{cor:poisson}, with equality of the bundle values as the event whose probability is evaluated.

\subsection{Ex-Post Fairness Guarantees}
The positive probability bounds in \Cref{thm:approx-logconcave-independent,thm:approx-symmetric-independent} do not extend to identical valuations, because the agents' non-envy events need not be independent. For example, consider two agents and two items with independent common values $u(1),u(2)\sim\mathrm{Unif}[0,1]$. Giving one item to each agent is ex-ante EF, but its ex-post EF probability is $\Pr(u(1)=u(2))=0$.

More generally, for $n\ge2$ and at least one item, non-atomic common item values imply that every allocation has ex-post EF probability zero. Consequently, the multiplicative guarantees in \Cref{thm:0.930approx,thm:symmetric-compat} hold only vacuously: both the achieved probability and the optimal probability are zero.

\subsection{Computational Complexity}
We use the computational assumptions of \Cref{sec:complexity}: each common utility $u(j)$ is nonnegative and integer-valued, with finite support and rational probabilities given by an explicit pmf. The weak counting hardness and pseudo-polynomial evaluation guarantee both extend to identical valuations.

\begin{proposition}\label{prop:identical-complexity}
    Under identical valuations, computing the ex-post EF probability of a given allocation is weakly $\#$P-hard, even for $n=2$. Let $B$ be the largest value in the support of any $u(j)$. The probability can be computed in $O(nm(B+1)+m^2(B+1)^2)$ arithmetic operations and in pseudo-polynomial time.
\end{proposition}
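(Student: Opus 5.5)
The proof has two parts, hardness and the algorithm, and each adapts an argument from \Cref{sec:complexity}.

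\textbf{Hardness.} I will reduce from \textsc{$\#$Partition}, as in \Cref{thm:weaksharpP}. Given positive integers $w_1,\ldots,w_m$, create items $1,\ldots,m$ with common values $u(j)=2w_j$ or $0$, each with probability $\frac12$, and an item $m+1$ with $u(m+1)=W:=\sum_j w_j$ a.s. Consider the allocation $A=([m],\{m+1\})$. Under identical valuations, it is ex-post EF exactly when $u([m])=W$. Its ex-post EF probability is therefore
\[
p=\frac{\#\{S\subseteq[m]:\sum_{j\in S}2w_j=W\}}{2^m},
\]
and $2^m p$ is the number of balanced partitions. This reduction is even more direct than the independent case, since it needs no $((1+p)/2)^2$ inversion. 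The instance has polynomial bit size, so the problem is weakly $\#$P-hard for $n=2$.

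\textbf{Algorithm.} Fix $A$. For each bundle $A_i$, compute the pmf of $u(A_i)$ on $\{0,\ldots,B|A_i|\}$ using the dynamic program of \Cref{prop:evalDP}:
\[
d(r+1,s)=\sum_{t=0}^{B}\Pr(u(g_{r+1})=t)\,d(r,s-t).
\]
This takes $O(|A_i|^2(B+1)^2)$ operations per bundle. Summing over the $n$ bundles gives $O(m^2(B+1)^2)$, since $\sum_i|A_i|^2\le m^2$. In contrast to \Cref{prop:evalDP}, there is only one pmf per bundle rather than one per agent--bundle pair, which removes a factor of $n$. The bundle values are independent, because the bundles are disjoint and item values are independent. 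Hence
\[
\Pr(u(A_1)=\cdots=u(A_n))=\sum_{x=0}^{mB}\prod_{i=1}^n\Pr(u(A_i)=x),
\]
which costs $O(nm(B+1))$ operations. This gives the total $O(nm(B+1)+m^2(B+1)^2)$.

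\textbf{Bit complexity.} Every intermediate quantity is a sum of products of input probabilities. Its denominator therefore divides the product of the input denominators, so its bit length is polynomial in the input size and the operation count yields pseudo-polynomial time.

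\textbf{Main obstacle.} There is little difficulty beyond careful accounting. The main points are the independence of bundle values, which justifies the product formula, and the bound $\sum_i|A_i|^2\le m^2$ in the running time.
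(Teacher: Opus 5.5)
Your proposal is correct and follows the paper's proof essentially step for step. The hardness part uses the same \textsc{$\#$Partition} reduction, with items worth $2w_j$ or $0$ and a deterministic item worth $W$. The algorithm uses the same per-bundle dynamic program, the bound $\sum_i |A_i|^2 \le m^2$, the independence of disjoint bundle values, and the same denominator argument for bit length. The only difference is cosmetic: the paper treats empty bundles separately, while your dynamic program covers them through the base case $d(0,0)=1$.
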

\begin{proof}
    For hardness, reduce from \textsc{$\#$Partition}. Given positive integers $w_1,\ldots,w_r$, let $W=\sum_{j=1}^r w_j$. Create $r+1$ items with independent common values: for $j\in[r]$, let $u(j)=2w_j$ with probability $1/2$ and $u(j)=0$ otherwise, and let $u(r+1)=W$ a.s. The allocation $A=([r],\{r+1\})$ has ex-post EF probability
    $$\Pr\left(\sum_{j=1}^r u(j)=W\right)
    =\frac{\#\{S\subseteq[r]:\sum_{j\in S}2w_j=W\}}{2^r}.$$
    Multiplying this probability by $2^r$ gives the number of subsets $S$ satisfying $\sum_{j\in S}w_j=\sum_{j\in[r]\setminus S}w_j$, as required.

    For the algorithm, let $b_i=|A_i|$. As in \Cref{prop:evalDP}, dynamic programming computes the pmf of each nonempty bundle value $u(A_i)$ in $O(b_i^2(B+1)^2)$ arithmetic operations. Empty bundles have value zero a.s. Since $\sum_i b_i^2\le m^2$, computing all nonempty-bundle pmfs takes $O(m^2(B+1)^2)$ operations. Independence across bundles then gives
    $$\Pr\bigl(u(A_1)=\cdots=u(A_n)\bigr)
    =\sum_{t=0}^{mB}\prod_{i\in N}\Pr(u(A_i)=t).$$
    Evaluating this expression takes $O(nm(B+1))$ operations, giving the stated total. As in \Cref{prop:evalDP}, the intermediate rational probabilities have bit length polynomial in the input size, so the arithmetic-operation bound yields a pseudo-polynomial running time.
\end{proof}

\section{Conclusion}

We studied the problem of fairly dividing uncertain items. The exogenous uncertainty yielded the ex-ante and ex-post notions of fairness. Focusing on envy-freeness, we were able to identify situations where an ex-ante EF allocation maximizes ex-post EF probability. 
A natural direction for future work will be considering other fairness notions such as EF1, a relaxed notion of EF.

\bibliography{lotteries}

\clearpage
\appendix
\renewcommand{\theHtheorem}{\thesection.\arabic{theorem}}

\section*{Disclosure of AI Usage}
We have used AI models (ChatGPT 6 Astra, 5.6 Sol, 5.5, and 5.4) in the following ways.
First, we have used them to search for related work on the topic. In particular, we asked which works model utilities as random variables in the fair division literature. 
Next, we also used them to search for desirable properties of probability distributions. In particular, the tools helped identify the anti-concentration inequalities of \citet{lovasz2007inequality} and \citet{alqasem2024feige}.
We also used them to generate counterexamples or tight examples. For example, the counterexample for \Cref{prop:general-n-fails} was found by the AI models.
Finally, we used them to check the proofs of the theorems, sometimes yielding an improved result.

\section{Appendix: Proofs Omitted in the Main Body}\label{sec:omitted-proofs}
\subsection{Counterexamples for Three Agents}\label{sec:genera-n-fails}
\generalnfails*
\begin{proof}
    Consider $3000$ items. All agent--item utilities $u_i(j)$ are mutually independent copies of a random variable with distribution
    \[
     \mu=\frac{998}{1000}\delta_1
           +\frac{1}{1000}\operatorname{Unif}[1,2]
           +\frac{1}{1000}\operatorname{Unif}[2000,8000].
    \]
    Call the three mixture components base, low, and high, respectively, and write $p=998/1000$ and $r=1/1000$. Since every item has the same positive expected utility, the ex-ante EF allocations are exactly those with bundle sizes $(1000,1000,1000)$. They all have the same ex-post EF probability.

    We first compute the ex-post EF probability of the balanced, ex-ante EF allocation $A$. 
    Then $u_i(A_j)$ for $1 \le i,j \le 3$ are i.i.d. random variables with the distribution having only one atom at $1000$, and is otherwise non-atomic and greater than $1000$.
    Hence, for agent $1$, the valuations $u_1(A_1), u_1(A_2), u_1(A_3)$ are all equal to $1000$ with probability $R=p^{3000}$ or there exists a unique maximum with probability $1-R$.
    Agent $1$ is envy-free in the former case, and is envy-free with probability $1/3$ in the latter case by symmetry.
    Hence, agent $1$ is envy-free with probability $q_1 = R + \frac{1-R}{3} = \frac{1+2R}{3}$, and $A$ is ex-post envy-free with probability    
    \[
     \Phi(A)=\left(\frac{1+2R}{3}\right)^3
     <\left(\frac{67}{200}\right)^3,
    \]
    using $\frac1{410}<R<\frac1{400}$.

    Now consider an allocation $B$ with bundle sizes $(998,1001,1001)$.
    Then writing $q_i$ for the probability of agent $i$ being envy-free, since $q_2=q_3$ by symmetry, we have ex-post envy-freeness probability $\Phi(B)=q_1q_2^2$. We first bound $q_1$ and then bound $q_2$ in terms of $q_1$.

    We first show that $\frac{5}{16} < q_1 \le \frac{1}{3}$. Trivially, since each item follows the same distribution which realizes to a positive value, we have $q_1 \le \frac{1}{3}$. To show a lower bound, we decompose the utility (to agent $1$) of $1001$-item bundles of agent $2$ into utility of $998$ items $U_2$ and $3$ items $R_2$: $u_1(B_2) = U_2 + R_2$. Similarly for agent $3$: $u_1(B_3) = U_3 + R_3$. Then, $u_1(B_1), U_2, U_3$ are i.i.d. Here, the probability of there existing an item realizing among these three bundles in $[2000,8000]$ is $1-(\frac{999}{1000})^{2994}$. In this event, there exists a unique maximum among $u_1(B_1), U_2, U_3$. By symmetry, the probability of $u_1(B_1)$ being the unique maximum within this event is $\frac{1}{3} \left( 1-(\frac{999}{1000})^{2994} \right)$. Let us call this event $F$.

    We further compute the probability of $u_1(B_1)$ is bigger than $U_2$ and $U_3$ by at least $6$. First consider the case that we get $0< u_1(B_1) -U_2 <6$, within the event $F$.
    That is, $\Pr(0< u_1(B_1) -U_2 <6, F)$. 
    First note that $F$ implies that $B_1$ contains a high-component item. If not, its utility is at most $1996$, whereas either $U_2$ or $U_3$ contains a high-component item which has utility at least $2000$. Then let $j$ be the high item in $B_1$ with the lowest index. Now, without conditioning on $F$, condition on all other utilities than $u_1(j)$, so that $u_1(B_1 \setminus \{j\})$ and $U_2$ are fixed numbers. Then, for $0< u_1(B_1) -U_2 <6$ to occur, $u_1(j)$ must realize in between an interval of length $6$ (which might not be contained in $[2000, 8000]$). Since utilities are independent and $u_1(j)$ is conditioned to be of high type, this probability is  at most $6 \cdot \frac{1}{6000} = \frac{1}{1000}$. Adding $F$ does not increase the probability, yielding the bound.
    The same argument applies to $U_3$. Hence, the probability of $u_1(B_1)$ being bigger than $U_2$ and $U_3$ by at least $6$ is at least $\frac{1}{3} \left( 1- \allowbreak (\frac{999}{1000})^{2994} \right) - \frac{1}{500}$.

    Finally, for $u_1(B_1)$ to be maximum among $u_1(B_1), u_1(B_2), u_1(B_3)$, it suffices that all items in $R_2$ and $R_3$ realize to at most $2$, yielding $R_2 \le 6, R_3 \le 6$. This is independent the previous event and has probability $\left( \frac{999}{1000}\right)^6$.
    Hence, we obtain the lower bound of
    $$q_1 \ge \left( \frac{999}{1000}\right)^6 \left[\frac{1}{3} \left( 1-\left(\frac{999}{1000}\right)^{2994} \right) - \frac{1}{500} \right]>\frac{5}{16}.$$

    Now we bound $q_2$ in terms of $q_1$. 
    Note that since $u_1(B_j), u_2(B_j),\allowbreak u_3(B_j)$ are i.i.d., we can express $q_2$ in terms of agent $1$'s valuation: $q_2 = \Pr(u_1(B_2) \ge u_1(B_1), u_1(B_2) \ge u_1(B_3) )$. 
    Then let $Z_i$ be the indicator random variable on whether $u_1(B_i) = \max_{j\in[3]} u_1(B_j)$ and let $Z$ be the number of bundles that attains the maximum among $u_1(B_1),u_1(B_2),u_1(B_3)$. Then, we have $\Exp[Z] = \Exp[Z_1] + \Exp[Z_2] + \Exp[Z_3]=q_1 + 2q_2$. Since $Z \ge 1$ a.s., we have $\Exp[Z] \ge 1 + \Pr(Z \ge 2)$. For the event $\{Z \ge 2\}$, it includes the event $E$ where all $1001$ items in $B_2$ and all $1001$ items in $B_3$ realize to $1$ for agent $1$, and at most $3$ items in $B_1$ realize to a value in $(1,2]$ while others realize to $1$.
    Then, we have 
    \begin{align*}
        \Pr(Z \ge 2) &\ge \Pr(E) \\
        &= p^{2002} \sum_{j=0}^3 \binom{998}{j} \left(\frac{1}{1000}\right)^j p^{998-j} \\
        &= p^{3000} \sum_{j=0}^3 \binom{998}{j} \left(\frac{1}{1000p}\right)^j \\
        &> \frac{1}{160}.
    \end{align*}

    Thus, we have $q_1 + 2q_2 \ge 1 + \Pr(Z \ge 2) > 1+\frac{1}{160}$, or in other words,
    $$q_2 > \frac{1+\frac{1}{160}-q_1}{2}.$$
    
    Hence, we have $\Phi(B) > q_1 \left(\frac{1+\frac{1}{160}-q_1}{2} \right)^2$. Since the right hand side is increasing in $0 \le q_1 \le 1/3$ and $q_1 > \frac{5}{16}$ as we attained before, we have
    $$\Phi(B) > \frac{5}{16} \left(\frac{1+\frac{1}{160}-\frac{5}{16}}{2} \right)^2 = \frac{5}{16} \left(\frac{111}{320}\right)^2 > \left(\frac{67}{200}\right)^3 > \Phi(A).$$

    Thus, no ex-ante EF allocation maximizes the ex-post EF probability.
\end{proof}

There is also a counterexample for identical valuations.

\begin{proposition}\label{prop:identical-general-n-counterexample}
    There exists an instance with $n=3$ and identical valuations such that the common item values $u(j)$ are i.i.d. and an ex-ante EF allocation does not maximize ex-post EF probability under identical valuations.
\end{proposition}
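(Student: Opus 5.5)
The plan is to give an explicit instance and compare two allocations directly. Under identical valuations with i.i.d.\ common values, the ex-ante EF allocations are exactly those with equal bundle sizes. For a size vector $a=(a_1,a_2,a_3)$, the ex-post EF probability is
\[
\Phi(a)=\sum_{x}\prod_{i=1}^{3}p_{a_i}(x),
\]
where $p_k$ is the pmf of the sum of $k$ i.i.d.\ copies of $u(j)$. So it suffices to find $\mu$ and $m=3k$ with $\Phi(a)>\Phi(k,k,k)$ for some unbalanced $a$.

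\textbf{Which $\mu$ to use.} \Cref{prop:identical-optimality} rules out non-atomic $\mu$ and log-concave $\mu$. It also rules out $\mu$ that is an affine image of a log-concave lattice law, such as two-point laws. Moreover, any continuous component only lowers the probability of equality, so $\mu$ should be purely atomic with a non-interval (gapped) support. My first candidate is
\[
\mu=(1-r)\,\delta_1+\tfrac{r}{2}\,\delta_0+\tfrac{r}{2}\,\delta_3 .
\]
It has mean $1+r$, and its support $\{0,1,3\}$ is not an integer interval. The intuition follows the three-agent counterexample in \Cref{sec:genera-n-fails}. In the balanced allocation, for small $r$, equality essentially requires all items to be base ($=1$) or a very restricted set of deviation patterns. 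In an unbalanced allocation such as $(k-1,k,k+1)$, the size differences can be absorbed by cheap patterns: one extra $0$ in the larger bundle, or a $3$ paired with two $0$'s.

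\textbf{Steps.} First I will expand both $\Phi(k,k,k)$ and $\Phi(k-1,k,k+1)$ to low order in $r$, counting the configurations of non-base items that make the three bundle sums equal, each weighted by $r^{\#\text{deviations}}$. Second, I will choose $k$ (possibly large, with $kr$ of order one, as in \Cref{sec:genera-n-fails}) so that the leading terms favour the unbalanced vector. Third, I will bound the neglected tails crudely, for example by $\Pr(\ge t \text{ deviations})$, to certify a strict inequality.

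\textbf{Main obstacle.} The hard step is actually finding parameters for which the unbalanced vector wins. For small $k$ the balanced vector tends to dominate because of the $\prod(1-r)$ factors, so a clever choice is needed. If the $\{0,1,3\}$ family fails, I will fall back on a finite search over small supports with gaps (e.g.\ $\{0,2,3\}$, $\{1,4\}\cup\{0\}$) and small $m\in\{6,9\}$. For each candidate I will compute $\Phi$ exactly with the dynamic program of \Cref{prop:identical-complexity} and take the first instance with $\Phi(a)>\Phi(k,k,k)$. That instance then serves as the certificate: the proof would present it together with the two exact rational probabilities and a short verification of the computation.
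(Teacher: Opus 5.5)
\textbf{Gap: no instance is exhibited.} For an existence claim like this, the proof \emph{is} an explicit instance together with a verified strict inequality. Your text stops exactly at the step you call the hard one.

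Your primary candidate also fails where you would first test it. For any law on $\{0,1,3\}$ with masses $q_0,q_1,q_3$ and $m=6$, direct expansion gives
\[
\Phi(2,2,2)-\Phi(1,1,4)=4q_1^3\bigl(q_0^3+2q_3^3-q_0q_3^2\bigr)+q_1^6+4q_0^3q_3^3+q_3^6\ge 0,
\]
and $\Phi(2,2,2)-\Phi(1,2,3)=2q_1^3\bigl(q_0^3+4q_3^3-q_0q_3^2\bigr)+q_1^6+2q_0^3q_3^3+q_3^6\ge 0$, using AM--GM ($q_0^3+2q_3^3\ge 3q_0q_3^2$). Vectors with an empty bundle only achieve $q_0^6$.

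For larger $m$ the perturbative plan does not work as described. Both non-base deviations $-1,+2$ are $\equiv 2 \pmod 3$, while the total deviation $3T-m$ is $\equiv 0 \pmod 3$. So every unbalanced vector needs at least three non-base items, whereas the balanced vector collects $(1-r)^m$ from the all-base event. Compensating for this forces $mr$ to be of order one. In that regime there is no small parameter, so ``leading terms plus crude tails'' certifies nothing. Indeed, as $k\to\infty$ with $kr$ fixed and sizes $k+s_i$, the three bundles' deviation laws converge to a common law $P$. H\"older then gives $\sum_d \prod_i P(d-s_i)\le\sum_d P(d)^3$, which favours balance.

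Your remark that affine images of log-concave lattice laws are excluded is also unjustified. An offset $c\ne0$ shifts bundle $i$ by $a_ic$, so laws on, e.g., $\{1,3\}$ are not covered by \Cref{prop:identical-optimality}.

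\textbf{What is missing.} You need a mechanism that keeps the balanced collision probability small while an unbalanced one stays large. By H\"older, $\Phi(a)\le\prod_i C(a_i)^{1/3}$ with $C(a)=\sum_x p_a(x)^3$. So $C$ must drop steeply from singletons to pairs: a concentrated singleton law whose two-item sums are spread out. Near a point mass, by contrast, $C(a)\approx(1-r)^{3a}$ decays only geometrically.

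The paper takes six items with $\mu=\frac34\,\mathrm{Unif}\{1,3,\ldots,29\}+\frac14\delta_{60}$. Pair sums fall into the disjoint sets $L+L$, $60+L$ and $\{120\}$, giving $\Phi(2,2,2)=(25425+15\cdot10^3+25^3)/20^6=2242/2560000$. The vector $(1,1,4)$ is EF exactly when both singletons equal $60$ (probability $1/16$) and the four-item bundle consists of light items summing to $60$; parity and size exclude every other pattern. Using $c_s=c_{60-s}$, this gives $\frac1{16}\cdot 2255/20^4=2255/2560000$, which is strictly larger.

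Your fallback search via the dynamic program of \Cref{prop:identical-complexity} might find such an instance. But until one is exhibited and its two probabilities are verified, the proposition is not proved.
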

\begin{proof}    
Consider six items whose utilities all follow the following distribution:
    \[
     \mu=\frac{3}{4}\operatorname{Unif}\{1,3,5, \ldots, 29\} + \frac{1}{4}\delta_{60}.
    \]
Since all items' utilities follow the same distribution, an allocation is ex-ante EF if and only if its bundle sizes are $(2,2,2)$. Under identical valuations, ex-post EF is equivalent to equality of the three bundle values.

Let $L=\{1,3,\ldots,29\}$ and let $c_s$ count the ordered pairs in $L^2$ with sum $s$. The nonzero $c_s$'s form the sequence $1,2,\ldots,14,15,14,\ldots,2,1$. Hence
\[
  \sum_s c_s^2=15^2+2\sum_{j=1}^{14}j^2=2255,
  \quad
  \sum_s c_s^3=15^3+2\sum_{j=1}^{14}j^3=25425.
\]
A two-item bundle has value $s\in L+L$ with probability $c_s/400$, each value in $60+L$ with probability $10/400$, and value $120$ with probability $25/400$. These three sets are disjoint. Therefore an allocation $A$ with bundle sizes $(2,2,2)$ has EF probability
\[
  \Pr(A\text{ is EF})
  =\frac{25425+15\cdot10^3+25^3}{20^6}
  =\frac{2242}{2560000}.
\]
Now consider an allocation $B$ with bundle sizes $(1,1,4)$. It realizes to be EF only when both singleton values are $60$ and the four remaining values belong to $L$ and sum to $60$: four elements of $L$ have even sum, whereas any four-item bundle containing $60$ has value greater than $60$.
By symmetry, $c_s=c_{60-s}$, so there are $\sum_s c_sc_{60-s}=2255$ such ordered quadruples. Thus
\[
  \Pr(B\text{ is EF})
  =\left(\frac14\right)^2\frac{2255}{20^4}
  =\frac{2255}{2560000}
  >\Pr(A\text{ is EF}).
\]
Consequently, no ex-ante EF allocation is ex-post EF probability maximizing.
\end{proof}

\subsection{Proof of \Cref{lem:weight-balancing}}\label{sec:pf-technical}

\weightbalancing*

We first introduce some tools we need: stochastic orders. For two discrete random variables $X$ and $Y$, we say \emph{$Y$ (first-order) stochastically dominates $X$} if $\Pr(X>x) \le \Pr(Y>x)$ for all $x$ and write $X \leST Y$.
We say \emph{$X$ is smaller than $Y$ in the likelihood ratio order} if $\Pr(X=x)\Pr(Y=y) \ge \Pr(X=y)\Pr(Y=x)$ for all $x \le y$ and write $X \leLR Y$. The following implication holds.

\begin{lemma}\label{lem:lr-implies-st}
	Let $X$ and $Y$ be discrete random variables. If $ X \leLR Y$, then $X \leST Y$.
\end{lemma}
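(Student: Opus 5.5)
The plan is to reduce the stochastic-order statement to a summation of the likelihood-ratio inequality over a suitable region, writing $f(x)=\Pr(X=x)$ and $g(y)=\Pr(Y=y)$. The hypothesis $X \leLR Y$ says
\[
f(x)g(y) \ge f(y)g(x) \quad \text{for all } x \le y .
\]
Fix a threshold $t$. The goal is $\Pr(X>t) \le \Pr(Y>t)$. It is equivalent, and more convenient, to show
\[
\Pr(X \le t)\,\Pr(Y>t) \ \ge\ \Pr(X>t)\,\Pr(Y\le t).
\]
Granting this, add $\Pr(X>t)\Pr(Y>t)$ to both sides to get $\Pr(Y>t) \ge \Pr(X>t)$, which is exactly $X\leST Y$.

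To prove the displayed inequality, expand each product as a double sum over the supports:
\[
\Pr(X\le t)\Pr(Y>t)=\sum_{x\le t}\sum_{y>t} f(x)g(y),
\qquad
\Pr(X>t)\Pr(Y\le t)=\sum_{x\le t}\sum_{y>t} f(y)g(x),
\]
where in the second sum I have relabelled the variables so that the smaller index is always called $x$. Every pair $(x,y)$ in the range $x \le t < y$ satisfies $x\le y$. The likelihood-ratio inequality therefore applies termwise, and summing it gives the claim.

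I expect no real obstacle. The only care needed is at the bookkeeping level. First, the sums must be taken over the countable supports; they are nonnegative series, so rearranging them is legitimate. Second, the relabelling in the second sum must be done correctly so that each term is paired with its likelihood-ratio counterpart. Third, the final step adds a finite quantity to both sides, which is valid since all the probabilities involved are finite.
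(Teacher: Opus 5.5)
Your proof is correct: the relabelled double sum, the termwise use of $f(x)g(y)\ge f(y)g(x)$ on $\{x\le t<y\}$, and the final step via $\Pr(X\le t)+\Pr(X>t)=1$ are all valid.

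The paper gives no argument of its own and only cites \citet{ShakedShanthikumar2007}. In that textbook the implication is usually obtained indirectly, by showing that $\leLR$ implies the hazard rate order, which in turn implies $\leST$. Your route is more direct: it sums the likelihood-ratio inequality once over the region $\{x\le t<y\}$. This gives a short, self-contained proof that needs no intermediate order.

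Your argument holds for any pair of probability distributions on a countable set, not only integer-valued ones. It therefore covers every place the paper invokes the lemma. In particular, the conditional laws $\nu_a$ in the proof of \Cref{lem:weight-balancing} and $\nu_k$ in \Cref{lem:weighted-expectation} are genuine (normalized) pmfs, so your final normalization step applies to them.
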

\begin{proof}
	See, e.g., \citet{ShakedShanthikumar2007}.
\end{proof}

Now we proceed to prove \labelcref{item:pmf-balancing,item:cdf-balancing}.
\begin{proof}[Proof of \Cref{lem:weight-balancing}~\labelcref{item:pmf-balancing,item:cdf-balancing}]
	The inequalities are trivial for $b = a+1$. We assume $b > a + 1$.
	
	\paragraph{Proof of \labelcref{item:pmf-balancing}.}
	We first claim that for all integer $a\ge 0$ and integers $x\ge y$,
	$$p_{a+1}(x)p_a(y)\ge p_a(x)p_{a+1}(y).$$
	For $a=0$, this is immediate from $p_0=\delta_0$, the Dirac measure concentrated at $0$. The right-hand side is positive only if $x=y=0$ (as $X \ge 0$), which then matches the left-hand side. 
	Now assume $a \ge 1$. Since $X \ge 0$, we have $p_{a+1}(x)=\sum_{u\ge 0} p_a(x-u)p_1(u)$.
	Therefore,
		\begin{align*}
		&p_{a+1}(x)p_a(y)-p_a(x)p_{a+1}(y)\\
        &\quad = \sum_{u\ge 0}p_1(u)
		\left( p_a(x-u)p_a(y)-p_a(x)p_a(y-u) \right).
		\end{align*}
	
	Since $u \ge 0$ and $x \ge y$, we have either $y-u \le x-u \le y \le x$ or $y-u \le y \le x-u  \le x$. In either case, as $p_a$ is log-concave by \Cref{lem:convolution}, we have $p_a(x-u)p_a(y) \ge p_a(x)p_a(y-u)$.
	Hence, the claim follows as each summand is nonnegative.
	
	Next, we show that our inequality holds.
	Since $a<b$, writing $d = b-a-1 \ge 0$, we have
	\begin{align*}
		&p_{a+1}(x)p_{b-1}(x)-p_a(x)p_b(x) \\ 
		&=
			p_{a+1}(x)
			\left(\sum_{t\ge 0}p_a(x-t)p_d(t) \right) \\
            &\quad -
			p_a(x)
			\left(\sum_{t\ge 0}p_{a+1}(x-t)p_d(t)\right) \\
		&=
		\sum_{t\ge 0}p_d(t)
		\left(
		p_{a+1}(x)p_a(x-t)-p_a(x)p_{a+1}(x-t)
		\right).
	\end{align*}
	
	Since \(t\ge 0\), we have \(x\ge x-t\). Hence by the previous claim, each summand is nonnegative, proving the desired inequality.
	
	\paragraph{Proof of \labelcref{item:cdf-balancing}.}
    We claim ${F_a(x)}{F_{a+2}(x)} \le {F_{a+1}(x)^2}$ holds for every $a\ge 0$ and $x \ge 0$.
    Since the inequality is trivial if $F_{a}(x)=0$, assume $F_a(x) > 0$. If $F_{a+1}(x)=0$, then also $F_{a+2}(x)=0$ holds since $S_{a+2} \ge S_{a+1}$. Hence the inequality is again trivial. Assume $F_{a+1}(x) > 0$.
    
    Define the conditional distribution $\nu_a (y) = \Pr(S_a = y | S_a \le x) = \frac{p_a(y)}{F_a(x)} \mathbf{1}(y \le x)$ (similarly, $\nu_{a+1}$).
	Then, letting $R_x(y) = \Pr(X \le x - y)$, we have
    \begin{align*}
        \frac{F_{a+1}(x)}{F_a(x)} &= \frac{\Pr(S_a + X \le x)}{\Pr(S_a \le x)} \\
        &= \frac{\sum_{y \le x} \Pr(S_a =y) \Pr(X \le x-y)}{F_a (x)} \\
                &= \sum_{y \le x}\frac{p_a(y) R_x(y)}{F_a (x)} \\
        &= \sum_{y} \nu_a(y) R_x(y)
        \\&= \Exp_{Y \sim \nu_a}[R_x(Y)].
    \end{align*}
	We then observe that $\nu_{a}$ is smaller than $\nu_{a+1}$ in the likelihood ratio order, i.e. $\nu_a \leLR \nu_{a+1}$.
    First note that, for any $y\le z\le x$, as in \labelcref{item:pmf-balancing},
    \[
        p_a(y)p_{a+1}(z)
        \ge
        p_a(z)p_{a+1}(y).
    \]
    Therefore,
    \begin{align*}
        \nu_a(y)\nu_{a+1}(z)
        =
        \frac{p_a(y)p_{a+1}(z)}
             {F_a(x)F_{a+1}(x)}
        \ge
        \frac{p_a(z)p_{a+1}(y)}
             {F_a(x)F_{a+1}(x)}
        =
        \nu_a(z)\nu_{a+1}(y).
    \end{align*}
    Thus, $\nu_a\le_{\mathrm{lr}}\nu_{a+1}$, and by
    \Cref{lem:lr-implies-st}, $\nu_a\le_{\mathrm{st}}\nu_{a+1}$.
	
	Since $R_x(y)$ is nonincreasing in $y$, stochastic dominance yields $ \Exp_{Y \sim \nu_{a+1}}[R_x(Y)] \le  \Exp_{Y \sim \nu_a}[R_x(Y)]$, i.e., $\frac{F_{a+2}(x)}{F_{a+1}(x)} \le \frac{F_{a+1}(x)}{F_a(x)}$.
	Iteratively applying the inequality with $a, a+1, \ldots, b-2$ implies the desired inequality.	
\end{proof}

To prove \labelcref{item:indep-balancing}, we need further tools.
The following lemma generalizes the fact that cdf of a log-concave random variable is log-concave.
\begin{lemma}\label{lem:cdf-tail-logconcave}
	Let $f:\mathbb{Z} \to \mathbb{R}_{\ge 0}$ be a summable log-concave function. Then, $\overline F(x)=\sum_{y\ge x} f(y)$ is log-concave.
\end{lemma}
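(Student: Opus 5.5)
We need to prove that if $f:\mathbb{Z}\to\mathbb{R}_{\ge 0}$ is summable and log-concave, then the tail sum $\overline F(x)=\sum_{y\ge x} f(y)$ is log-concave. Two things must be checked: that the support of $\overline F$ is an integer interval, and that $\overline F(x)^2\ge \overline F(x-1)\overline F(x+1)$ for all $x$.

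The support is easy. Write $\supp(f)=\{L,\dots,R\}$, where possibly $L=-\infty$ or $R=\infty$. Then $\overline F(x)>0$ exactly when $x\le R$, so $\supp(\overline F)=\{x\le R\}$, which is an integer interval. For the inequality, put $a=f(x-1)$ and $T=\overline F(x)$, so that $\overline F(x-1)=a+T$ and $\overline F(x+1)=T-f(x)$. The target
\[
T^2\ge (a+T)(T-f(x))
\]
rearranges to
\[
a\,\overline F(x+1)\le f(x)\,T.
\]
If $f(x)=0$ there are two cases. Either $x>R$, and then $\overline F(x+1)=0$. Or $x<L$, and then $a=0$. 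Either way the inequality holds. So we may assume $f(x)>0$.

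Now expand the reduced inequality termwise:
\[
\sum_{k\ge 1} f(x-1)f(x+k)\le \sum_{k\ge 0} f(x)f(x+k).
\]
It suffices to compare the $k$-th term on the left with the $(k-1)$-th term on the right, that is, to show
\[
f(x-1)f(x+k)\le f(x)f(x+k-1)\quad\text{for every } k\ge 1.
\]
This is the step that carries the argument. It is the standard consequence of discrete log-concavity on an interval support: the ratios $f(y+1)/f(y)$ are nonincreasing on the support. That gives $f(x+k)/f(x+k-1)\le f(x)/f(x-1)$ whenever all the quantities involved are positive. When some of them vanish, the claim follows because the support is an interval: if $f(x+k-1)=0$ with $f(x)>0$, then $x+k-1>R$, so $f(x+k)=0$ as well.

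Summing these termwise inequalities over $k\ge1$ gives the left-hand sum bounded by $\sum_{k\ge0}f(x)f(x+k)=f(x)T$, which is the reduced inequality. All series converge absolutely because $f$ is summable, so the termwise comparison and summation are justified.

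I expect the only real care needed is this zero/support bookkeeping in the ratio-monotonicity step, including the case $a=0$ inside the support boundary. Apart from that, the argument is a direct telescoping comparison. The same computation applied to the reflection $y\mapsto -y$ also gives log-concavity of the cdf $F(x)=\sum_{y\le x}f(y)$, which is the fact used earlier in the paper.
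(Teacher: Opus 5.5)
Your proof is correct, and it takes a different route from the paper. The paper normalizes $f$ to a pmf $p=f/Z$ and takes $Y\sim p$. It writes $\overline F(x)=Z\Pr(-Y\le -x)$, notes that $-Y$ is log-concave, and then invokes the fact, stated without proof in the preliminaries, that the cdf of a log-concave random variable is log-concave.

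You instead work directly from the definition, and each step checks out:
\begin{itemize}
\item Reducing $\overline F(x)^2\ge \overline F(x-1)\overline F(x+1)$ to $f(x-1)\overline F(x+1)\le f(x)\overline F(x)$ is an exact rearrangement.
\item The index-shifted termwise bound $f(x-1)f(x+k)\le f(x)f(x+k-1)$ is the right consequence of ratio monotonicity.
\item Your zero-case bookkeeping is complete. With $f(x)>0$, the only delicate case is $f(x+k-1)=0$, and interval support forces $f(x+k)=0$ there.
\end{itemize}

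The paper's argument is two lines, but it is really a reflection of the cdf version of the same statement. It therefore rests entirely on the cited fact. Yours is self-contained, needs no normalization, and checks the interval-support condition for $\overline F$ explicitly. By reflection, it also proves the cdf fact that the paper uses without proof, for instance in the proof of \Cref{thm:logconcave-compat} and in \Cref{lem:weight-balancing}~\labelcref{item:indep-balancing}.

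The only omission is the trivial case $f\equiv 0$, where your notation $\{L,\dots,R\}$ is undefined. There $\overline F\equiv 0$ and there is nothing to prove.
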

\begin{proof}
The claim is immediate if \(f\equiv0\). Otherwise, let $Z=\sum_{y\in\mathbb Z}f(y)>0$ and define $p(x)={f(x)}/{Z}$.
Then \(p\) is a log-concave pmf. Let \(Y\) have pmf \(p\). Then, \(-Y\) is log-concave, and
\[
    \overline F(x)
    =
    Z\Pr(Y\ge x)
    =
    Z\Pr(-Y\le -x).
\]
Therefore, \(\overline F\) is log-concave as the cdf of a log-concave random variable is log-concave.
\end{proof}

A nonnegative function (or an infinite matrix) $K:\mathbb{Z}_{\ge 0}^2 \to \mathbb{R}_{\ge 0}$ is called \emph{totally positive of order 2} ($\mathrm{TP}_2$) if for all $i \le i'$ and $j \le j'$,
$$K(i,j)K(i',j') \ge K(i,j')K(i',j).$$

A nonnegative function $f : \mathbb{Z}_{\ge 0} \to \mathbb{R}_{\ge 0}$ is called a \emph{Pólya frequency function of order 2} ($\mathrm{PF}_2$) if $T:\mathbb{Z}_{\ge 0}^2 \to \mathbb{R}_{\ge 0}$ defined by $T(i,j) = f(j-i)$ for $i \le j$ and $T(i,j) = 0$ for $i > j$ is $\mathrm{TP}_2$. The function $T$ is called the \emph{Toeplitz function}.
Total positivity is closely related to log-concave functions.
The following lemmas are classical results in total positivity.
\begin{lemma}[{\citet{schoenberg1951polya}}]\label{lem:pf2-toeplitz}
	A nonnegative function $f : \mathbb{Z}_{\ge 0} \allowbreak\to \mathbb{R}_{\ge 0}$ is log-concave if and only if it is $\mathrm{PF}_2$.
\end{lemma}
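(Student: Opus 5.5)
We must prove both directions of the equivalence in \Cref{lem:pf2-toeplitz}: a nonnegative $f$ on $\mathbb{Z}_{\ge 0}$ is log-concave (support an integer interval and $f(k)^2 \ge f(k-1)f(k+1)$) if and only if its Toeplitz function $T(i,j)=f(j-i)\mathbf{1}(i\le j)$ is $\mathrm{TP}_2$. Throughout, we extend $f$ by $f(k)=0$ for $k<0$, so that $T(i,j)=f(j-i)$ for all $i,j$. In this form $\mathrm{TP}_2$ reads
\[
f(j-i)f(j'-i') \ge f(j'-i)f(j-i') \qquad \text{for all } i\le i',\ j\le j'.
\]
Put $x=j'-i$, which is the largest of the four differences, and $y=j-i'$, which is the smallest. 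The remaining two differences $j-i$ and $j'-i'$ lie between $y$ and $x$ and sum to $x+y$. So the condition is that for all $y\le s,t\le x$ with $s+t=x+y$ we have $f(s)f(t)\ge f(x)f(y)$. Conversely, any such quadruple $(s,t,x,y)$ of integers arises from suitable nonnegative $i\le i'$, $j\le j'$; we shift indices to keep them nonnegative, which does not change the differences. Hence $\mathrm{TP}_2$ is equivalent to this \emph{sum-preserving spreading} inequality for the extended $f$.

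\textbf{($\mathrm{PF}_2 \Rightarrow$ log-concave).} Taking $s=t=k$, $x=k+1$, $y=k-1$ gives $f(k)^2\ge f(k-1)f(k+1)$ directly. For the support condition, suppose $f(y)>0$ and $f(x)>0$ with $y<k<x$ but $f(k)=0$. Set $t=x+y-k$, which also lies strictly between $y$ and $x$. The spreading inequality then gives $0=f(k)f(t)\ge f(x)f(y)>0$, a contradiction. So the support is an integer interval. The negative indices cause no trouble here, since $f$ vanishes there.

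\textbf{(Log-concave $\Rightarrow \mathrm{PF}_2$).} This is the main step. If $f(x)f(y)=0$ the inequality is trivial, so assume $f(x),f(y)>0$. Then $y,x$ lie in the support, and because the support is an interval, so does every integer in $[y,x]$; in particular $f(s),f(t)>0$. The extension by zeros is still log-concave, since its support remains an interval. On the support, $g=\log f$ satisfies $g(k+1)-g(k)\le g(k)-g(k-1)$, so the increments $\Delta(k)=g(k+1)-g(k)$ are nonincreasing. Assume without loss of generality $y\le s\le t\le x$, and write $d=s-y=x-t\ge 0$. Then
\[
g(s)-g(y)=\sum_{k=y}^{s-1}\Delta(k) \;\ge\; \sum_{k=t}^{x-1}\Delta(k)=g(x)-g(t).
\]
Both sums have $d$ terms, and each index in the left sum is at most the corresponding index in the right sum (because $y\le t$). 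Since $\Delta$ is nonincreasing, the inequality holds termwise. Rearranging gives $g(s)+g(t)\ge g(x)+g(y)$, that is, $f(s)f(t)\ge f(x)f(y)$.

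The only delicate points are the interval-support bookkeeping, where zeros would otherwise break the logarithmic argument, and checking that the reduction from quadruples $(i,i',j,j')$ to $(s,t,x,y)$ is exact in both directions. For the latter we take $i=0$, $i'=x-t$, $j=s$, $j'=x$ when $x\ge 0$. When $x<0$ all values are zero and the inequality is trivial.
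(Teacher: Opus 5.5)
Your proof is correct. The paper does not prove this lemma; it only attributes it to \citet{schoenberg1951polya}. Your proposal therefore gives a self-contained elementary argument where the paper relies on a citation.

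Your route is the standard one. You rewrite the $\mathrm{TP}_2$ condition for the zero-extended Toeplitz kernel as the spreading inequality $f(s)f(t)\ge f(x)f(y)$ for $y\le s,t\le x$ with $s+t=x+y$. You then telescope the nonincreasing increments of $\log f$ over equal-length windows. This has the advantage of showing exactly where the interval-support clause is needed. For instance, $f=(1,0,0,1)$ satisfies $f(k)^2\ge f(k-1)f(k+1)$ for every $k$, yet it violates the spreading inequality at $(s,t,x,y)=(1,2,3,0)$. Your support argument in the $\mathrm{PF}_2\Rightarrow$ log-concave direction correctly uses such a non-adjacent quadruple, and it is valid because $f(y)>0$ forces $y\ge 0$.

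One small bookkeeping point: your explicit choice $i=0$, $i'=x-t$, $j=s$, $j'=x$ needs $s\ge 0$, not just $x\ge 0$. This is harmless. If $s<0$ or $t<0$, then $y<0$, so $f(y)=0$ and the inequality is trivial. Your earlier remark about shifting all four indices by a common constant already covers every integer quadruple anyway.
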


\begin{lemma}[\citet{Karlin1968}]\label{lem:tp2-composition}
	Let $K(i,x)$ and $L(x,j)$ be nonnegative $\mathrm{TP}_2$ functions. If the \emph{matrix product} $M$ defined by
	$$ M(i,j)=\sum_{x \ge 0} K(i,x)L(x,j)$$
	has finite entries, then $M$ is $\mathrm{TP}_2$.
\end{lemma}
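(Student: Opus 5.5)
The plan is to prove \Cref{lem:tp2-composition} directly via the two-variable case of the Cauchy--Binet formula. Fix $i \le i'$ and $j \le j'$. I would expand the $2\times 2$ minor of $M$ as a double sum:
\[
M(i,j)M(i',j') - M(i,j')M(i',j) = \sum_{x \ge 0}\sum_{y \ge 0} K(i,x)K(i',y)\bigl(L(x,j)L(y,j') - L(x,j')L(y,j)\bigr).
\]
This expansion is legitimate because each of the two products $M(i,j)M(i',j')$ and $M(i,j')M(i',j)$ is a product of two convergent series with nonnegative terms. Each product therefore equals an absolutely convergent double series, and the difference of the two double series may be taken termwise and rearranged freely.

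Next I would split the index set $\{(x,y)\}$ into three parts: the diagonal $x=y$, the region $x<y$, and the region $x>y$. On the diagonal, the bracket $L(x,j)L(x,j') - L(x,j')L(x,j)$ vanishes identically. For the region $x>y$, I would relabel $(x,y)\mapsto(y,x)$, which flips the sign of the $L$-bracket. Pairing each term with its relabelled partner then gives
\[
\sum_{0 \le x<y} \bigl(K(i,x)K(i',y) - K(i,y)K(i',x)\bigr)\bigl(L(x,j)L(y,j') - L(x,j')L(y,j)\bigr).
\]

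Finally, I would check the sign of each factor. Since $i\le i'$, $x<y$, and $K$ is $\mathrm{TP}_2$, the first factor is nonnegative. Since $x<y$, $j\le j'$, and $L$ is $\mathrm{TP}_2$, the second factor is nonnegative. Hence every summand is nonnegative, so the minor is nonnegative and $M$ is $\mathrm{TP}_2$. The nonnegativity of $M$ itself is immediate because $M$ is a sum of products of nonnegative terms.

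The only delicate point is justifying the rearrangement in the infinite case, namely the pairing of the $(x,y)$ and $(y,x)$ terms. I would handle it through the absolute convergence noted in the first step. The double series $\sum_{x,y} K(i,x)K(i',y)L(x,j)L(y,j')$ equals $M(i,j)M(i',j')<\infty$, and the analogous double series with $j$ and $j'$ swapped equals $M(i,j')M(i',j)<\infty$. Consequently the combined double series is absolutely summable, and the regrouping over $x<y$ is valid.
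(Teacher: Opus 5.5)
Your proof is correct. The four steps (writing both products as finite nonnegative double series, cancelling the diagonal, pairing $(x,y)$ with $(y,x)$, and reading the signs of $K(i,x)K(i',y)-K(i,y)K(i',x)$ and $L(x,j)L(y,j')-L(x,j')L(y,j)$ off the two $\mathrm{TP}_2$ hypotheses) are all right. Your justification of the regrouping also closes the only analytic gap: the difference of two finite nonnegative double series is absolutely summable, so only finiteness of the four entries $M(i,j)$, $M(i',j')$, $M(i,j')$, $M(i',j)$ is needed.

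The paper gives no argument for this lemma; it imports the statement from \citet{Karlin1968}, where it is the order-$2$ instance of the basic composition formula. Your proof is that instance, namely the $2\times 2$ Cauchy--Binet identity written out. Citing Karlin buys generality: $\mathrm{TP}_r$ for every order $r$, and compositions against arbitrary $\sigma$-finite measures. Your version buys a self-contained, elementary proof that already covers every use the paper makes of the lemma. These are propagating $p_k \leLR p_{k+1}$ through the Toeplitz matrix of $p_1$ in \Cref{lem:pkTP2}, and showing that $F_\ell(x)$ and $A_{k,\ell}$ are $\mathrm{TP}_2$ in the proof of \Cref{lem:weight-balancing}~\labelcref{item:indep-balancing}.
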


Based on these, we first show that $p_k(x)$ is $\mathrm{TP}_2$ when considered as a two-variable function of $(k,x)$.
\begin{lemma}\label{lem:pkTP2}
	Let $X$ be a nonnegative discrete log-concave random variable. For an integer $k \ge 0$, let $S_k$ be the sum of $k$ i.i.d. copies of $X$ ($S_0 \equiv 0$) with pmf $p_k$.
	Then, $p_k(x)$ is $\mathrm{TP}_2$ in $(k,x)$.
\end{lemma}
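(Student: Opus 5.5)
We need to show that $p_k(x)$ is $\mathrm{TP}_2$ in $(k,x)$: for $k\le k'$ and $x\le x'$,
\[
p_k(x)\,p_{k'}(x')\ \ge\ p_k(x')\,p_{k'}(x).
\]
The plan is to realize $p_k$ as an iterated matrix product against the Toeplitz kernel of $p_1$, and then to use the local inequality already established in the proof of \Cref{lem:weight-balancing}~\labelcref{item:pmf-balancing}.

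\emph{Approach via composition.} Let $T(y,x)=p_1(x-y)$ for $y\le x$ and $T(y,x)=0$ otherwise. Since $X\ge 0$ is log-concave, \Cref{lem:pf2-toeplitz} shows that $p_1$ is $\mathrm{PF}_2$, so $T$ is $\mathrm{TP}_2$. Because $X\ge0$, we have
\[
p_{k+1}(x)=\sum_{y\ge0}p_k(y)\,T(y,x),
\]
and all sums are finite since they are probabilities. By \Cref{lem:tp2-composition}, each step preserves the $\mathrm{TP}_2$ property in the second variable. Note, however, that this composition alone does not give total positivity jointly in $(k,x)$, because $k$ is the iteration index rather than a matrix index.

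\emph{Reduction to adjacent rows.} Instead, I will reduce to adjacent indices $k'=k+1$. The needed claim is
\[
p_{k+1}(x')\,p_k(x)\ \ge\ p_k(x')\,p_{k+1}(x)\qquad\text{for } x\le x'.
\]
This is exactly the first claim proved inside \Cref{lem:weight-balancing}~\labelcref{item:pmf-balancing}. Its proof is as follows. For $k=0$ the inequality holds trivially, since $p_0=\delta_0$. For $k\ge1$, write
\[
p_{k+1}(x')=\sum_{u\ge 0}p_1(u)\,p_k(x'-u),
\]
and use the log-concavity of $p_k$, which follows from \Cref{lem:convolution}. Each summand $p_k(x'-u)p_k(x)-p_k(x')p_k(x-u)$ is then nonnegative.

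\emph{Chaining, and the main obstacle.} To pass from adjacent indices to general $k<k'$, I will chain the ratios. Where all quantities are positive, the adjacent inequality says that $p_{j+1}(x')/p_{j+1}(x)\ge p_j(x')/p_j(x)$, so $p_j(x')/p_j(x)$ is nondecreasing in $j$. Comparing $j=k$ with $j=k'$ gives the result.

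The obstacle is handling zeros, since ratios cannot be chained across vanishing entries. For this step I will use the support structure. The support of $p_j$ is an integer interval $[ja_0,\,jb_0]$, where $[a_0,b_0]$ is the support of $X$; this holds by log-concavity together with \Cref{lem:convolution}. Both endpoints of this interval are nondecreasing in $j$.

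Now suppose the right-hand side $p_k(x')p_{k'}(x)$ is positive. Then $x'\le kb_0$ and $x\ge k'a_0$. For every intermediate $j$ with $k\le j\le k'$, we get $ja_0\le k'a_0\le x\le x'\le kb_0\le jb_0$. Hence both $x$ and $x'$ lie in the support of $p_j$, all intermediate ratios are positive, and the chaining argument goes through. If instead the right-hand side is zero, the inequality is trivial, since the left-hand side is nonnegative.
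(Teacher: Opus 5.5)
Your proof is correct, but it takes a different route from the paper's. Both arguments reduce to adjacent indices and then chain.

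\textbf{The paper's route.} It obtains $p_k \leLR p_{k+1}$ by induction through the composition lemma. It stacks the consecutive rows $p_{k-1},p_k$ into a two-row $\mathrm{TP}_2$ matrix $K$, multiplies by the Toeplitz kernel $T$ of $p_1$, and reads off $p_k,p_{k+1}$ as the rows of $KT$. So your remark that composition cannot give total positivity jointly in $(k,x)$ is too pessimistic: the two-row trick makes it work. The paper then passes to general $k\le k'$ simply by ``transitivity'' of $\leLR$, without discussing vanishing entries.

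\textbf{Your route.} You take the adjacent inequality directly from the convolution computation inside the proof of \Cref{lem:weight-balancing}~\labelcref{item:pmf-balancing}. That computation uses log-concavity of every $p_k$ (via \Cref{lem:convolution}), not just the $\mathrm{PF}_2$ property of $p_1$. You then make the chaining rigorous with the support argument $ja_0\le k'a_0\le x\le x'\le kb_0\le jb_0$, which guarantees that every intermediate denominator $p_j(x)$ is positive.

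\textbf{Comparison.} Your version is more elementary and self-contained. It needs no appeal to \Cref{lem:tp2-composition} for this step, and it handles the zero cases explicitly. The paper leaves these implicit; transitivity of $\leLR$ does hold with zeros, but it needs a short argument of this kind. The paper's version is shorter and fits the total-positivity framework it reuses immediately afterwards, for instance to show that $F_\ell(x)$ is $\mathrm{TP}_2$ in $(\ell,x)$.

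One cosmetic point: for $j=0$ with unbounded support ($b_0=\infty$), the interval $[0\cdot a_0,\,0\cdot b_0]$ should be read as $\{0\}$.
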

\begin{proof}
	Since $p_1$ is log-concave, the Toeplitz function $T(x,y)=p_1(y-x)$ is $\mathrm{TP}_2$ by \Cref{lem:pf2-toeplitz}.
	We claim that $p_k \leLR p_{k+1}$ holds for all $k$. For $k=0$, this is trivial as $X \ge 0$. Assume $p_{k-1} \leLR p_{k}$. Then by definition, the two-row matrix $K$ with the first row being $p_{k-1}$ and the second row being $p_k$ is $\mathrm{TP}_2$ (we regard all other rows as zero rows). By \Cref{lem:tp2-composition}, the product matrix $KT$ is $\mathrm{TP}_2$. Since
		\begin{align*}
		(KT)(0,y)
		= \sum_{x \ge 0} K(0,x) T(x,y) 
		= \sum_{x \ge 0} p_{k-1}(x) p_1(y-x) 
		= p_k(y),
		\end{align*}
	the first row of $KT$ is $p_k$. Similarly, the second row of $KT$ is $p_{k+1}$ and the rest rows are identically zero. Since $KT$ is $\mathrm{TP}_2$, we have $p_k \leLR p_{k+1}$.
	
	Then, by transitivity, for $k \le k'$, we have $p_k \leLR p_{k'}$. This by definition is for all $k \le k'$ and $x \le x'$, $p_k(x)p_{k'}(x') \ge p_k(x')p_{k'}(x)$. Hence, $p_k(x)$ is $\mathrm{TP}_2$ in $(k,x)$.
\end{proof}

Using this, we prove that a weighted expectation of $S_k$ becomes log-concave in $k$.
\begin{lemma}\label{lem:weighted-expectation}
	Let $X$ be a nonnegative discrete log-concave random variable. For an integer $k \ge 0$, let $S_k$ be the sum of $k$ i.i.d. copies of $X$ ($S_0 \equiv 0$) with pmf $p_k$ and cdf $F_k$.
	Let $W:\mathbb{Z}_{\ge 0}\to\mathbb{R}_{\ge 0}$ be a bounded log-concave function.
	Define $B_k=\Exp[W(S_k)]=\sum_{x \ge 0} p_k(x)W(x)$.
    Then, \(k\mapsto B_k\) is log-concave. In particular, $B_{k+1}^2\ge B_{k}B_{k+2}$.
\end{lemma}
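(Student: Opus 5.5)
The plan is to write $B_k$ as a matrix product of $\mathrm{TP}_2$ kernels and read off log-concavity in $k$ from a $\mathrm{TP}_2$ property of a shifted kernel. Since $S_{k+1}=S_k+X'$ with $X'$ an independent copy of $X$, we have
\[
B_{k+1}=\sum_{x\ge 0} p_k(x)\,\Exp[W(x+X)] = \sum_{x \ge 0} p_k(x)\, W_1(x),
\]
where $W_1(x)=\sum_{u\ge 0}p_1(u)W(x+u)$. More usefully, the second difference structure suggests introducing the two-variable function
\[
L(x,j)=W(x+j),\qquad x,j\ge 0 .
\]
The aim is to show $B_{k+j}$ is $\mathrm{TP}_2$ in a form that yields $B_{k+1}^2\ge B_kB_{k+2}$. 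Concretely, I would set $M(k,j)=\Exp[W(S_k+T_j)]$ with $T_j$ an independent copy of $S_j$. Then $M(k,j)=B_{k+j}$, and the $\mathrm{TP}_2$ inequality $M(k,1)M(k+1,0)$ versus $M(k,0)M(k+1,1)$ is trivial because both sides equal $B_{k+1}^2$. So this route is degenerate, and I need a genuine "reverse-rule" argument instead.

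The correct tool is the sign-regularity of a composition. Write
\[
B_k=\sum_x p_k(x)W(x).
\]
The kernel $p_k(x)$ is $\mathrm{TP}_2$ in $(k,x)$ by \Cref{lem:pkTP2}. What is needed is that $k\mapsto B_k$ is log-concave, i.e.\ that $B_{k+j}$ is $\mathrm{TP}_2$ in $(k,-j)$, equivalently that the Toeplitz kernel $(k,k')\mapsto B_{k'-k}$ is $\mathrm{TP}_2$ (PF$_2$ in the sense of \Cref{lem:pf2-toeplitz}). So I would prove directly that $(k,\ell)\mapsto B_{\ell-k}$ (zero for $\ell<k$) is $\mathrm{TP}_2$. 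For $k\le\ell$, write $S_\ell = S_k + S'_{\ell-k}$, giving
\[
B_{\ell-k}=\sum_{y}p_{\ell-k}(y)W(y).
\]
The cleanest decomposition uses the joint variable: for fixed $s$, $x\mapsto p_1(x)$ log-concave and $W$ log-concave imply that $(s,x)\mapsto W(s+x)$ and $(s,x)\mapsto p_1(x-s)$ are both "Toeplitz-log-concave". Hence $B$ arises by repeatedly convolving log-concave sequences in the index $k$. The key identity is
\[
B_k=\sum_{x}\sum_{y}p_{k}(x)W(x),
\]
and I would attempt induction: show that the sequence $k\mapsto \Exp[W(x+S_k)]$ is log-concave for every fixed $x$, jointly $\mathrm{TP}_2$ in $(x,-k)$ after suitable sign changes, using \Cref{lem:tp2-composition} with the Toeplitz kernel of $p_1$.

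In practice I expect the main obstacle to be exactly this: the standard result (Karlin's) says $\Exp[W(S_k)]$ is log-concave in $k$ when $p_1$ is $\mathrm{PF}_2$ and $W$ is log-concave — a preservation theorem for Pólya frequency sequences (the index $k$ acts like a "time" in a convolution semigroup). My first attempt would be a direct two-step computation: write
\[
B_{k+2}B_k - B_{k+1}^2=\sum_{x,y}p_k(x)p_k(y)\bigl[W_2(x)W_0(y)-W_1(x)W_1(y)\bigr],
\]
with $W_j(x)=\Exp[W(x+S_j)]$. Then symmetrize in $(x,y)$ and show the symmetrized bracket is $\le 0$ using that $(j,x)\mapsto W_j(x)$ is $\mathrm{TP}_2$ in $(j,x)$ reversed appropriately, plus log-concavity of each $W_j$ in $x$ (by \Cref{lem:convolution}). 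If the pure symmetrization fails, I would fall back on citing Karlin's theorem on preservation of $\mathrm{PF}_2$ under such mixtures.
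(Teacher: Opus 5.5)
Your proposal never reaches a complete argument, and the one concrete step you commit to is false. The symmetrized bracket $W_2(x)W_0(y)+W_2(y)W_0(x)-2W_1(x)W_1(y)$ is not pointwise nonpositive. Take $X\sim\mathrm{Ber}(1/2)$, $W=\mathbf 1(\cdot\le 2)$ (both admissible), $x=2$, $y=0$. Then $W_0(2)=W_0(0)=1$, $W_1(2)=1/2$, $W_1(0)=1$, $W_2(2)=1/4$, $W_2(0)=1$, so the bracket equals $1/4>0$. Your expression for $B_kB_{k+2}-B_{k+1}^2$ is nonpositive only after weighting by $p_k(x)p_k(y)$, so it cannot be closed pointwise. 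Falling back on an unidentified ``Karlin theorem'' is not a proof. The intermediate claim that $B$ ``arises by repeatedly convolving log-concave sequences in $k$'' is also unsupported, since log-concavity in $k$ is not preserved under summation over $x$.

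In fact, the route you discarded first is the right one; you dropped it because of a slip. For $M(k,j)=B_{k+j}$ one has $M(k,1)M(k+1,0)=B_{k+1}^2$ but $M(k,0)M(k+1,1)=B_kB_{k+2}$. So the lemma says exactly that this $2\times 2$ minor is $\le 0$, i.e.\ $M$ should be reverse rule ($\mathrm{RR}_2$), not $\mathrm{TP}_2$. Writing $M(k,j)=\sum_x p_k(x)W_j(x)$ and antisymmetrizing in $(x,y)$ gives
\[
B_kB_{k+2}-B_{k+1}^2=\sum_{x<y}\bigl[p_k(x)p_{k+1}(y)-p_k(y)p_{k+1}(x)\bigr]\bigl[W_0(x)W_1(y)-W_1(x)W_0(y)\bigr].
\]
The first bracket is $\ge 0$ by \Cref{lem:pkTP2}. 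The second is $\le 0$: log-concavity of $W$ gives $W(x)W(y+t)\le W(y)W(x+t)$ for $x<y$ and $t\ge 0$, and summing against $p_1(t)$ yields $W_0(x)W_1(y)\le W_0(y)W_1(x)$.

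These are the same two ingredients as in the paper's proof. The paper writes $B_{k+1}/B_k=\Exp_{\nu_k}[R]$ with $R=W_1/W_0$ nonincreasing and $\nu_k\propto p_kW$ satisfying $\nu_k\leLR\nu_{k+1}$, then applies \Cref{lem:lr-implies-st}. The product form avoids dividing by $W$, so the paper's approximation by strictly positive $W_\epsilon$ is not needed.

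You would still have to show that $\{k:B_k>0\}$ is an integer interval. The paper's definition of log-concavity requires this, and $B_{k+1}^2\ge B_kB_{k+2}$ alone does not give it. It follows from the paper's direct support computation. Alternatively, apply the same antisymmetrization to general $k<k'$, $j<j'$, using \Cref{lem:pkTP2} also for $p_j$; this yields $B_aB_c\le B_bB_{a+c-b}$ for $a<b<c$.
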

\begin{proof}
	The proof proceeds analogously to that of \Cref{lem:weight-balancing}~\labelcref{item:cdf-balancing}.

    The claim is immediate if \(W\equiv 0\). We first prove the claim under the additional assumption that \(W(x)>0\) for every $x\in\mathbb Z_{\ge 0}$. The general case will follow by approximation.
	
	First observe that
	\begin{align*}
		B_{k+1} &= \sum_{y \ge 0} p_{k+1}(y)W(y) \\
		&= \sum_{y \ge 0} \left( \sum_{x \ge 0 }p_{k}(x)p_1(y-x) \right) W(y) \\
		&= \sum_{x \ge 0} p_k(x) \sum_{y \ge 0} p_1(y-x) W(y).
	\end{align*}
	
	Define $\mathcal{T}W(x) = \sum_{y \ge 0} p_1(y-x) W(y) =  \sum_{t} p_1(t) W(x+t)$ and $R(x) = {\mathcal{T}W(x)}/{W(x)}$. We claim that $R(x)$ is nonincreasing in $x$.
	
	First observe that since $W$ is positive and log-concave, $W(s+1)/W(s)$ is nonincreasing in $s$. Hence, for any $t \ge 0$,
	$$\frac{W(x+t)}{W(x)} = \frac{W(x+1)}{W(x)} \frac{W(x+2)}{W(x+1)} \cdots \frac{W(x+t)}{W(x+t-1)}$$
	is nonincreasing in $x$. Then, \[R(x) = {\mathcal{T}W(x)}/{W(x)} = \sum_{t} p_1(t) \frac{W(x+t)}{W(x)}\] is nonincreasing in $x$ as each summand is.
	
	For $B_k > 0$, define the probability distribution $\nu_k(x) = \frac{p_k(x)W(x)}{B_k}$. Then,
		\begin{align*}
		\frac{B_{k+1}}{B_k}
		&= \frac{1}{B_k} \sum_{x \ge 0}p_k(x) \mathcal{T}W(x) \\
		&=   \sum_{x \ge 0} \frac{p_k(x) W(x)}{B_k} \frac{\mathcal{T}W(x)}{W(x)} \\
		&= \Exp_{Z \sim \nu_k} [R(Z)] .
		\end{align*}
	
	By \Cref{lem:pkTP2}, $p_k(x)$ is TP2 in $(k,x)$. Hence, for $x \le y$, \[p_k(x)p_{k+1}(y) \ge p_k(y) p_{k+1}(x).\]
	Multiplying the nonnegative quantity $\frac{W(x)W(y)}{B_kB_{k+1}}$ on both sides yields
	$\nu_k(x)\nu_{k+1}(y) \ge \nu_k(y)\nu_{k+1}(x)$, i.e., $\nu_k \leLR \nu_{k+1}$.
	By \Cref{lem:lr-implies-st}, $\nu_k \leST \nu_{k+1}$, and since $R$ is nonincreasing, we have
	$\Exp_{Z \sim \nu_k}[R(Z)] \ge \Exp_{Z \sim \nu_{k+1}}[R(Z)]$. That is,
	$$\frac{B_{k+1}}{B_k} \ge \frac{B_{k+2}}{B_{k+1}}.$$

    We now remove the strict-positivity assumption on $W$. Extend \(W\) to
    \(\mathbb Z\) by setting \(W(x)=0\) for \(x<0\). For
    \(\epsilon\in(0,1)\), define the probability mass function
    \[
        q_\epsilon(t)
        =
        \frac{1-\epsilon}{1+\epsilon}
        \epsilon^{|t|},
        \qquad t\in\mathbb Z,
    \]
    and let
    \[
        W_\epsilon
        =
        W*q_\epsilon.
    \]
    The function \(q_\epsilon\) is strictly positive and
    log-concave. Hence, by closure under convolution,
    \(W_\epsilon\) is strictly positive (as there exists $x$ such that $W(x) > 0$) and log-concave. 
    Moreover, since \(W\) is bounded and \(q_\epsilon\) is a probability mass function, $0<W_\epsilon(x)\le \|W\|_\infty$ holds for every $x$.
    
    We further claim that \(W_\epsilon \to W\) uniformly as $\epsilon \to 0$. Since $ q_\epsilon(0) = \frac{1-\epsilon}{1+\epsilon} \to 1$ and
    $\sum_{t\ne0}q_\epsilon(t) = \frac{2\epsilon}{1+\epsilon} \to 0$,
    we have for any $x$,
    \begin{align*}
        |W_\epsilon(x)-W(x)|
        &\le \sum_{t\ne0}q_\epsilon(t)|W(x-t)-W(x)| \\
        &\le \|W\|_\infty\sum_{t\ne0}q_\epsilon(t)
         = \frac{2\epsilon}{1+\epsilon}\|W\|_\infty \to 0.
    \end{align*}
    Hence, $W_\epsilon\to W$ 
    uniformly as
    \(\epsilon\to0\).

    Define
    \[
        B_k^{(\epsilon)}
        =
        \sum_{x\ge0}p_k(x)W_\epsilon(x).
    \]
    Applying the strictly positive case to \(W_\epsilon\) gives
    \[
        \left(B_{k+1}^{(\epsilon)}\right)^2
        \ge
        B_k^{(\epsilon)}B_{k+2}^{(\epsilon)}.
    \]
    Moreover, for every fixed \(r\ge0\),
    \begin{align*}
        \left|B_r^{(\epsilon)}-B_r\right|
        &= \left|\sum_{x\ge0}p_r(x)
           \bigl(W_\epsilon(x)-W(x)\bigr)\right| \\
        &\le \sum_{x\ge0}p_r(x)\left|W_\epsilon(x)-W(x)\right|
         \le \sup_{x\ge0}\left|W_\epsilon(x)-W(x)\right|.
    \end{align*}
    Since \(W_\epsilon\to W\) uniformly,
    $B_r^{(\epsilon)}\to B_r$ as $\epsilon \to 0$.
    Letting \(\epsilon\to0\) therefore yields
    \[
        B_{k+1}^2
        \ge
        B_kB_{k+2},
    \]
    as required.

    It remains to verify that \(k\mapsto B_k\) has integer-interval support. Assume \(W\not\equiv0\), and write
    \[
        \supp(X)=[\alpha,\beta]\cap\mathbb Z_{\ge0},
        \quad
        \supp(W)=[L,U]\cap\mathbb Z_{\ge0},
    \]
    allowing \(\beta,U=\infty\). For \(k\ge1\),
    \[
        \supp(S_k)=[k\alpha,k\beta]\cap\mathbb Z_{\ge0}.
    \]
    Since all summands defining \(B_k\) are nonnegative,
    \[
        B_k>0
        \iff \supp(S_k)\cap\supp(W)\ne\varnothing
        \iff k\alpha\le U \text{ and } k\beta\ge L.
    \]
    The integers satisfying these inequalities form an interval.
    Moreover, \(B_0=W(0)>0\) only if \(L=0\), in which case adjoining
    \(k=0\) preserves the interval property. Thus,
    \(k\mapsto B_k\) is log-concave.
\end{proof}

Now we proceed to prove \Cref{lem:weight-balancing}~\labelcref{item:indep-balancing}.
\begin{proof}[Proof of \Cref{lem:weight-balancing}~\labelcref{item:indep-balancing}]
	We first claim that \[A_{k,\ell} = \sum_{x \ge 0} p_k (x) F_\ell(x)G(x)\] is $\mathrm{TP}_2$ in $(k, \ell)$.
	To show this, we first show that $F_\ell (x)$ is $\mathrm{TP}_2$ in $(\ell, x)$. Note that for the function $J(y,x) = \mathbf{1}(y \le x)$, $F_\ell(x) = \sum_{y \ge 0} p_\ell (y) J(y,x)$. We observe that $J(y,x)$ is $\mathrm{TP}_2$. For $x \le x'$ and $y \le y'$, if $J(y',x) = 1$, we have $y \le y' \le x \le x'$. This implies $J(y,x) = J(y',x')=1$. Hence, $J(y,x)$ is $\mathrm{TP}_2$. Since $p_\ell(x)$ is also $\mathrm{TP}_2$ in $(\ell, x)$ (\Cref{lem:pkTP2}), by \Cref{lem:tp2-composition}, $F_\ell(x)$ is $\mathrm{TP}_2$ in $(\ell, x)$. Transposition preserves $\mathrm{TP}_2$, so $F_\ell(x)$ is also $\mathrm{TP}_2$ in $(x,\ell)$.
	Note that since $G(x) \ge 0$, $F_\ell(x)G(x)$ is also $\mathrm{TP}_2$ in $(x,\ell)$.
	Hence, $A_{k,\ell} = \sum_{x \ge 0} p_k (x) \left( F_\ell(x)G(x) \right)$ is $\mathrm{TP}_2$ in $(k, \ell)$ again by \Cref{lem:tp2-composition}.
	
	Then, we claim that $A_{k,\ell}$ is log-concave in each component $k$ and $\ell$. First fix $\ell$. Let $W_\ell (x) = F_\ell(x) G(x)$. It is a log-concave function as both $F_\ell$ and $G$ are. It is also bounded, since $0\le F_\ell(x)\le 1$ and \(G\) is bounded. Hence by \Cref{lem:weighted-expectation}, $A_{k,\ell} = \sum_{x \ge 0} p_k (x) W_\ell (x)$ is log-concave in $k$. Next, fix $k$ and let $H_k(y) = \sum_{x \ge y} p_k(x) G(x)$.
	\begin{align*}
		A_{k,\ell} &= \sum_{x \ge 0} p_k (x)  F_\ell(x)G(x) \\
		&= \sum_{x \ge 0} p_k (x) G(x) \sum_{y \le x} p_\ell(y) \\
		&= \sum_{y \ge 0} p_\ell(y) \sum_{x \ge y} p_k(x) G(x) \\
		&= \sum_{y \ge 0} p_\ell(y) H_k(y).
	\end{align*}
	Since $p_k(x)$ and $G(x)$ are log-concave in $x$, so is $p_k(x)G(x)$. Hence, by \Cref{lem:cdf-tail-logconcave}, $H_k(y)$ is log-concave in $y$. 
    Moreover,
    \[
        0\le H_k(y)
        \le
        \|G\|_\infty\sum_{x\ge y}p_k(x)
        \le
        \|G\|_\infty,
    \]
    so $H_k$ is bounded.
    Hence, by \Cref{lem:weighted-expectation}, $A_{k,\ell}$ is log-concave in $\ell$.

    We now use $\mathrm{TP}_2$ and componentwise log-concavity to prove that, for $r,c\ge1$,
    \[
        A_{r,c}^2
        \ge
        A_{r-1,c+1}A_{r+1,c-1}.
    \]
    
    First suppose that \(A_{r,c}=0\). We claim that
    \[
        A_{r-1,c+1}A_{r+1,c-1}=0.
    \]
    Suppose otherwise that both factors are positive. Since
    \(A_{k,\ell}\) is \(\mathrm{TP}_2\), applying the
    \(\mathrm{TP}_2\) inequality to rows \(r-1,r+1\) and columns
    \(c-1,c+1\) gives
    \[
        A_{r-1,c-1}A_{r+1,c+1}
        \ge
        A_{r-1,c+1}A_{r+1,c-1}
        >0.
    \]
    Hence, $A_{r-1,c-1}>0$ and $A_{r+1,c+1}>0$ hold.
    Together with $A_{r-1,c+1}>0$ and $A_{r+1,c-1}>0$, log-concavity of $A_{r,c}$ in $c$ yields
    $A_{r-1,c}>0$ and $A_{r+1,c}>0$.
    Log-concavity of $A_{r,c}$ in $r$ then gives
    \[
        A_{r,c}^2
        \ge
        A_{r-1,c}A_{r+1,c}
        >0,
    \]
    contradicting \(A_{r,c}=0\). Thus, the desired inequality holds
    when \(A_{r,c}=0\).
    
    Now assume \(A_{r,c}>0\). By \(\mathrm{TP}_2\),
    \[
        A_{r-1,c}A_{r,c+1}
        \ge
        A_{r-1,c+1}A_{r,c}
    \]
    and
    \[
        A_{r,c-1}A_{r+1,c}
        \ge
        A_{r,c}A_{r+1,c-1}.
    \]
    Multiplying the two inequalities gives
    \[
        A_{r-1,c}A_{r+1,c}
         A_{r,c-1}A_{r,c+1}\ge
        A_{r-1,c+1}A_{r+1,c-1}A_{r,c}^2.
    \]
    Componentwise log-concavity gives
    \[
        A_{r,c}^2
        \ge
        A_{r-1,c}A_{r+1,c}
    \]
    and
    \[
        A_{r,c}^2
        \ge
        A_{r,c-1}A_{r,c+1}.
    \]
    Therefore,
    \[
        A_{r,c}^4
        \ge A_{r-1,c}A_{r+1,c}A_{r,c-1}A_{r,c+1}
        \ge A_{r-1,c+1}A_{r+1,c-1}A_{r,c}^2.
    \]
    Since \(A_{r,c}>0\), division by \(A_{r,c}^2\) gives
    \[
        A_{r,c}^2
        \ge
        A_{r-1,c+1}A_{r+1,c-1},
    \]
    and hence the claimed inequality holds.

    Finally, given $a < b$, let $B_r = A_{r,a+b-r}$. Then, $B_r^2 = A_{r,a+b-r}^2 \ge A_{r-1, a+b-r+1} A_{r+1,a+b-r-1} = B_{r-1} B_{r+1}$.
    Then
    \[
        B_r^2\ge B_{r-1}B_{r+1}
        \qquad (a<r<b).
    \]
    
    If \(B_aB_b=A_{a,b}A_{b,a}=0\), the desired inequality is immediate.
    Otherwise, $A_{a,b}>0$ and $A_{b,a}>0$ hold, and since $A_{k, \ell}$ is \(\mathrm{TP}_2\),
    \[
        A_{a,a}A_{b,b}
        \ge
        A_{a,b}A_{b,a}
        >0.
    \]
    Since \(A_{k,\ell}\) is log-concave in each coordinate, every
    row and column has interval support. Hence,
    \[
        A_{r,\ell}>0
        \qquad
        \text{for all }r,\ell\in\{a,\ldots,b\}.
    \]
    In particular, \(B_r>0\) for every \(r=a,\ldots,b\).
    
    Therefore, the ratios \(B_{r+1}/B_r\) are nonincreasing, and
    \[
        \frac{B_{a+1}}{B_a}
        \ge
        \frac{B_b}{B_{b-1}}.
    \]
    Thus,
    \[
        B_aB_b\le B_{a+1}B_{b-1},
    \]
    which is equivalent to
    \[
        A_{a,b}A_{b,a}
        \le
        A_{a+1,b-1}A_{b-1,a+1}.
    \]
    
\end{proof}

\subsection{Proof of \Cref{cor:poisson}}\label{sec:poisson}
\poisson*
\begin{proof}
    It remains to prove the case for real $\lambda_j >0$. Fix an ex-ante EF allocation $A$. Then, for each $\lambda_j$, we construct a rational sequence $(\lambda_j ^{(k)} )_k$ converging to $\lambda_j$ such that for each $k$, the allocation $A$ is ex-ante EF for the stage-$k$ approximation instance where the common distribution for item $j$ is $\mathrm{Pois}(\lambda_j^{(k)})$.

    We can do this in the following way. 
    Let $s = \Exp[u_1(A_1)] = \cdots = \Exp[u_n(A_n)]$.
    We will make each bundle $A_i$ to have expected utility $s_k = \frac{\lfloor k s \rfloor}{k}$.
    Choose one item $g_i \in A_i$ from each bundle. Then for each $j \in M \setminus \{g_1, \ldots, g_n\}$, let $\lambda_j^{(k)} = \frac{\lfloor k \lambda_j \rfloor}{k} \in \mathbb{Q}$. For $g_i$, let $\lambda_{g_i}^{(k)} =  s_k - \sum_{j \in A_i \setminus \{g_i\}} \frac{\lfloor k \lambda_j \rfloor}{k}$.
    We restrict to sufficiently large $k$, so that all approximating parameters are strictly positive.

    Denote $P_k(B)$ and $P(B)$ the ex-post EF probability of an allocation $B$ on stage-$k$ approximation instance and the original instance, resp. 
    We claim that $P_k(B) \to P(B)$ as $k\to \infty$.

    Since each $\lambda_j^{(k)} \to \lambda_j$, for any utility profile $x=(x_{ij})_{i,j}$, the probability of utilities realizing to $x$ in the stage-$k$ instance converges to that of the original instance:
    $$p_k(x)=\prod_{i,j} e^{-\lambda_j^{(k)}}\frac{(\lambda_j^{(k)})^{x_{ij}}}{x_{ij}!} \to \prod_{i,j} e^{-\lambda_j}\frac{\lambda_j^{x_{ij}}}{x_{ij}!} = p(x).$$

    Let $E_B$ be the set of utility profiles under which $B$ is EF. 
    Then we have $P_k(B) = \sum_{x \in E_B}p_k(x)$ and  $P(B) = \sum_{x \in E_B}p(x)$. To change the order of the limit and the (infinite) sum, we use the dominated convergence theorem.

    Since $\lambda_j^{(k)} \to \lambda_j$, there exists some constant $C_j$ such that $\lambda_j^{(k)} \le C_j$. By taking $C = \max_j C_j$, we have $\lambda_j^{(k)} \le C$ for all $j,k$.
    Then, $|p_k(x)| \le \prod_{i,j} \frac{C^{x_{ij}}}{x_{ij}!}$ holds and we have
    $$\sum_{x \in E_B} \prod_{i,j}\frac{C^{x_{ij}}}{x_{ij}!} \le \sum_{x \in \mathbb{Z}_{\ge 0}^{mn}}\prod_{i,j}\frac{C^{x_{ij}}}{x_{ij}!} = e^{Cmn}< \infty.$$
    Hence, we can change the order and 
    $$P_k(B) = \sum_{x \in E_B}p_k(x) \to \sum_{x \in E_B}p(x) = P(B).$$

    Finally, since $P_k(A) \ge P_k(B)$ by the rational case, by taking $k \to \infty$, we have $P(A) \ge P(B)$.
\end{proof}

\subsection{Proof of \Cref{thm:0.930approx}}\label{sec:0.930}
\approxlogconti*
\begin{proof}
    It remains to prove tightness of the approximation ratio for $n=2$.
	Let $c_2 = \frac{1- \log 2}{2}$ and $c_3=c_4 = \frac{1+\log 2}{4}$ so that $c_2, c_3, c_4 \in (0,1)$ and $c_2 + c_3 + c_4 = 1$.
	Fix $0<\epsilon<c_2$.
	Consider the case of 4 items, with the common distributions being
	\begin{align*}
		u_1(1),u_2(1) &\sim \mathrm{Exp}(1), \\ 
		u_1(2),u_2(2) &\sim \mathrm{Unif}[c_2-\epsilon, c_2 + \epsilon], \\ 
		u_1(3),u_2(3) &\sim \mathrm{Unif}[c_3-\epsilon, c_3 + \epsilon], \\ 
		u_1(4),u_2(4) &\sim \mathrm{Unif}[c_4-\epsilon, c_4 + \epsilon].
	\end{align*}
	
	Then all distributions are log-concave. Also note that the allocation $A'= (\{1\}, \{2, 3, 4\})$ is ex-ante EF:
	$$\Exp[u_1(A_1')] = 1 = c_2 + c_3 + c_4 = \Exp[u_1(A_2')] .$$
	Note that $\Delta_{A'}=u_1(A_1')-u_1(A_2') \xrightarrow{d} u_1(1) - 1$ as $\epsilon \to 0$. That is,
	\begin{align*}
		\Pr(u_1(A_1') \ge u_1(A_2')) = \Pr(\Delta_{A'}\ge0)&\to \Pr(u_1(1)\ge1)=e^{-1}, \\
		\Pr(u_2(A_2') \ge u_2(A_1')) =\Pr(\Delta_{A'}\le0)&\to \Pr(u_1(1)\le1)=1-e^{-1},
	\end{align*}
	as $\epsilon \to 0$.
	Hence, the ex-post EF probability of $A'$ converges to $\frac{1}{e} \left( 1 - \frac{1}{e} \right) = \frac{e-1}{e^2}$.
	
	Now consider the allocation $A^*=(\{1,2\},\{3,4\})$. Then, 
	$\Delta_{A^*}=u_1(A_1^*)-u_1(A_2^*) \xrightarrow{d} u_1(1) - \log 2$ as $\epsilon \to 0$, and hence 
	\begin{align*}
		\Pr(u_1(A_1^*) \ge u_1(A_2^*)) = \Pr(\Delta_{A^*}\ge0)&\to \Pr(u_1(1)\ge \log 2)=\frac{1}{2}, \\
		\Pr(u_2(A_2^*) \ge u_2(A_1^*))=\Pr(\Delta_{A^*}\le0)&\to \Pr(u_1(1)\le \log 2)=\frac{1}{2},
	\end{align*}
	as $\epsilon \to 0$. Hence, the ex-post EF probability of $A^*$ converges to $\frac{1}{4}$.
	Thus, the approximation ratio is tight: $\frac{(e-1)/e^2}{1/4}=\frac{4(e-1)}{e^2}$.
\end{proof}

\subsection{Proof of \Cref{thm:strongsharpP}}\label{sec:strongsharpP}
Our main model assumes that utilities $u_i(j)$ are mutually independent. However, there are cases where correlation between these random variables is natural. One extreme would be the case of identical valuations, discussed in \Cref{sec:identical}.

We show that for \emph{general} valuations, where utilities $\{u_i(j)\}_{i,j}$ can be dependent in an arbitrary sense, computing the ex-post EF probability of a given allocation is not expected to be done even in pseudo-polynomial time.

In fact, this holds even when possible dependencies only occur on the same item; $u_1(j), \ldots, u_n(j)$ regarding the same item $j$ can be dependent, but utility vectors $\mathbf{u}(j)= (u_1(j), \ldots, u_n(j))$ are independent across $j$.

Note that the analogous problem size for this case would be defined by specifying the probability distribution of the random vector $\mathbf{u}(j)= (u_1(j), \ldots, u_n(j))$ for each $j$. That is, 
the pmf of the vector is represented explicitly as a tuple $((x_{1,\ell},\ldots,x_{n,\ell}), p_{\ell})_{\ell=1,2,\ldots,s}$ of integer-vectors in the support and its corresponding probability where
\[
\Pr(u_1(j)=x_{1,\ell}, \ldots,u_n(j)=x_{n,\ell})=p_{\ell}.
\]
The input size is the total bit length of all support values and rational probabilities. 

Now we state and prove the result.

\strongsharpP*
\begin{proof}
	We perform reduction from \textsc{$\#$3SAT}. Consider a 3CNF formula $\varphi$ with Boolean variables $y_1, \ldots, y_r$ and clauses $C_1, \ldots, C_s$, in which every clause contains exactly three literals on three distinct variables. We want to compute the number $N(\varphi)$ of assignments that satisfy $\varphi$.
	
	We will construct the fair division of lotteries instance with $s+1$ agents $N=\{0,1, \ldots, s\}$ and $m=r+2s$ items.
	For each clause $C_i$, we create a clause agent $i \in [s]$, and we also create a dummy agent $0$.
	For each Boolean variable $y_j$, consider a uniform random variable $Y_j \sim \mathrm{Unif}\{0,1\}$. Let $Y_1, \ldots, Y_r$ be independent. We will first make $r$ variable items $1, \ldots, r$.
	For an agent $i \in [s]$, define $u_i(j)$ as
	\begin{equation*}
		u_i(j) = 
		\begin{cases*}
			1-Y_j & \text{ if $C_i$ contains the literal $y_j$,}\\
			Y_j & \text{ if $C_i$ contains the literal $\neg y_j$,}\\
			0 & \text{ otherwise.}\\
		\end{cases*}
	\end{equation*}
	For the dummy agent, let $u_0(j) = 0$.
	Observe that for a realization $(Y_1, \ldots, Y_r)=(y_1, \ldots, y_r) \in \{0,1\}^r$, $u_i(\{1, \ldots, r\})= \sum_{j=1}^r u_i(j)$ becomes the number of literals in $C_i$ that evaluate to false.
	
	Next we construct $2s$ clause-guard items. For each clause $C_i$, create two items $g_i', g_i''$ and define $u_i(g_i')=u_i(g_i'')=1$ and $u_k(g_i')=u_k(g_i'')=0$ for $k \ne i$. Note that $u_i(\{g_i', g_i''\}) = 2$ for $i \in [s]$ and $u_i(\{g_k', g_k''\}) = 0$ for $0\ne k \ne i$.

	Now consider the allocation $A$ with $A_0 = \{1, \ldots, r\}$ and $A_i = \{g_i', g_i''\}$ for $i \in [s]$.
	By previous considerations, we have (with probability 1) $u_i(A_i) \ge u_i(A_j)$  for $i,j \in [s]$, and $u_0 \equiv 0$. This makes the probability of $A$ being ex-post EF be
		\[
		\Pr(u_i(A_i) \ge u_i(A_0), \forall i \in [s])  = \Pr(2 \ge u_i(A_0), \forall i \in [s]).
		\]
	
	Since $u_i(A_0)$ is the number of literals in $C_i$ that is evaluated to false by $Y_1, \ldots, Y_r$ as stated above, $2 \ge u_i(A_0)$ is equivalent to $C_i$ being evaluated to true. Hence, 
		\begin{align*}
		\Pr(2 \ge u_i(A_0), \forall i \in [s]) = \Pr(\varphi(Y_1, \ldots, Y_r) = \mathrm{True})= \frac{N(\varphi)}{2^r} ,
		\end{align*}
	and we can recover $N(\varphi)$.
\end{proof}

\end{document}